\documentclass[11pt]{article}

\usepackage[
  bookmarks=true,
  bookmarksnumbered=true,
  bookmarksopen=true,
  pdfborder={0 0 0},
  breaklinks=true,
  colorlinks=true,
  linkcolor=black,
  citecolor=black,
  filecolor=black,
  urlcolor=black,
]{hyperref}

\usepackage{graphicx}
\usepackage{color}
\usepackage[margin=1in]{geometry}
\usepackage{amsthm}
\usepackage{amsmath}
\usepackage{amssymb}
\usepackage{isomath}
\usepackage{enumitem}
\usepackage{algorithm}
\usepackage{pbox}
\usepackage{hhline}

\hypersetup{
  pdfauthor      = {Yannai A. Gonczarowski, Noam Nisan, Rafail Ostrovsky, and Will Rosenbaum},
  pdftitle       = {A Stable Marriage Requires Communication},
}

\title{A Stable Marriage Requires Communication}
\author{Yannai A. Gonczarowski\thanks{The Hebrew University of Jerusalem (Einstein Institute of Mathematics, Rachel \& Selim Benin School of Computer Science \& Engineering and Federmann Center for the Study of Rationality) and Microsoft Research. \mbox{\emph{E-mail}: \href{mailto:yannai@gonch.name}{yannai@gonch.name}}.
}
\and Noam Nisan\thanks{The Hebrew University of Jerusalem (Rachel \& Selim Benin School of Computer Science \& Engineering and Federmann Center for the Study of Rationality) and Microsoft Research. \mbox{\emph{E-mail}: \href{mailto:noam.nisan@gmail.com}{noam.nisan@gmail.com}}.
}
\and Rafail Ostrovsky\thanks{University of California Los Angeles (Department of Computer Science and Mathematics). \mbox{\emph{E-mail}: \href{mailto:rafail@cs.ucla.edu}{rafail@cs.ucla.edu}}.
}
\and Will Rosenbaum\thanks{Tel Aviv University (School of Electrical Engineering); work carried out while at University of California Los Angeles (Department of Mathematics). \mbox{\emph{E-mail}: \href{mailto:will.rosenbaum@gmail.com}{will.rosenbaum@gmail.com}}.
}}
\date{July 25, 2018}

\theoremstyle{plain}
\newtheorem{theorem}{Theorem}[section]
\newtheorem{lemma}{Lemma}[section]
\newtheorem{corollary}{Corollary}[section]
\newtheorem{proposition}{Proposition}[section]

\theoremstyle{definition}
\newtheorem{Definition}{Definition}[section]
\newtheorem{Example}{Example}[section]
\newtheorem{Remark}{Remark}[section]
\newtheorem{open}{Open Problem}[section]

\newcommand{\marriage}{\mu}
\newcommand{\marriageid}{\marriage_{\mathrm{id}}}
\newcommand{\allprefs}{\mathcal{P}}
\newcommand{\fullprefs}{\mathcal{F}}
\newcommand{\prefs}[1]{P_{#1}}
\newcommand{\married}[1]{#1_{\marriage}}
\newcommand{\fmopt}{f_{\mbox{\scriptsize $M$-Opt}}}
\newcommand{\marriagemarket}{\bigl(W,M,\prefs{W},\prefs{M}\bigr)}
\newcommand{\marriagemarketnumbered}{\bigl(W=\{w_1,\ldots,w_n\},\allowbreak M=\{m_1,\ldots,m_n\},\prefs{W},\prefs{M}\bigr)}

\newcommand{\distinctpairs}[1]{#1^{\underline{2}}}

\DeclareMathOperator{\DISJ}{DISJ}
\newcommand{\abs}[1]{\left|#1\right|}
\newcommand{\bfx}{\bar{x}}
\newcommand{\bfy}{\bar{y}}
\newcommand{\calM}{\mathcal{M}}
\newcommand{\dft}[1]{{\bfseries\emph{#1}}}
\newcommand{\e}{\varepsilon}
\newcommand{\Oh}{\mathcal{O}}
\newcommand{\set}[1]{\left\{#1\right\}}
\newcommand{\st}{\,\middle|\,}

\DeclareMathOperator{\bp}{\mathit{bp}}

\newcommand{\wrt}{with respect to}
\newcommand{\sut}{such that}
\newcommand{\wwlog}{without loss of generality}

\begin{document}
  \maketitle

  %%%%%%%%%%%%%%%%%%%%%%%%%%%%%%%%%%%
  % Abstract
  %%%%%%%%%%%%%%%%%%%%%%%%%%%%%%%%%%%

\begin{abstract}
The Gale-Shapley algorithm for the Stable Marriage Problem is known to 
take $\Theta(n^2)$ steps to find a stable marriage in the worst case, 
but only $\Theta(n \log n)$ steps in the average case (with $n$ women 
and $n$ men). In 1976, Knuth asked whether 
the worst-case running time can be improved in a model of computation that 
does not require sequential access to the whole input. A partial negative 
answer was given by Ng and Hirschberg, who showed that 
$\Theta(n^2)$ queries are required in a model that allows certain natural 
random-access queries to the participants' preferences.
A significantly more general --- albeit slightly weaker --- lower bound
follows from Segal's general analysis of communication complexity, namely that $\Omega(n^2)$
Boolean queries are required in order to find a stable marriage, regardless of the set of allowed Boolean queries.

Using a reduction to the communication complexity of the disjointness problem,
we give a far simpler, yet significantly more powerful argument showing that $\Omega(n^2)$ Boolean queries of any type are indeed required for finding a stable --- or even an approximately stable --- marriage.
Notably, unlike Segal's lower bound, our lower bound generalizes also to (A)~randomized algorithms,
(B)~allowing arbitrary separate preprocessing of the women's preferences profile and 
of the men's preferences profile, (C)~several variants of the basic problem, such 
as whether a given pair is married in every/some stable marriage, and
(D)~determining whether a proposed marriage is stable or far from stable. In order to analyze ``approximately stable'' marriages,  we introduce the notion of ``distance to stability'' and provide an efficient algorithm for its computation.
\end{abstract}

\paragraph{Keywords:} stable marriage; stable matching; approximately stable; communication complexity; distance to stability.

  %%%%%%%%%%%%%%%%%%%%%%%%%%%%%%%%%%%
  % Introduction
  %%%%%%%%%%%%%%%%%%%%%%%%%%%%%%%%%%%

  \section{Introduction}
  \label{sec:intro}

  In the classic Stable Marriage Problem \cite{GS62}, there are $n$~\dft{women}
and $n$ \dft{men}; each woman has a full preference order over 
the men and each man has a full preference order over the women.  The 
challenge is to find a \dft{stable marriage}: a one-to-one mapping between 
women and men that is stable in the sense that it contains no \dft{blocking 
pair}: a woman and man who mutually prefer each other over their current spouse 
in the marriage.  Gale and Shapley \cite{GS62} proved that such a stable 
marriage exists by providing an algorithm for finding one. Their algorithm 
takes $\Theta(n^2)$ steps\footnote{For a brief introduction to computer-science notation such as $\Theta(n^2)$, $\Omega(n^2$), and $\Oh(n^2)$, see Appendix~\ref{sec:asymptotic-notation}.} in the worst case \cite{GS62}, but only 
$\Theta(n\log n)$ steps in the average case, over independently and uniformly 
chosen preferences \cite{Wilson72}.

In 1976, Knuth~\cite{Knuth76} asked whether this quadratic worst-case running time 
can be improved upon. A related question was put forward in 1987 by Gusfield~\cite{Gusfield87}, who asked whether even \emph{verifying} the stability
of a proposed marriage can be done any faster. As the
input size here is quadratic in $n$, these questions only make sense in models 
that do not require sequentially reading the whole input, but rather provide 
some kind of random access to the preferences of the participants.

While Knuth's and Gusfield's questions arose from computational concerns, they also have tangible economic significance. In many real-world matching mechanisms, it is unreasonable to expect participants to provide their full preference list over all alternatives. This is not merely due to the effort of writing down an immense ordered list of alternatives, but also due to the sheer cognitive or physical effort of forming these preferences (for example, by conducting interviews). Formally, the process of forming or revealing one's preferences can be modeled as ``querying'' the individual's (perhaps implicitly defined) preferences. Each query consists of answering a single question about the individual's preferences that requires only a short response.\footnote{The requirement that responses are ``short'' rules out, for example, the possibility of a participant being asked their entire preferences with a single query. In our analysis, we consider Boolean queries, i.e.\ queries are ``yes/no'' questions. Other models (e.g.\ that of Ng and Hirschberg~\cite{HN90}) consider queries with slightly longer ($\Oh(\log n)$-bit) responses. This difference in the response size of allowed queries can only affect the complexity by at most a factor of the length (number of bits) of the longest response.} Thus, the inherent complexity of a marriage mechanism can be measured by the number of queries necessary to participate in the mechanism.

A partial answer to both Knuth's and Gusfield's questions was given by Ng and Hirschberg \cite{HN90}, 
who considered a model that allows two types of unit-cost
queries to the preferences of the participants: ``what is woman~$w$'s ranking of man~$m$?'' 
(and, dually, ``what is man~$m$'s ranking of woman~$w$?'') and ``which man 
does woman~$w$ rank at place~$k$?'' (and, dually, ``which woman does man~$m$ 
rank at place~$k$?''). In this model, they prove a tight $\Theta(n^2)$ lower 
bound on the number of queries that any deterministic algorithm that solves the stable 
marriage problem, or even verifies whether a given marriage is stable, must 
make in the worst case.
Chou and Lu~\cite{CL10} later showed that even if one is allowed to separately query each of the $\log n$ bits of the answer to queries such as ``which man 
does woman~$w$ rank at place~$k$?'' (and its dual query), $\Theta(n^2\log n)$ such Boolean queries are still required in order to deterministically find a stable marriage.

These results still leave two questions open. The first is 
whether some more powerful model may allow for faster algorithms. While many
``natural'' algorithms for stable marriage do fit into these models, there may 
be others that do not. Indeed, there exist problems for which ``computationally unnatural'' operations, such as various types of hashing, arithmetic operations, or even
``cognitively natural'' operations such as processing through a neural network, do give 
algorithmic speedups. Further, it may be the case that the participants' preferences are only defined implicitly by their actual input. For example, a participant's type could be a point in some (possibly high dimensional) geometric space, such that they prefer to be married to partners whose types are geometrically close to their own (see, for example, \cite{Bogomolnaia2007, Bhatnagar2008}). In this case, a natural query may be of the form ``what is your type's $k$th coordinate?''
Thus, it is of interest to ask whether an algorithm that queries the actual (geometric) input can be significantly more efficient than one that only queries the implicitly defined preferences.

The second question concerns randomized algorithms: can 
they do better than deterministic ones?  This question is especially 
fitting for the stable marriage problem as the \emph{expected} running time is known to be 
small when the preferences are chosen uniformly at random.\footnote{In particular, this would be the case if the expected running 
time could be made small for \emph{any} distribution on preferences, rather 
than just the uniform one.}  We give a negative answer to both hopes, as well 
as several other related problems, thereby showing that answering a wide variety of basic questions related to stable marriages requires a quadratic number of queries (that is, requires querying nearly the entire preference structure):

\begin{theorem}[Informal, see Corollaries~\ref{cor:asm} and~\ref{cor:related}]
\label{thm:query-informal}
Any randomized (or deterministic) algorithm that uses any type of Boolean 
queries to the women's and to the men's preferences to solve any of the following 
problems requires $\Omega(n^2)$ queries in the worst case:
\begin{enumerate}[label={(\alph*)},ref={\alph*}]
\item\label{thm:query-informal-find} finding an (approximately) stable marriage,\footnote{Our notion of ``approximately stable marriage'' is that the marriage shares many married couples with some stable marriage; see Definition~\ref{dfn:asm} for a formal definition and a discussion, and Section~\ref{sec:dist-to-stability} for proof of tractability of this notion.}
\item\label{thm:query-informal-verify} determining whether a given marriage is stable or far from stable,
\item\label{thm:query-informal-ms} determining whether a given pair of participants is contained in some/every stable marriage,
\item\label{thm:query-informal-sp} finding any $\e n$ pairs of participants that appear in some/every stable marriage.
\end{enumerate}
These lower bounds hold even if we allow arbitrary preprocessing of all the men's preferences 
and of all the women's preferences separately. The lower bound for Part~(\ref{thm:query-informal-find}) holds regardless 
of which (stable or approximately stable) marriage is produced by the algorithm.
\end{theorem}

Our proof of Theorem \ref{thm:query-informal} comes from a reduction to the well-known 
lower bounds for the disjointness problem \cite{KS92,Razborov92} in 
Yao's \cite{Yao79} model of two-party communication complexity 
(see \cite{KN97} for
a survey). We consider a scenario in which Alice holds the preferences of the 
$n$ women and Bob holds the preferences of the $n$ men, and show that each of the problems
from Theorem~\ref{thm:query-informal} requires the exchange of $\Omega(n^2)$ bits of communication
between Alice and Bob.

We note that Segal~\cite{Segal03} shows by a general argument that any deterministic or nondeterministic\footnote{We use the term ``nondeterministic,'' as is customary in computer science, to refer not to randomized (i.e.\ probabilistic) algorithms, but rather to algorithms that, roughly speaking, need only verify the correctness of the output (rather than search for it). The precise definition, which is out of scope for this paper, is not required in order to follow the main text of this paper. See~\cite{KN97} for a comparison of different models of communication complexity.} communication protocol among all $2n$ participants for finding a stable marriage requires $\Omega(n^2)$ bits of communication. Our argument for Theorem~\ref{thm:query-informal}(\ref{thm:query-informal-find}), in addition to being significantly simpler, generalizes Segal's result to account for randomized 
algorithms,\footnote{We remark that in general, there may be an exponential gap 
between deterministic, nondeterministic, and randomized communication complexity.}
and even when considering only two-party communication between Alice and Bob (essentially allowing arbitrary
communication within the set of women and within the set of men without cost). Furthermore, our lower bound holds even for merely determining whether a given marriage is 
stable or far from stable (Theorem~\ref{thm:query-informal}(\ref{thm:query-informal-verify})), as well as for the additional related problems described in Theorem~\ref{thm:query-informal}(\ref{thm:query-informal-ms},\ref{thm:query-informal-sp}).
These results immediately imply the same lower bounds 
for any type of Boolean queries in the original computation model, as Boolean queries can be simulated by a communication protocol.

As indicated above, Theorem~\ref{thm:query-informal}(\ref{thm:query-informal-find}), as well as the corresponding lower bound on the two-party communication complexity, holds not only for stable marriages but also for
approximately stable marriages.
In the context of communication complexity, Chou and Lu \cite{CL10} also study such a relaxation of the stable marriage problem in a restricted computational model in which
communication is non-interactive (a sketching model). Chou and Lu show that any (deterministic, non-interactive, $2n$-party) protocol that finds a marriage where only a constant fraction of participants
are involved in blocking pairs requires $\Theta(n^2 \log n)$ bits of communication. Our results are not directly comparable to these, as the two notions of approximate stability are not comparable. Furthermore, we use a significantly more general
computation model (randomized, interactive, two-party), but give a slightly weaker lower bound.

Our lower bound for verification complexity (given in Theorem \ref{thm:query-informal}(\ref{thm:query-informal-verify})) is 
tight. Indeed there exists 
a simple deterministic algorithm for verifying the stability of a proposed 
marriage, which requires $\Oh(n^2)$ queries even in the weak comparison 
model that allows only for queries of the form ``does woman~$w$ prefer man~$m_1$
over man~$m_2$?'' and, dually, ``does man~$m$ prefer woman~$w_1$ over 
woman~$w_2$?''\footnote{By simple batching, this verification algorithm can 
be converted into one that uses only $\Oh(\frac{n^2}{\log n})$ queries, each 
of which returns an answer of length $\log n$ bits (with each query still 
regarding the preferences of only a single participant). This highlights the 
fact that the lower bounds of \cite{HN90} crucially depend on
the exact type of queries allowed in their model.} We do not know whether the 
lower bound is tight also for finding a stable marriage (Theorem~\ref{thm:query-informal}(\ref{thm:query-informal-find})).
Gale and Shapley's algorithm uses $\Oh(n^2)$ queries in the worst case, 
but $\Oh(n^2)$~of these queries require answers of $\log n$ bits each. 
Thus, the algorithm requires a total of $\Oh(n^2 \log n)$ Boolean 
queries, or bits of communication. We do not know whether $\Oh(n^2)$ 
Boolean queries suffice for any algorithm. While the gap between Gale 
and Shapley's algorithm and our lower bound is small, we believe that it is
interesting, as the number of queries performed by the algorithm is exactly 
linear in the input encoding length. An even slightly sublinear algorithm 
would therefore be interesting.\footnote{Note that, as shown in 
Appendix~\ref{nondeterministic}, the \emph{nondeterministic} communication complexity
is $\Theta(n^2)$, so proving higher lower bounds for the deterministic or 
randomized case may be challenging.} We indeed do not have any $o(n^2 \log n)$ 
algorithm, even randomized and even in the strong two-party communication 
model, nor do we have any improved $\omega(n^2)$ lower bound, even for 
deterministic algorithms and even in the simple comparison model.\footnote{It is interesting to note that Bei \emph{et al.}~\cite{BeiChenZhang13} identify a similar gap for the stable marriage problem in a dramatically different computation model of trial and error.}

\begin{open}\label{findstab-comparisons}
Consider the Comparison model for stable marriage that only allows for queries 
of the form ``does man~$m$ prefer woman~$w_1$ over woman~$w_2$?'' and, dually,
``does woman~$w$ prefer man~$m_1$ over man~$m_2$?''.  How many such queries 
are required, in the worst case, to find a stable marriage?
\end{open}  

  %%%%%%%%%%%%%%%%%%%%%%%%%%%%%%%%%%%
  % Model and Preliminaries
  %%%%%%%%%%%%%%%%%%%%%%%%%%%%%%%%%%%

  \section{Model and Preliminaries}
  \label{sec:prelims}

  %% The Stable Marriage Problem
  \subsection{The Stable Marriage Problem}

  %%% Full Preference Lists
  \subsubsection{Full Preference Lists}\label{sec:full-prefs}

  For ease of presentation, we consider a simplified version of the model of 
Gale and Shapley~\cite{GS62}. Let $W$ and $M$ be disjoint finite sets, of 
\dft{women} and \dft{men}, respectively, such that $|W|=|M|$.

\begin{Definition}[Full Preferences]\leavevmode
  \begin{enumerate}
  \item A \dft{full preference list} over $M$ is a total ordering of $M$.
  \item A \dft{profile of full preference lists} for $W$ over $M$ is a 
    specification of a full preference list over~$M$ for each woman $w\in W$. 
    We denote the set of all profiles of full preference lists for~$W$ over 
    $M$ by $\fullprefs(W,M)$.
  \item Given a profile $\prefs{W}$ of full preference lists for $W$ over $M$, 
    a woman $w\in W$ is said to \dft{prefer a man $\mathbfit{m\in M}$ over a man 
      $\mathbfit{m'\in M}$}, denoted by $\mathbfit{m \succ_w m'}$, if $m$ precedes $m'$ on the 
    preference list of $w$. We say that $w \in W$ \dft{weakly prefers $\mathbfit{m}$ over $\mathbfit{m'}$} if either $m \succ_w m'$ or $m = m'$.
\end{enumerate}
We define full preference lists over $W$ and profiles of full preference 
lists for $M$ over $W$ analogously.
\end{Definition}

\begin{Definition}[Perfect Marriage]
  A \dft{perfect marriage} between $W$ and~$M$ is a one-to-one mapping 
  between $W$ and $M$.
\end{Definition}

\begin{Definition}[Marriage Market]
A \dft{marriage market} (with full preference lists) is a quadruplet $\marriagemarket$, where $W$ and $M$ are disjoint, $|W|=|M|$, $\prefs{W}\in\fullprefs(W,M)$ and $\prefs{M}\in\fullprefs(M,W)$.
\end{Definition}

\begin{Definition}[Stability]\label{def:stability}
  Let $\marriagemarket$ be a marriage market and let $\marriage$ be a perfect marriage (between $W$ and $M$).
\begin{enumerate}
\item
A pair $(w,m)\in W\times M$ is said to be a \dft{blocking pair} (in $\marriagemarket$) with respect to $\marriage$, if each of $w$ and $m$ prefer the other
over their spouse in $\marriage$.
\item
$\marriage$ is said to be \dft{stable} if no blocking pairs exist \wrt\ $\marriage$. Otherwise, $\marriage$ is said to be \dft{unstable}.
\end{enumerate}
\end{Definition}

  %%% Arbitrary Preference Lists
  \subsubsection{Arbitrary Preference Lists}\label{sec:arbitrary-prefs}

  While our main results are phrased in terms of full preference lists and 
perfect marriages, some additional and intermediate results in Section
\ref{sec:smlb} and in the appendix deal with an extended model, 
which allows for preferences to specify ``blacklists'' (i.e.\ declare some 
potential spouses as unacceptable) and for marriages to specify that some 
participants remain single. (This model is nonetheless also a simplified 
version of that of \cite{GS62}.) A (not necessarily full) 
\dft{preference list} over $M$ is a totally-ordered subset of $M$. We once 
again interpret a preference list as a ranking, from best to worst, of 
acceptable spouses. We interpret participants absent from a preference list 
as declared unacceptable, even at the cost of remaining single. Analogously, 
a \dft{profile of preference lists} for~$W$ over $M$ is a specification of a
preference list over $M$ for each woman $w\in W$; we denote the set of all 
profiles of preference lists for $W$ over $M$ by 
$\allprefs(W,M)\supset\fullprefs(W,M)$. In this extended model, a woman
$w$ is said to \dft{prefer a man~$\mathbfit{m}$ over a man $\mathbfit{m'}$} not only when $m$ 
precedes~$m'$ on the preference list of $w$, but also when~$m$ is on the 
preference list of $w$ while $m'$ is not. Again, if we say that $w$ \dft{weakly prefers $\mathbfit{m}$ over $\mathbfit{m'}$} if either $w$ prefers~$m$ over $m'$ or $m = m'$. (We once again define preference 
lists and profiles of preference list for $M$ over~$W$ analogously.)

  A (not necessarily perfect) \dft{marriage} between $W$ and~$M$ is a 
one-to-one mapping between a subset of~$W$ and a subset of~$M$. Given a 
marriage $\marriage$, we denote the set of married women (i.e.\ the subset 
of $W$ over which $\marriage$ is defined) by $\married{W}$; we analogously 
denote the set of married men by~$\married{M}$. For a marriage $\marriage$
to be \dft{stable} (\wrt\ $\prefs{W}$ and $\prefs{M}$), we require not only that
no blocking pair exist with respect to it, but also that no participant $p\in W\cup M$ be
married to someone not on the preference list of $p$.

  We note that this model of arbitrary (not necessarily full) preference lists generalizes the model of full preference lists described in Section~\ref{sec:full-prefs}. Indeed, if the preference list of each participant happens to contain all participants of the opposite gender, then the two notions of stability agree. In particular, any marriage that is stable with 
respect to such preference lists prescribes that no participant remains single.

  %%% Known Results
  \subsubsection{Known Results}

  We now survey a few known results regarding the stable marriage problem, 
which we utilize throughout this paper. For the duration of this section, let $\marriagemarket$ be a marriage market, defined either according to the definitions of Section~\ref{sec:full-prefs} or according those of Section~\ref{sec:arbitrary-prefs}.

\begin{theorem}[Gale and Shapley \cite{GS62}]
  \label{thm:gs62}
  A stable marriage between $W$ and $M$ always exists. Moreover, there exists 
an $M$-optimal stable marriage, i.e.\ a stable marriage
where each man weakly prefers his spouse in this stable marriage over his spouse in any other stable marriage.
\end{theorem}

Gale and Shapley \cite{GS62} provide an efficient algorithm for finding the $M$-optimal stable marriage. Their algorithm 
runs in $\Theta(n^2)$ steps in the worst case, performing a query of $\Theta(\log n)$ bits in each step. Hence, the Gale-Shapley algorithm queries $\Theta(n^2\log n)$ bits in the worst case.

\begin{theorem}[McVitie and Wilson \cite{MW71}]
  \label{thm:mw71}
  The $M$-optimal stable marriage is also the $W$-pessimal stable marriage, 
  i.e.\ every other stable marriage is weakly preferred over it by each woman.
\end{theorem}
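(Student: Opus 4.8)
The plan is to argue by contradiction, playing the defining property of $M$-optimality off against the stability of a competing stable marriage. Let $\marriage$ be the $M$-optimal stable marriage furnished by Theorem~\ref{thm:gs62}, and let $\marriage'$ be an arbitrary stable marriage; I want to show that every woman weakly prefers her spouse in $\marriage'$ over her spouse in $\marriage$. Suppose not: then some woman $w$ \emph{strictly} prefers her situation in $\marriage$ over her situation in $\marriage'$. In particular $w$ is married in $\marriage$ (being single is never strictly preferred to anything), so we may set $m \eqdef \marriage(w)$, and by assumption $w$ strictly prefers $m$ over her spouse in $\marriage'$.

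Next I would invoke $M$-optimality at the man $m$: since $\marriage'$ is stable, $m$ weakly prefers his spouse $w$ in $\marriage$ over his spouse in $\marriage'$. The point is that this weak preference must in fact be strict. Indeed, if $m$ were married to $w$ in $\marriage'$ as well, we would get $\marriage'(w)=m=\marriage(w)$, contradicting the assumption that $w$ strictly prefers $\marriage(w)$ over her $\marriage'$-spouse; and if $m$ is single in $\marriage'$ (a possibility only in the extended model of Section~\ref{sec:arbitrary-prefs}), then $m$ strictly prefers $w$ over being single, since $w$ lies on $m$'s preference list — $w$ and $m$ being matched in the stable marriage $\marriage$. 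Either way, $m$ strictly prefers $w$ over his situation in $\marriage'$.

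Finally, combining the two strict preferences: $w$ strictly prefers $m$ over her $\marriage'$-spouse and $m$ strictly prefers $w$ over his $\marriage'$-spouse, so $(w,m)$ is a blocking pair with respect to $\marriage'$ (in the extended model, the acceptability side-condition in the definition of a blocking pair is met because $w$ and $m$ appear on each other's lists, being matched in $\marriage$). This contradicts the stability of $\marriage'$, so no such woman $w$ exists, which is exactly the claim. I do not expect a real obstacle here; the only subtlety is the bookkeeping around indifferent or degenerate cases — the same spouse in both marriages, or a participant who is single in one of them — which is precisely what is needed to upgrade the weak preference coming from $M$-optimality into the strict preference that exhibits a blocking pair.
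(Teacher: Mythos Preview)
Your argument is the standard, correct proof of this classical result. Note, however, that the paper does not actually prove Theorem~\ref{thm:mw71}: it is listed under ``Known Results'' and simply cited to McVitie and Wilson~\cite{MW71}, so there is no in-paper proof to compare against. One minor expository point: when you upgrade $m$'s weak preference to a strict one, you explicitly handle the cases $\marriage'(m)=w$ and $m$ single in $\marriage'$, but the remaining case --- $m$ married in $\marriage'$ to some $w'\neq w$ --- is left implicit; it follows immediately since preferences are strict total orders, but saying so would make the case analysis read as complete rather than relying on the reader to fill in that ``either way'' really covers three cases.
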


\begin{corollary}[$W$-pessimal\,\&\,$M$-pessimal $\Rightarrow$ unique]
  \label{cor:pessimal-unique}
  If a stable marriage is both the $W$-pessimal stable marriage and the $M$-pessimal
stable marriage, then it is the unique stable marriage.
\end{corollary}

\begin{theorem}[Roth's Rural Hospitals Theorem \cite{Roth86}]
  \label{thm:rural-hospitals}
  $\married{W}$ (resp.\ $\married{M}$) is the same for every stable marriage
  $\marriage$.
\end{theorem}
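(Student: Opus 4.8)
The plan is to deduce the claim from the structural results already recorded, invoking the obvious symmetry between the roles of $W$ and $M$ to use each of Theorems~\ref{thm:gs62} and~\ref{thm:mw71} in both directions. First observe that in the full-preference-list model of Section~\ref{sec:full-prefs} the statement is vacuous, since there every stable marriage matches all of $W$ and all of $M$; the content lies entirely in the arbitrary-preference-list model of Section~\ref{sec:arbitrary-prefs}, so it is in that model that I would argue. For a marriage $\nu$ write $W_\nu$ and $M_\nu$ for its sets of matched women and men (consistent with the notation $\married{W}$, $\married{M}$ for the generic marriage $\marriage$), and recall that $|W_\nu| = |M_\nu|$ always, since a marriage is a one-to-one correspondence between these two sets.

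The only elementary ingredient beyond the cited theorems is that remaining single is strictly worse for a participant than being matched to anyone on their own preference list: a stable marriage never matches a participant to someone off their list, and being single is by definition dispreferred to any acceptable partner. Consequently, if a stable marriage $\nu'$ is weakly preferred by some participant $p$ to a stable marriage $\nu$ in which $p$ is matched, then $p$ is matched in $\nu'$ as well. Now let $\mu_M$ be an $M$-optimal stable marriage and $\mu_W$ a $W$-optimal stable marriage, which exist by Theorem~\ref{thm:gs62} and its dual; by Theorem~\ref{thm:mw71} and its dual, $\mu_M$ is simultaneously $M$-best and $W$-worst among stable marriages, while $\mu_W$ is simultaneously $W$-best and $M$-worst. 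Fix any stable marriage $\mu$. Since every man weakly prefers his spouse in $\mu_M$ over his spouse in $\mu$, and his spouse in $\mu$ over his spouse in $\mu_W$, the observation above yields $M_{\mu_W} \subseteq M_\mu \subseteq M_{\mu_M}$; dually, $W_{\mu_M} \subseteq W_\mu \subseteq W_{\mu_W}$.

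It remains to collapse these chains of inclusions, which I would do by counting: $|M_{\mu_W}| = |W_{\mu_W}| \ge |W_{\mu_M}| = |M_{\mu_M}| \ge |M_{\mu_W}|$, where the two equalities are the one-to-one-correspondence fact and the two inequalities come from $W_{\mu_M}\subseteq W_{\mu_W}$ and $M_{\mu_W}\subseteq M_{\mu_M}$. The chain begins and ends at $|M_{\mu_W}|$, so equality holds throughout; since $W_{\mu_M}\subseteq W_{\mu_W}$ and $M_{\mu_W}\subseteq M_{\mu_M}$ are inclusions of finite sets of equal size, $W_{\mu_M} = W_{\mu_W}$ and $M_{\mu_W} = M_{\mu_M}$. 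The inclusions $W_{\mu_M}\subseteq W_\mu\subseteq W_{\mu_W}$ and $M_{\mu_W}\subseteq M_\mu\subseteq M_{\mu_M}$ now squeeze $W_\mu$ and $M_\mu$ to these common values for every stable $\mu$, which is exactly the claim.

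There is no serious obstacle here beyond care with two routine points: legitimizing the use of Theorems~\ref{thm:gs62} and~\ref{thm:mw71} with the roles of men and women swapped (justified by the symmetry of the model), and phrasing the ``single is worst'' observation within the arbitrary-preference model. As an alternative proof that avoids invoking optimal and pessimal marriages, one could instead consider the symmetric difference of two stable marriages $\mu$ and $\mu'$ as a graph on $W\cup M$: it decomposes into isolated vertices, alternating even cycles, and alternating paths, and on isolated vertices and cycles every participant has the same matched/single status under $\mu$ and $\mu'$; a short case analysis using the stability of $\mu$ and of $\mu'$ then rules out any path of positive length, since an endpoint of such a path together with its neighbour along the path would form a blocking pair for whichever of $\mu$, $\mu'$ leaves that endpoint unmatched. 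I would present the first argument, as it is shorter given the results already in hand.
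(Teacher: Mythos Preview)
The paper does not actually prove Theorem~\ref{thm:rural-hospitals}: it is listed under ``Known Results'' and cited from \cite{Roth86} without proof, so there is no in-paper argument to compare yours against. Your primary argument is correct and is essentially the standard derivation of the one-to-one Rural Hospitals Theorem from the existence of side-optimal stable matchings together with the optimal/pessimal duality (Theorems~\ref{thm:gs62} and~\ref{thm:mw71} and their gender-swapped versions). The sandwiching $M_{\mu_W}\subseteq M_\mu\subseteq M_{\mu_M}$, $W_{\mu_M}\subseteq W_\mu\subseteq W_{\mu_W}$ followed by the cardinality chain $|M_{\mu_W}|=|W_{\mu_W}|\ge|W_{\mu_M}|=|M_{\mu_M}|\ge|M_{\mu_W}|$ is clean and complete.

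One caution about your alternative sketch via the symmetric-difference graph: the claim that ``an endpoint of such a path together with its neighbour along the path would form a blocking pair'' is not quite right as stated. If $p_0$ is single in $\mu'$ and matched to $p_1$ in $\mu$, then $p_0$ does prefer $p_1$ to being single, but $p_1$ may well prefer its $\mu'$-partner $p_2$ to $p_0$, so $(p_0,p_1)$ need not block $\mu'$. The usual direct argument instead walks along the path comparing preferences, or argues separately at the two endpoints; either way it requires more than a single local observation. Since you explicitly opt for the first proof, this does not affect your submission, but the alternative as written would not stand on its own.
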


  %%% Approximately-Stable Marriages
  \subsubsection{Approximately-Stable Marriages}

  In this section, we describe a notion of an ``approximately stable marriage.'' For ease of presentation, we restrict ourselves to marriage markets with full preference lists (i.e.\ the
model described in Section~\ref{sec:full-prefs}).
We define an approximately stable marriage as a perfect marriage that shares many married pairs with some (exactly) stable (perfect) marriage. Our definition is a natural 
generalization of that of {\"U}nver \cite{Unver05} (who considers 
only marriage markets with unique stable marriages), but it appears
to be novel in its exact formulation. Our notion of approximate stability
 has the theoretical advantage of being derived from a metric on the set of
all perfect marriages between $W$ and $M$.

\begin{Definition}
\label{dfn:asm}
For any pair of perfect marriages $\marriage, \marriage'$ between $W$ and $M$ (where $|W|=|M|=n$), we define the \dft{divorce distance} between $\marriage$ and $\marriage'$ to be\footnote{Abusing notation, we identify a perfect marriage $\marriage$ with the set of married pairs $\set{\bigl(w_1, \marriage(w_1)\bigr), \bigl(w_2, \marriage(w_2)\bigr), \ldots}$. Thus, $\marriage \cap \marriage'$ is the set of pairs (couples) that are married in both $\marriage$ and $\marriage'$.}
\[
d(\marriage, \marriage') = n - \abs{\marriage \cap \marriage'}.
\]
Note that $d$ measures the minimum number of divorces required to convert $\marriage$ to $\marriage'$ (and vice versa). By abuse of notation, we denote the \dft{divorce distance to stability} of a perfect marriage $\marriage$ to be
\[
d(\marriage) = \min_{\marriage' \in \calM} d(\marriage, \marriage')
\]
where $\calM$ is the set of all stable perfect marriages between $W$ and $M$. Thus, $d(\marriage)$ is the minimum number of divorces required to convert $\marriage$ into a stable marriage.

  We say that a marriage $\marriage$ is \dft{$\mathbfit{(1 - \e)}$-stable} if $d(\marriage) \leq \e n$. We say $\marriage$ is \dft{$\mathbfit{\e}$-unstable} if $d(\marriage) > \e n$.
\end{Definition}

\begin{Example}
$d(\marriage)=0$ if and only if $\marriage$ is stable. Therefore, for $\e=0$ the concepts of $1$-stability and $0$-instability coincide precisely with (exact) stability and instability, respectively.  Letting $\e$ grow, $(1-\e)$-stability is a weaker requirement for larger values of $\e$, while $\e$-instability is a stricter requirement for larger values of $\e$.
\end{Example}

When only a single stable marriage exists (in this case, as noted above, our definition of approximate stability coincides with that of {\"U}nver \cite{Unver05}), efficiently computing the divorce distance to stability of a given perfect marriage $\marriage$ is straightforward: first use the Gale-Shapley algorithm to compute the $M$-optimal stable marriage (which in this case is the unique stable marriage), and then calculate the divorce distance between this (unique) stable marriage and $\marriage$.  Unfortunately, this computation fails to generalize. Brute-force computation of $d(\marriage)$ by iterating over all stable marriages is infeasible for general preferences, because the set of all stable marriages~$\calM$ can be exponentially large~\cite{Knuth76,IL86,GI89}. Nonetheless, by exploiting the combinatorial structure of $\calM$, we show that $d(\marriage)$ can still be efficiently computed (albeit in slightly slower time $\tilde{\Oh}(n^4)$). We describe an algorithm to this effect in Section~\ref{sec:dist-to-stability}. We believe this algorithm to be of independent interest.

The concept of divorce distance is perhaps most valuable in developing our understanding of the qualitative behavior of \emph{exactly} stable marriage mechanisms. Given that (as our results on exact stability show) finding an exactly stable marriage requires very high communication (or alternatively, a very large number of queries), one may consider dynamic mechanisms that refine their output over time until reaching a stable marriage. Such mechanisms would produce some initial (not necessarily stable) marriage after an initial stage of communication/queries, and then after additional communication/queries, adjust the marriage to form a stable marriage.\footnote{In fact, one may argue that real-world marriages are formed in a similar manner: since no person can meet (and form preferences regarding) all of their potential spouses, they meet a relatively small number of potential spouses and marry based on the information gathered until that point. However, they do not stop meeting new people at that point in time. Over time, they may divorce their spouses in favor of spouses that they find more suitable.} If the social cost of each divorce due to this adjustment is high, then we would want to minimize the number of divorces in the second stage of the mechanism. That is, we would seek a marriage with small divorce distance to stability in the first stage, and would seek to replace it with the closest stable marriage in the second stage. The analysis of Section~\ref{sec:dist-to-stability} implies that if such a first stage can be constructed, then a corresponding second stage can be implemented in a computationally efficient manner (using quadratically many queries, of course). Our main result regarding approximately stable marriages implies that such a first stage cannot be implemented using significantly less communication or fewer queries than finding an exactly stable marriage. We further discuss implications and interpretations of this result in Section~\ref{sec:commentary}.

%%%% Comparison with Blocking-Pairs Approximate Stability
\paragraph{Comparison with Blocking-Pairs Approximate Stability}

There does not appear to be consensus in the literature on precisely how to quantify the (in)stability of a marriage with respect to the participants' preferences. A common notion of approximate stability is the requirement for a marriage to have relatively few blocking pairs; see, e.g.\ \cite{EH08}. For example, one can define the \dft{blocking-pairs instability} of a marriage $\mu$, which we denote by $\bp(\mu)$, to be the number of blocking pairs, divided by $n\cdot(n-1)$ (where this divisor is the total number of ``potential blocking pairs'' in a perfect marriage, i.e.\ woman-man pairs who are not married to each other).
As we now observe, having small divorce distance to stability is a strictly more demanding condition than having small blocking-pairs instability --- that is, divorce distance to stability is a \emph{finer} measure of instability than blocking-pairs instability.

\begin{proposition}[Small Divorce Distance Implies, but is Not Implied by, Few Blocking Pairs]\label{prop:blocking-pairs}
Let $W=\{w_1,\ldots,w_n\}$ and $M=\{m_1,\ldots,m_n\}$.
\begin{enumerate}[label={(\alph*)},ref={\alph*}]
\item\label{prop:blocking-pairs-implies}
For every $\prefs{W}$ and $\prefs{M}$, if a marriage $\mu$ is $(1 - \e)$-stable for some $\e$, then $\bp(\mu) \leq 2 \e$
\item\label{prop:blocking-pairs-not-implied}
There exist full preference lists $\prefs{W}$ and $\prefs{M}$, and a perfect marriage $\mu$, such that $\mu$ induces only a single blocking pair (so $\bp(\mu) = \frac{1}{n(n-1)}$), yet $d(\mu) = n$ (so $\mu$ is not $(1-\e)$-unstable for any $\e>0$).
\end{enumerate}
\end{proposition}

\begin{proof}
For Part~(\ref{prop:blocking-pairs-implies}), suppose that $\mu$ is $(1-\e)$-stable for some $\e > 0$. Therefore, $d(\mu) = \e n$, and by the definition of $d(\mu)$ there is some stable marriage $\mu'$ such that $\mu$ and $\mu'$ share $(1 - \e) n$ pairs. Let $W'$ and $M'$ denote the sets of women and men, respectively, whose partners are the same in $\mu$ and $\mu'$, so that $\abs{W'} = \abs{M'} = (1 - \e)n$. Since $\mu'$ is stable, there are no blocking pairs $(w, m) \in W' \times M'$ with respect to $\mu'$, and therefore there are no such pairs with respect to $\mu$. Therefore, any blocking pair $(m, w)$ with respect to $\mu$ must have $w \notin W'$ or $m \notin M'$ (or both). Since there are only $\e n$ choices of such $w$ (resp.\ of such $m$), the total number of blocking pairs induced by $\mu$ is less than $2 \e n(n-1)$, and so $\bp(\mu) \leq 2 \e$, as required.

For Part~(\ref{prop:blocking-pairs-not-implied}), we define full preference lists as follows. Each woman $w_i$ prefers the men in order: $m_1 \succ_{w_i} m_2 \succ_{w_i} \cdots \succ_{w_i} m_n$. Similarly, each man $m_j$ for $j \geq 2$ prefers the women in order: $w_1 \succ_{m_j} w_2 \succ_{m_j} \cdots \succ_{m_j} w_n$. Finally, the preference list of $m_1$ is defined as follows:
  \[
  w_1 \succ_{m_1} w_n \succ_{m_1} w_2 \succ_{m_1} w_3 \succ_{m_1} \cdots \succ_{m_1} w_{n-1}.
  \]
  A straightforward argument shows that the marriage $\mu' = \set{(w_i, m_i) \st i \in [n]}$ is the unique stable marriage for these preference lists.\footnote{For example, one can prove by induction on $i$ that $(w_1, m_1), (w_2, m_2), \ldots, (w_i, m_i)$ must be in any stable marriage for $i = 1, 2, \ldots, n$.} Consider now the marriage $\mu$ defined by
  \[
  \mu = \set{(w_1, m_2), (w_2, m_3), \ldots, (w_{n-1}, m_n), (w_n, m_1)}.
  \]
  The pair $(w_1, m_1)$ is the only blocking pair induced by $\mu$ (thus $\bp(\mu) = \frac{1}{n(n-1)}$). However, $\mu \cap \mu' = \varnothing$, so $d(\mu) = n$. 
\end{proof}

The fact that the divorce distance is a finer measure of instability than blocking-pairs instability makes it easier for us to prove lower bounds on the complexity of finding $(1 - \e)$-stable marriages. (Indeed, we do not know if similar lower bounds apply to finding a marriage $\mu$ satisfying $\bp(\mu) \leq \e$; see the discussion in Section~\ref{sec:commentary}.)
Which of these two measures of instability is more appropriate, though, depends on the given context.
 One justification for the use of blocking-pairs instability is that $\bp(\mu)$ is exactly the probability that a uniformly random pair $(w, m) \in (W \times M) \setminus \mu$ is a blocking pair. Thus, if $\bp(\mu)$ is small, it is perhaps unlikely that any instability will be detected, say, by randomly probing potential blocking pairs.

 In many real-world matching markets, partnerships are formed with incomplete knowledge about the market, and the participants continue to learn about the market after an initial matching is formed. In such situations, it is reasonable to expect that participants will eventually discover instabilities if any are present in the initial matching. Depending on how the market responds to instabilities, the number of blocking pairs and the divorce distance can each be interpreted as a natural measure of the social cost of the instability. If the participants are obligated to remain in the initial matching, then blocking-pairs instability can be viewed as quantifying the total ``suffering'' of participants due to their awareness of being part of a blocking pair. If, on the other hand, participants are allowed to act on the instability, but must pay some cost to break existing partnerships, then the divorce distance to stability quantifies the (minimum) cost for achieving stability.

  %% Communication Complexity
  \subsection{Communication Complexity}

  We work in Yao's~\cite{Yao79} model of two-party communication complexity 
(see \cite{KN97} for a survey). Consider a scenario where two agents, Alice 
and Bob, hold values $x$ and $y$, respectively, and wish to collaborate in 
performing some computation that depends on both $x$ and $y$. Such a computation typically requires the exchange of some data 
between Alice and Bob. The \dft{communication cost} of a given protocol 
(i.e.\ distributed algorithm) for such a computation is the number of bits 
that Alice and Bob exchange under this protocol in the worst case (i.e.\ for 
the worst $(x,y)$); the \dft{communication complexity} of the computation that Alice and Bob wish to perform
is the lowest communication cost of any protocol for this computation.
Generalizing, we also consider \dft{randomized} communication complexity, 
defined analogously using randomized protocols that for every given fixed 
input, produce a correct output with probability at least 
$\frac{2}{3}$.\footnote{The results of this paper hold verbatim even if the 
constant~$\frac{2}{3}$ is replaced with any other fixed probability $p$ with
$\frac 1 2 < p \le 1$.
}

  Of particular interest to us is the disjointness function, $\DISJ$. 
Let $n\in\mathbb{N}$ and let Alice and Bob hold subsets $A, B \subseteq [n]$, respectively. The value of the 
disjointness function is $1$ if $A \cap B = \varnothing$, and $0$ otherwise. 
We can also consider $\DISJ$ as a Boolean function by identifying $A$ and $B$ with their respective characteristic vectors $\bar{x}=(x_i)_{i=1}^n$ and $\bar{y}=(y_i)_{i=1}^n$, defined by ~~
\mbox{$x_i = 1 \iff i \in A$} ~~and~~ \mbox{$y_j = 1 \iff j \in B$}.
Thus, we can express $\DISJ$ using the Boolean formula
\mbox{$\DISJ(\bar{x}, \bar{y}) = \neg \bigvee_{i = 1}^n (x_i \wedge y_i)$}.
All of our results heavily rely on the following result of Kalyanasundaram and Schintger~\cite{KS92} (see also Razborov~\cite{Razborov92}):

\begin{theorem}[Communication Complexity of $\DISJ$~\cite{KS92,Razborov92}]
Let $n\in\mathbb{N}$.
The randomized (and deterministic) communication complexity of calculating $\DISJ(\bar{x},\bar{y})$, where $\bar{x}\in\{0,1\}^n$ is held by Alice and $\bar{y}\in\{0,1\}^n$ is held by Bob, is $\Theta(n)$. Further, this lower bound holds even for unique disjointness, i.e.\ if we require that the inputs $\bar{x}$ and $\bar{y}$ are either disjoint or \dft{uniquely intersecting}: $\abs{\bar{x} \cap \bar{y}} \leq 1$. 
  \label{thm:disj}
\end{theorem}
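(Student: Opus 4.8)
\emph{Proof proposal.} The upper bound is immediate: Alice sends her entire characteristic vector $\bar{x}$ (that is, $n$ bits) and Bob outputs $\DISJ(\bar{x},\bar{y})$, so the deterministic (hence randomized) communication complexity is $\Oh(n)$. The content is the $\Omega(n)$ lower bound, and the plan is to follow the rectangle/corruption argument of Razborov, which has the feature that the hard distribution is supported entirely on disjoint and uniquely intersecting instances, so the lower bound is obtained for unique disjointness ``for free.'' First, by Yao's minimax principle it suffices to exhibit a distribution $\mu$ on inputs such that every \emph{deterministic} protocol that errs with probability at most a small constant $\e_0$ under $\mu$ must have cost $\Omega(n)$ (a randomized protocol of cost $c$ and error $1/3$ can first be amplified to error $\e_0$ by a constant number of independent repetitions, at the cost of a constant factor in $c$, and then averaged over its coins). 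The distribution $\mu$ is the standard one: choose an auxiliary ``special'' coordinate $z\in[n]$ and a balanced partition of $[n]\setminus\{z\}$ into two halves; put into $A$ a random subset of one half of the prescribed size together with $z$ ``sometimes,'' and symmetrically for $B$, arranged so that with probability $3/4$ the resulting pair $(A,B)$ is disjoint and with probability $1/4$ it is uniquely intersecting (the intersection, when present, being exactly $\{z\}$).

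Second, I would use the fundamental structural fact that a deterministic protocol of cost $c$ partitions the input space $\{0,1\}^n\times\{0,1\}^n$ into at most $2^c$ combinatorial rectangles $R=X\times Y$, on each of which the protocol's output is constant. Let $\mathcal{R}_1$ be the collection of rectangles on which the protocol outputs $1$ (``disjoint''). Since $\mu$ assigns mass $3/4$ to disjoint instances and the protocol errs with $\mu$-probability at most $\e_0$, we get that $\bigcup\mathcal{R}_1$ has $\mu$-mass at least $3/4-\e_0$, while $\sum_{R\in\mathcal{R}_1}\mu\bigl(R\cap\mathrm{NONDISJ}\bigr)\le \e_0$, where $\mathrm{NONDISJ}$ denotes the (intersecting) inputs.

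Third comes the heart of the argument, the corruption lemma: there exist absolute constants $\alpha,\beta>0$ such that \emph{every} combinatorial rectangle $R$ satisfies $\mu\bigl(R\cap\mathrm{NONDISJ}\bigr)\ \ge\ \alpha\cdot\mu(R)\ -\ 2^{-\beta n}$. The proof of this lemma conditions on the auxiliary randomness (the special coordinate $z$ and the balanced partition): for all but an exponentially small fraction of these choices, the ``row side'' slice of $X$ and the ``column side'' slice of $Y$ both remain dense, and then a concentration/counting estimate shows that conditioned on such a choice the planted element $z$ lands in $A\cap B$ with probability bounded below by a constant, so that a constant fraction of $R$'s conditional mass is intersecting; averaging over the conditioning and absorbing the bad choices into the $2^{-\beta n}$ term gives the lemma. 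Summing the lemma over $R\in\mathcal{R}_1$ yields $\e_0\ \ge\ \alpha\,(3/4-\e_0)\ -\ 2^c\,2^{-\beta n}$, hence $2^c\,2^{-\beta n}\ \ge\ \alpha(3/4-\e_0)-\e_0$, which is a positive constant once $\e_0$ is chosen small enough; therefore $c\ \ge\ \beta n-\Oh(1)=\Omega(n)$. Because $\mu$ is supported on disjoint and uniquely intersecting pairs, this is exactly the claimed lower bound for unique disjointness, and the deterministic bound follows since deterministic complexity is at least randomized complexity.

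The main obstacle is the corruption lemma: bounding the intersecting mass inside an arbitrary rectangle requires the isoperimetric/concentration estimate on random subsets, and getting the constants to cooperate with the target error $1/3$ (rather than an artificially tiny error) is where the delicate part of Razborov's analysis lies. An alternative route that also yields the unique-intersection version and handles the constants cleanly is the information-complexity approach: write $\DISJ(\bar{x},\bar{y})=\neg\bigvee_{i=1}^n(x_i\wedge y_i)$, use a direct-sum (superadditivity) theorem for information complexity over the $n$ coordinates under a suitable product-across-coordinates distribution, and then show that the single-coordinate $\mathrm{AND}$ problem has information cost bounded below by a constant; this gives information complexity $\Omega(n)$, and communication is at least information, so again $c=\Omega(n)$. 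Either way, the reduction in the rest of the paper only invokes the $\Omega(n)$ bound as a black box.
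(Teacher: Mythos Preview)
The paper does not prove this theorem at all: it is stated with the citations \cite{KS92,Razborov92} and used throughout purely as a black box (as you yourself note in your last sentence). There is therefore no ``paper's own proof'' to compare your proposal against.

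That said, your sketch is a faithful high-level outline of Razborov's corruption argument, and you correctly identify that his hard distribution is supported on disjoint and uniquely-intersecting instances, which is exactly why the paper can invoke the unique-disjointness promise version. Your alternative via information complexity (the Bar-Yossef--Jayram--Kumar--Sivakumar direct-sum approach) is also a standard and correct route to the same bound. Either would serve as a proof of the cited result, but neither appears in --- nor is expected by --- the present paper.
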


  Our results regarding lower bounds on communication complexities all follow from defining suitable 
\dft{embeddings} of $\DISJ$ into various problems regarding stable marriages, i.e.\ mapping $\bar{x}$ and~$\bar{y}$ into suitable marriage markets (more specifically, mapping $\bar{x}$ into $\prefs{W}$ and $\bar{y}$ into $\prefs{M}$), such that
finding a stable marriage (or solving any of the other problems from Theorem~\ref{thm:query-informal}) reveals the value of $\DISJ$.
Some of our proofs (namely those presented in Section~\ref{sec:general-proofs}) indeed assume that the input to $\DISJ$ satisfies $\abs{\bar{x}\cap\bar{y}} \leq 1$.

  %%%%%%%%%%%%%%%%%%%%%%%%%%%%%%%%%%%
  % Summary of Results
  %%%%%%%%%%%%%%%%%%%%%%%%%%%%%%%%%%%
  \section{Main Results}
  \label{sec:summary}

The survey of our main results begins with the result that directly implies our answer to Knuth's question~\cite{Knuth76} regarding the necessity of quadratic complexity for finding a stable marriage.

\begin{theorem}[Communication Complexity of Finding an Approximately-Stable Marriage]\label{thm:asm}
Let $\marriagemarketnumbered$ be a marriage market with full preference lists, where $\prefs{W}$ is held by Alice and $\prefs{M}$ is held by Bob, and let $0 \leq \e < \frac{1}{2}$.
The randomized (and deterministic) communication complexity of finding a $(1-\e)$-stable marriage in $\marriagemarket$ is $\Omega(n^2)$.
\end{theorem}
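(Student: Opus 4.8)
The plan is to reduce the unique-disjointness problem to the problem of finding a $(1-\e)$-stable marriage, following the general embedding strategy announced in Section~\ref{sec:prelims}. Given instances $\bfx,\bfy\in\{0,1\}^N$ with $\abs{\bfx\cap\bfy}\in\{0,1\}$, where $N=\Theta(n^2)$, Alice will construct a profile $\prefs{W}$ of full preference lists (depending only on $\bfx$) and Bob a profile $\prefs{M}$ (depending only on $\bfy$), for some set of $n$ women and $n$ men, in such a way that the resulting marriage market has a \emph{unique} stable marriage $\marriage_0$ when $\bfx$ and $\bfy$ are disjoint, and a stable marriage that differs from $\marriage_0$ in \emph{many} couples (at least, say, $2\e n + 1$ of them, or some fixed fraction of $n$ bounded away from $0$ but exceeding $\e n$ for the relevant range of $\e$) when $\bfx$ and $\bfy$ intersect in a single coordinate. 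The point is that the single intersecting coordinate must cause a ``large'' perturbation of the stable marriage, not a local one --- this is exactly the kind of gadget that $\Omega(n^2)$ lower bounds for stable matching rely on. A natural way to achieve this is to build the market out of $\Theta(n)$ disjoint ``blocks'' of constant size, one per roughly $n$-sized chunk of coordinates, where within a block a coordinate being set in both $\bfx$ and $\bfy$ flips that block's internal matching; but to convert a single flipped block into $\Theta(n)$ divorces, the blocks must be \emph{chained} via preferences so that one local instability propagates through all subsequent blocks (a rotation-type construction, as in the Gale--Shapley worst-case instances and in Ng--Hirschberg).

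Concretely, I would look for a construction with the following behavior: index the $N$ coordinates as pairs $(k,\ell)$ with $k\in[n/c]$ and $\ell$ ranging over a constant-size set, and design block $k$ so that its women and men have preferences that, in isolation, yield a fixed stable matching; then insert ``cross-block'' preference entries (each woman in block $k$ ranking some man in block $k-1$, say, just above her block-$k$ partner) so that if block~$k$ is internally destabilized by its intersecting coordinate, the destabilization can only be resolved by a global rotation that shifts the partners in blocks $k, k+1, \dots$ This is the standard mechanism by which stable matchings exhibit long rotations. The correctness argument then splits into two lemmas: (i) if $\bfx\cap\bfy=\varnothing$ then the ``untwisted'' matching $\marriage_0$ is stable and is both $W$-worst and $M$-worst, hence unique by Corollary~\ref{cor:worst-unique}, so $d(\marriage)\le\e n$ forces the algorithm's output $\marriage$ to agree with $\marriage_0$ on all but at most $\e n$ couples; (ii) if $\abs{\bfx\cap\bfy}=1$, say the intersection is in block $k^\star$, then \emph{every} stable marriage differs from $\marriage_0$ in at least $n - k^\star + 1 \ge$ (some quantity $> 2\e n$, by choosing $k^\star$ in the first half and $c$ small enough, or by making the rotation wrap around all blocks) couples, so any $(1-\e)$-stable $\marriage$ differs from $\marriage_0$ in more than $\e n$ couples. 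Comparing the overlap of $\marriage$ with $\marriage_0$ against the threshold $\e n$ therefore distinguishes the two cases, and since $\e < \tfrac12$ there is a genuine gap to exploit (one should double-check the arithmetic so that the ``disjoint'' bound $\le \e n$ and the ``intersecting'' bound $> \e n$ cannot overlap, which is where the hypothesis $\e<\tfrac12$ and the freedom to choose the block size constant come in).

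The reduction itself is then immediate: Alice and Bob run the assumed $(1-\e)$-stable-marriage protocol on $(\prefs{W},\prefs{M})$, obtain (a description of, or enough queries into) the output marriage $\marriage$, and Alice computes $\abs{\marriage\cap\marriage_0}$ --- which she can do since $\marriage_0$ is a fixed matching independent of $\bfx,\bfy$, so no extra communication is needed beyond what the protocol already uses to reveal $\marriage$; from this she reads off $\DISJ(\bfx,\bfy)$. (If the output convention is that the protocol only identifies the matching implicitly, one adds $O(n\log n) = o(n^2)$ bits for Alice and Bob to exchange $\marriage$ restricted to the relevant blocks, which is absorbed into the bound.) Since $N = \Theta(n^2)$, Theorem~\ref{thm:disj} gives a $\Omega(N) = \Omega(n^2)$ lower bound on the communication, and the same bound transfers to randomized protocols because the unique-disjointness lower bound is randomized. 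Setting $\e = 0$ recovers the exact stable-marriage case.

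The main obstacle I expect is step~(ii): engineering the preference gadget so that a \emph{single} coordinate's intersection, anywhere among $\Theta(n)$ blocks, forces \emph{every} stable marriage to move a number of couples that is a constant fraction of $n$ and, crucially, $> \e n$ uniformly over $\e < \tfrac12$. A naive chained construction might make the number of forced divorces depend on \emph{which} block intersects (e.g. only $n - k^\star$), which would not beat $\e n$ when $k^\star$ is large; the fix is either to have the rotation cyclically involve all blocks regardless of $k^\star$, or to route the coordinates so that the unique intersection is guaranteed to land in an ``early'' block --- but the latter cannot reduce the coordinate count below $\Theta(n^2)$, so some care is needed to keep $N=\Theta(n^2)$ while controlling where intersections can occur. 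Verifying stability/uniqueness of $\marriage_0$ in the disjoint case (step~(i)) is routine once the gadget is fixed --- one checks directly that no pair blocks and that $\marriage_0$ is simultaneously the $M$-worst and $W$-worst stable matching --- but it must be done carefully because the cross-block preference entries are exactly the ones that create potential blocking pairs.
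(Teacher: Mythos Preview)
Your high-level strategy matches the paper's exactly: embed unique disjointness of size $\Theta(n^2)$ into the preferences so that the disjoint case yields a unique stable marriage $\marriage_1$, the uniquely-intersecting case yields a unique stable marriage $\marriage_0$ with $d(\marriage_0,\marriage_1)$ a constant fraction of $n$, and then use the triangle inequality together with $\e<\tfrac12$ to distinguish the two cases by comparing the protocol's output with $\marriage_1$. Where you diverge is in the gadget. You propose $\Theta(n)$ constant-size blocks chained by rotation-style cross-block preferences, and you correctly flag the danger that a naive chain yields only $n-k^\star$ divorces. The paper's construction (Section~\ref{sec:embed}) sidesteps this entirely: it partitions participants into \emph{high} (size $\tfrac12\delta n$, encoding the $\DISJ$ bits), \emph{mid} (size $\tfrac12(1-\delta)n$), and \emph{low} (size $\tfrac12 n$); in the disjoint case every high/mid participant is matched to a low participant ``in order'', while a single intersection $(\alpha,\beta)$ marries $w_\alpha$ to $m_\beta$ and forces \emph{all} mid participants to shift their low partner by one index. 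The payoff is that the number of divorces is uniformly $\ge(1-\delta)n$ regardless of where the intersection lands, so choosing $\delta$ with $\e<(1-\delta)/2$ gives the gap immediately---no cyclic wrap-around or coordinate routing needed. Your plan is sound and would work once you build a gadget with the uniform-divorce property; the paper's high/mid/low trick is just a particularly clean way to get it.
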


We note that in particular, setting $\e=0$ in Theorem~\ref{thm:asm} shows that finding an (exactly) stable marriage requires $\Omega(n^2)$ communication. Recall, though, that Knuth's question is phrased not with respect to communication complexity, but with respect to computational complexity. To answer Knuth's question, we reinterpret Theorem~\ref{thm:asm} using a standard argument that shows that the query complexity of a problem (that is, the worst-case number of Boolean queries any algorithm that solves this problem must make) is at least its communication complexity. 

\begin{corollary}[Query Complexity of Finding an Approximately-Stable Marriage]\label{cor:asm}
Let $\marriagemarketnumbered$ be a marriage market with full preference lists, and let $0 \leq \e < \frac{1}{2}$.
Any randomized (or deterministic) algorithm that uses any type of Boolean queries to the women's and (separately) to the men's preferences to find a $(1-\e)$-stable marriage requires $\Omega(n^2)$ queries in the worst case.
\end{corollary}

To see how Corollary~\ref{cor:asm} follows from Theorem~\ref{thm:asm}, suppose there is a randomized algorithm~$A$ that computes a $(1 - \e)$-stable marriage using $B$ Boolean queries to the women and men. We will use $A$ to construct a $B$-bit communication protocol for the approximate stable marriage problem. The protocol works as follows. Alice and Bob both simulate~$A$. Whenever $A$ queries the women's preferences, Alice sends the result of the query to Bob (since Alice knows the women's preferences). Symmetrically, whenever $A$ queries 
the men's preferences, Bob sends the result of the query to Alice. This protocol uses $B$ bits of communication. Thus, by Theorem \ref{thm:asm}, we must have $B = \Omega(n^2)$, as desired.

Having given an answer (up to a factor of $\log n$; see Open Question~\ref{findstab-comparisons} in the introduction) to Knuth's question on the complexity of finding a stable marriage, one may ask how robust this answer is to variations in the studied problem. We have already shown that this answer is robust in one sense: the same lower bound applies also to finding approximately stable marriages. This shows that the hardness is not a consequence of some delicate phenomenon driven by the exact stability of the desired outcome. (Since any stable marriage is also $(1-\e)$-stable, then allowing also approximately-stable marriage introduces additional possible solutions, so could have conceivably eased the task.) Another aspect of the robustness of our lower bound relates to Gusfield's question~\cite{Gusfield87} regarding verifying the stability of a marriage: is it any easier to verify the correctness of a single concretely-proposed solution to the stable marriage problem? Setting an even higher bar of robustness, one can ask the following variant of Gusfield's question: say that we are given a marriage that is not a ``borderline stable'' marriage. That is, this marriage is guaranteed to be either exactly stable or very far from stable. Would it still be as hard to differentiate between these two cases assuming all ``borderline'' cases have been ruled out? Generalizing Knuth's question in a different direction, one may also wonder if answering highly localized queries about the stable marriages in the market is any easier than finding a stable marriage. For example, is it easier to determine if a given woman-man pair is married in some (or in every) stable marriage?\footnote{See Appendix~\ref{isaremarried} for analogous results also for another natural localized query that arises in the context of incomplete preference lists: determining whether a given participant is single in some (equivalently, in every) stable marriage.} We complement Theorem~\ref{thm:asm} by showing that our quadratic lower bound applies to any and all of the above, and even more.

\begin{theorem}[Communication Complexity of Stable-Marriage Related Problems]
\label{thm:related}
Let $\marriagemarketnumbered$ be a marriage market with full preference lists, where $\prefs{W}$ is held by Alice and $\prefs{M}$ is held by Bob.
The randomized (and deterministic) communication complexity of solving any of the following problems is $\Omega(n^2)$.
\begin{enumerate}[label={(\alph*)},ref={\alph*}]
\item\label{thm:related-verify} determining whether a given marriage $\marriage$ is stable or $\e$-unstable, for fixed $\e$ with $0\le\e<1$.
\item\label{thm:related-ms} determining whether a given pair of participants is contained in some/every stable marriage.
\item\label{thm:related-sp} finding any $\e n$ pairs that appear in some/every stable marriage, for fixed $\e$ with $0<\e\le 1$.
\end{enumerate}
\end{theorem}

Similarly to above, we note that setting $\e=0$ in Theorem~\ref{thm:related}(\ref{thm:related-verify}) shows that verifying that a given marriage is (exactly) stable requires $\Omega(n^2)$ communication. We also note that unlike the case of Theorem~\ref{thm:asm}, where there is still a gap of $\log n$ that prevents us from completely understanding the communication complexity of finding a stable marriage (see Open Question~\ref{findstab-comparisons} in the introduction), this $\Omega(n^2)$ lower-bound on the communication complexity of verifying the stability of a given marriage is tight: exhaustively querying all pairs of participants to na{\"i}vely check for the existence of a blocking  pair requires $\Theta(n^2)$ bits of communication in the worst case.\footnote{While Theorem~\ref{thm:related}(\ref{thm:related-verify}) is phrased so that the marriage $\marriage$ is known by both Alice and Bob before the protocol commences, this tight bound still holds even if only one of them knows $\marriage$, as the straightforward way of encoding a marriage between $W$ and $M$ requires $\Oh(n\log n)$ bits.}
For the other problems analyzed in Theorem~\ref{thm:related}, a similar gap of $\log n$ remains: it is possible to show that $\Oh(n^2\log n)$ communication suffices, due to a deterministic algorithm given by Gusfield~\cite{Gusfield87} for enumerating all pairs that belong to at least one stable marriage in $\Oh(n^2\log n)$ Boolean queries, but the question of a tight bound for these problems remains open as well.

As with Theorem~\ref{thm:asm} yielding Corollary~\ref{cor:asm}, the same standard argument gives a query-complexity analogue of Theorem~\ref{thm:related}:

\begin{corollary}[Query Complexity of Stable-Marriage Related Problems]
\label{cor:related}
Let $\marriagemarketnumbered$ be a marriage market with full preference lists.
Any randomized (or deterministic) algorithm that uses any type of Boolean 
queries to the women's and (separately) to the men's preferences to solve any of the following 
problems requires $\Omega(n^2)$ queries in the worst case:
\begin{enumerate}[label={(\alph*)},ref={\alph*}]
\item\label{cor:related-verify} determining whether a given marriage $\marriage$ is stable or $\e$-unstable, for fixed $\e$ with $0\le\e<1$.
\item\label{cor:related-ms} determining whether a given pair of participants is contained in some/every stable marriage.
\item\label{cor:related-sp} finding any $\e n$ pairs that appear in some/every stable marriage, for fixed $\e$ with $0<\e\le 1$.
\end{enumerate}
\end{corollary}

The variety of the problems covered in Theorems~\ref{thm:asm} and~\ref{thm:related}
conveys the main message of this paper: Roughly speaking, answering practically any meaningful basic question regarding a stable marriage in a marriage market, requires a quadratic number of queries, i.e.\ nearly amounts to querying the entire preference structure.

The proofs of Theorems~\ref{thm:asm} and~\ref{thm:related} are given in Section~\ref{sec:proofs}. As we show there, all of these results can be proven by carefully embedding disjointness into a specially crafted marriage market that is described in Section~\ref{sec:embed}. Before describing this embedding in full detail, we first describe in Section~\ref{sec:smlb} a far more straightforward embedding of disjointness into a simpler marriage market. This embedding is just powerful enough to prove that verifying the (exact) stability of a given marriage requires $\Omega(n^2)$ communication. We then add slightly more detail to this construction in order to prove that finding an (exactly) stable marriage also requires $\Omega(n^2)$ communication. This gradual introduction to our machinery allows us to first expose the core of the argument that drives all of our main results (the fundamental relation between disjointness and stability) with only minimal surrounding detail. The simpler argument further allows readers who are unfamiliar with embeddings and disjointness to familiarize themselves with more straightforward examples of such techniques before delving into the intricacies of the general embedding.\footnote{Furthermore, the proof in Section~\ref{sec:smlb} of the lower bound for finding a stable marriage includes a novel application of the Rural Hospitals Theorem (Theorem~\ref{thm:rural-hospitals}) to facilitate the embedding of a decision problem (disjointness with no assumption of unique intersection) into the studied search problem, which we believe may be of independent interest.}

  %%%%%%%%%%%%%%%%%%%%%%%%%%%%%%%%%%%
  % Lower Bounds for Exact Stability
  %%%%%%%%%%%%%%%%%%%%%%%%%%%%%%%%%%%
\section{Warm Up: Lower Bounds for Exact Stability}
  \label{sec:smlb}

  In this section, we give direct proofs of our lower bounds for the communication complexity of finding or verifying an exactly stable marriage. We
prove these lower bounds by embedding suitably large instances of $\DISJ$ into the problems of finding a stable marriage or verifying the stability of some marriage. Thus, by Theorem \ref{thm:disj} we obtain the desired lower bounds on the corresponding communication complexities. We note that the construction given in this section does not assume the input to $\DISJ$ to be uniquely intersecting.

\begin{Definition}
Let $n\in\mathbb{N}$. We denote the set of pairs of distinct elements of $\{1,\ldots,n\}$ by $\distinctpairs{[n]}=\set{(i,j) \in \{1,\ldots,n\}^2 \mid i \ne j}$.
We note that $\bigl|\distinctpairs{[n]}\bigr|=n\cdot(n-1)$.
\end{Definition}

For the duration of this section, let $n\in\mathbb{N}$, and let $W=\{w_1,\ldots,w_n\}$ and $M=\{m_1,\ldots,m_n\}$ be 
disjoint sets such that $|W|=|M|=n$. Let $\marriageid$ be the perfect marriage 
in which $w_i$ is married to $m_i$ for every $i$.  To lower-bound the communication complexity of verifying the stability of a given marriage, we embed disjointness into verification of stability.

\begin{lemma}[Disjointness $\hookrightarrow$ Verifying Stability]
\label{disj-in-isstab}
There exist
functions
$\prefs{W}:\{0,1\}^{\distinctpairs{[n]}}\rightarrow \fullprefs(W,M)$ and $\prefs{M}:\{0,1\}^{\distinctpairs{[n]}}\rightarrow \fullprefs(M,W)$ \sut\ for every $\bar{x}=(x^i_j)_{(i,j)\in\distinctpairs{[n]}}\in\{0,1\}^{\distinctpairs{[n]}}$ and $\bar{y}=(y^i_j)_{(i,j)\in\distinctpairs{[n]}}\in\{0,1\}^{\distinctpairs{[n]}}$, the following are equivalent.
\begin{itemize}
\item
$\marriageid$ is stable \wrt\ $\prefs{W}(\bar{x})$ and $\prefs{M}(\bar{y})$
\item
$\DISJ(\bfx,\bfy)=1$.
\end{itemize}
\end{lemma}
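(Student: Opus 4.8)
The plan is to construct the preference profiles $\prefs{W}(\bfx)$ and $\prefs{M}(\bfy)$ so that the identity marriage $\marriageid$ is ``almost'' stable regardless of the inputs, with the only possible blocking pairs being of the form $(w_i, m_j)$ for $i \neq j$, and so that such a pair blocks precisely when the coordinate $(i,j)$ witnesses an intersection of $\bfx$ and $\bfy$. Concretely, for each woman $w_i$ I would have her rank her own partner $m_i$ somewhere in the middle, say at position $n/2$ (or more carefully, so that there is room both above and below), and then use the bits $x^i_j$ to decide, for each $j \neq i$, whether $m_j$ is placed above $m_i$ (if $x^i_j = 1$, so $w_i$ would prefer $m_j$ to her spouse) or below $m_i$ (if $x^i_j = 0$); the relative order among the men placed above (and among those placed below) can be fixed arbitrarily, e.g.\ by index. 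Dually, man $m_j$ ranks $w_j$ in the middle, and places $w_i$ above $w_j$ iff $y^j_i = 1$. One must be careful that each preference list has exactly $n$ entries with $m_i$ (resp.\ $w_j$) in a legal position; since there are $n-1$ other men and we are free to split them into an ``above'' set and a ``below'' set of any sizes, this is fine as long as we allow $m_i$'s rank for $w_i$ to depend on how many bits $x^i_j$ are set --- alternatively, pad to a fixed split by treating $m_i$'s position as determined by the $x^i_\cdot$ bits. Either way the construction is a well-defined function of $\bfx$ alone for the women and of $\bfy$ alone for the men, as required.

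With this construction, the key claim is: $(w_i, m_j)$ with $i \neq j$ is a blocking pair for $\marriageid$ if and only if $w_i$ prefers $m_j$ to $m_i$ \emph{and} $m_j$ prefers $w_i$ to $w_j$, which by construction happens iff $x^i_j = 1$ and $y^j_i = 1$. I would also observe that no pair of the form $(w_i, m_i)$ can block (each is married to the other), so these ``off-diagonal'' pairs are the only candidates. Hence $\marriageid$ is stable iff there is no pair $(i,j)$ with $i \neq j$ and $x^i_j = y^j_i = 1$. To make this match $\DISJ$, I would simply index the $\DISJ$ instance by the coordinate set $\distinctpairs{[n]}$ itself, identifying Alice's input bit at coordinate $(i,j)$ with $x^i_j$ and Bob's input bit at coordinate $(i,j)$ with $y^j_i$ (note the transpose in the indices); then the set of coordinates where both are $1$ is exactly the set of blocking pairs, so $\marriageid$ is stable iff $\DISJ(\bfx,\bfy) = 1$. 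This is precisely the claimed equivalence.

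I expect the main obstacle to be purely bookkeeping: ensuring that the maps $\prefs{W}$ and $\prefs{M}$ are genuinely well-defined into $\fullprefs(W,M)$ and $\fullprefs(M,W)$ for \emph{every} input in $\{0,1\}^{\distinctpairs{[n]}}$ --- i.e.\ that whatever the bit pattern, $w_i$'s list is a legitimate total order of all of $M$ with $m_i$ appearing at a consistent well-defined place, and similarly for men. The cleanest way around this is to fix, once and for all, an arbitrary tie-breaking order (say by index) on the men other than $m_i$, and to declare $w_i$'s list to be: first all $m_j$ with $x^i_j = 1$ in that fixed order, then $m_i$, then all $m_j$ with $x^i_j = 0$ in that fixed order. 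This is manifestly a permutation of $M$ and depends only on $\bfx$; the dual definition for men depends only on $\bfy$. Once the construction is pinned down this way, the equivalence above is immediate from Definition~\ref{def:stability}, and combined with Theorem~\ref{thm:disj} it yields the $\Omega(n^2)$ lower bound of Corollary~\ref{cor:sm-verify} since $\bigl|\distinctpairs{[n]}\bigr| = n(n-1) = \Theta(n^2)$.
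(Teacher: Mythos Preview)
Your construction is essentially identical to the paper's: each $w_i$ lists first the $m_j$ with $x^i_j=1$, then $m_i$, then the rest, and dually for the men; the blocking-pair check then lines up with the intersection test for $\DISJ$. The only wrinkle is your indexing convention for the men --- you use $y^j_i$ for man $m_j$'s bit regarding $w_i$ and then have to ``transpose'' to match $\DISJ(\bfx,\bfy)$, whereas the paper simply takes man $m_j$ to place $w_i$ above $w_j$ iff $y^i_j=1$, so that the blocking condition is directly $x^i_j=y^i_j=1$ at the same coordinate $(i,j)$ and no reindexing is needed. With that cosmetic fix, your argument and the paper's coincide.
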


\begin{proof}
To define $\prefs{W}(\bar{x})$, for every $i$
we define the preference list of $w_i$ to consist of all $m_j$ \sut\ $x^i_j=1$, in arbitrary order (say, sorted by $j$), followed by $m_i$, followed by all other men in arbitrary order.
Similarly, to define $\prefs{M}(\bar{y})$, for every $j$
we define the preference list of $m_j$ to consist of all $w_i$ \sut\ $y^i_j=1$, in arbitrary order (say, sorted by $i$), followed
by $w_j$, followed by all other women arbitrary order.

$\marriageid$ is unstable \wrt\ $\prefs{W}(\bar{x})$ and $\prefs{M}(\bar{y})$
$\Longleftrightarrow$ there exist $(i,j) \in \distinctpairs{[n]}$ \sut\ $m_j \succ_{w_i} m_i$ and $w_i \succ_{m_j} w_j$ $\Longleftrightarrow$
there exist $(i,j) \in \distinctpairs{[n]}$ \sut\ $x^i_j=1$ and $y^i_j=1$ $\Longleftrightarrow$ $\DISJ(\bfx,\bfy)=0$.
\end{proof}

\begin{Remark}\label{isstab-in-disj}
A similar argument may be used to embed verification of stability back in disjointness.
\end{Remark}

To lower-bound the communication complexity of finding a stable marriage, we embed disjointness into finding a stable marriage through the intermediate 
problem of finding a stable marriage \wrt\ arbitrary (i.e.\ not necessarily full) preference lists. This embedding of disjointness into finding a stable marriage requires some more care than the above embedding into verification of stability, since unlike in Lemma~\ref{disj-in-isstab} where we have embedded a decision problem (i.e.\ a problem for which the answer is Boolean) into a decision problem, here we embed a decision problem into a search problem (in this case, a problem for which the answer is a marriage). While in Section~\ref{sec:general-proofs} we will mitigate this obstacle by assuming that the instance of disjointness that we embed is uniquely intersecting, here we overcome this obstacle via a novel application of the Rural Hospitals Theorem (Theorem~\ref{thm:rural-hospitals}), which we believe may be of independent interest.

\begin{lemma}[Disjointness $\hookrightarrow$ Finding a Stable Marriage (Arbitrary Preferences)]
\label{disj-in-findstabbl}
There exist functions
$\prefs{W}:\{0,1\}^{\distinctpairs{[n]}}\rightarrow \allprefs(W,M)$ and 
$\prefs{M}:\{0,1\}^{\distinctpairs{[n]}}\rightarrow \allprefs(M,W)$ \sut\
for every $\bfx=(x^i_j)_{(i,j)\in\distinctpairs{[n]}}\in\{0,1\}^{\distinctpairs{[n]}}$ 
and $\bfy=(y^i_j)_{(i,j)\in\distinctpairs{[n]}}\in\{0,1\}^{\distinctpairs{[n]}}$, 
both of the following hold.
\begin{enumerate}[label={\alph*.}]
\item
If $\DISJ(\bfx, \bfy) = 1$, then $\marriageid$ 
is the unique stable marriage with respect to $\prefs{W}(\bfx)$ and 
$\prefs{M}(\bfy)$.
\item
If $\DISJ(\bfx, \bfy) = 0$, then $\marriageid$ 
is unstable with respect to $\prefs{W}(\bfx)$ and $\prefs{M}(\bfy)$.
\end{enumerate}
\end{lemma}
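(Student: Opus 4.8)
The plan is to refine the construction of Lemma~\ref{disj-in-isstab} so that when $\DISJ(\bfx,\bfy)=1$, the marriage $\marriageid$ is not merely stable but is the \emph{unique} stable marriage, which forces us to rule out all other perfect marriages as well. The key observation is that in the construction of Lemma~\ref{disj-in-isstab}, when the inputs are disjoint, $\marriageid$ is stable simply because no $w_i$ is simultaneously ``desired'' by some $m_j$ that $w_i$ also prefers to $m_i$; but there could be \emph{other} stable marriages using the ``junk'' tails of the preference lists. To kill those, I would move to the arbitrary-preferences model and truncate: define the preference list of $w_i$ to consist of all $m_j$ with $x^i_j=1$ (in order of $j$), followed by $m_i$, \emph{and nothing else} --- so every man after $m_i$ is blacklisted by $w_i$. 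Symmetrically, the list of $m_j$ consists of all $w_i$ with $y^i_j=1$, followed by $w_j$, and nothing else.

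With this truncated construction I would argue part (b) exactly as in Lemma~\ref{disj-in-isstab}: if $\DISJ(\bfx,\bfy)=0$, there is a pair $(i,j)$ with $x^i_j=1$ and $y^i_j=1$, so $m_j\succ_{w_i}m_i$ and $w_i\succ_{m_j}w_j$, giving a blocking pair for $\marriageid$ in the extended-model sense (neither of $w_i,m_j$ is married outside their list, so the only way $\marriageid$ fails to be stable is via a genuine blocking pair, and here we have one). For part (a), suppose $\DISJ(\bfx,\bfy)=1$ and let $\marriage$ be any stable marriage. First, $\marriage$ must be perfect: if some $w_i$ were single, then since $m_i$ ranks $w_i$ on his list and (by disjointness) no $w_{i'}$ with $i'\neq i$ prefers $m_i$ to $m_i$'s $\marriage$-spouse in a way that... here I would instead argue directly that $(w_i,m_i)$ would block --- $w_i$ prefers $m_i$ to being single, and $m_i$ either is single (then $(w_i,m_i)$ blocks) or is matched to some $w_{i'}$; but by disjointness $y^{i}_{i}$-type reasoning shows $m_i$'s list, apart from $w_i$, consists only of women $w_{i'}$ with $y^{i'}_i=1$, and for such $w_{i'}$ disjointness forces $x^{i'}_i=0$, meaning $m_i\notin$ list of $w_{i'}$, contradicting stability (a participant married off-list). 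So $m_i$ is married only to $w_i$ or to some $w_{i'}$ with $y^{i'}_i=1$; in the latter case $w_{i'}$ is married off her list, contradiction. Hence $m_i$ is married to $w_i$ for every $i$, i.e. $\marriage=\marriageid$. Combining (a) and (b) with Theorem~\ref{thm:disj} (via the composed reduction: an algorithm finding a stable marriage reveals whether the output equals $\marriageid$, hence the value of $\DISJ$) yields the $\Omega(n^2)$ bound of Corollary~\ref{cor:sm}, since $\bigl|\distinctpairs{[n]}\bigr|=n(n-1)=\Theta(n^2)$.

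The main obstacle I anticipate is the uniqueness argument in part (a): one must carefully use disjointness to show that \emph{no} off-diagonal pairing can survive in a stable marriage, ruling out stable marriages that permute several indices among themselves. The clean way is the chain of implications above --- "$m_i$ married off-diagonal to $w_{i'}$" forces $y^{i'}_i=1$, which by $\DISJ(\bfx,\bfy)=1$ forces $x^{i'}_i=0$, which means $m_i$ is \emph{not} on $w_{i'}$'s (truncated) list, so $w_{i'}$ is married to an unacceptable partner, violating stability in the extended model. This simultaneously handles the "perfect marriage" claim and the "equals $\marriageid$" claim. One should also double-check that $\marriageid$ itself is legal in the extended model (each $w_i$ has $m_i$ on her list and each $m_i$ has $w_i$ on his, by construction), and that in case $\DISJ(\bfx,\bfy)=1$ no blocking pair exists for $\marriageid$ --- which is exactly the argument of Lemma~\ref{disj-in-isstab}, now even easier because the truncated lists only make blocking harder.
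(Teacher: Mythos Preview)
Your construction is identical to the paper's, and your argument for part (b) matches as well. Your uniqueness argument for part (a), however, takes a genuinely different route from the paper's. You argue directly from disjointness: if $m_i$ is married to some $w_{i'}$ with $i'\neq i$ in a stable marriage, then $w_{i'}$ must lie on $m_i$'s list, forcing $y^{i'}_i=1$; disjointness then gives $x^{i'}_i=0$, so $m_i$ is absent from $w_{i'}$'s list, contradicting stability in the extended model. Combined with the observation that if both $w_i$ and $m_i$ are single they form a blocking pair, this pins down $\marriage=\marriageid$. This is correct and entirely elementary. The paper instead first shows $\marriageid$ is stable, then invokes the Rural Hospitals Theorem (Theorem~\ref{thm:rural-hospitals}) to conclude that \emph{every} stable marriage must be perfect, and finally observes that every participant is matched in $\marriageid$ to the last person on their list, so $\marriageid$ is simultaneously $W$-worst and $M$-worst and hence unique by Corollary~\ref{cor:worst-unique}. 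Your argument avoids the structural machinery (Rural Hospitals, lattice of stable marriages) at the cost of being specific to this construction; the paper's argument is less direct but illustrates a technique---using Rural Hospitals to transfer perfection across stable marriages---that the authors explicitly flag as of independent interest.
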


\begin{proof}
To define $\prefs{W}(\bfx)$, for every $i$ we define the preference list of 
$w_i$ to consist of all $m_j$ \sut\ $x^i_j=1$, in arbitrary order (say, sorted 
by $j$), followed by $m_i$ (with all other men absent). Similarly, to define 
$\prefs{M}(\bfy)$, for every $j$ we define the preference list of $m_j$ to 
consist of all $w_i$ \sut\ $y^i_j=1$, in arbitrary order (say, sorted by~$i$), 
followed by $w_j$ (with all other women absent).

We first show that $\marriageid$ is stable with respect to $\prefs{W}(\bfx)$ 
and $\prefs{M}(\bfy)$ iff $\DISJ(\bfx, \bfy) = 1$.
 Indeed, since every 
participant is married by $\marriageid$ to someone on their preference list, 
we have: 

$\marriageid$ is unstable with respect to\ $\prefs{W}(\bfx)$ and 
$\prefs{M}(\bfy)$ 
$\iff$ there exist $(i,j) \in \distinctpairs{[n]}$ such that 
$m_j \succ_{w_i} m_i$ and $w_i \succ_{m_j} w_j$ 
$\iff$ there exist $(i,j) \in \distinctpairs{[n]}$ such that $x^i_j=1$ and 
$y^i_j=1$ 
$\iff$ $\DISJ(\bfx, \bfy) = 0$.

It remains to show that if $\marriageid$ is stable with respect to 
$\prefs{W}(\bfx)$ and $\prefs{M}(\bfy)$, then it is the unique stable 
marriage with respect to these profiles of preference lists. For the remainder 
of the proof assume, therefore, that $\marriageid$ is stable (with respect to 
$\prefs{W}(\bfx)$ and $\prefs{M}(\bfy)$). Let $\marriage$ be a stable 
marriage (with respect to these profiles of preference lists). As 
$\marriageid$ is stable and perfect, by Theorem \ref{thm:rural-hospitals},
since $\marriage$ is stable, it is perfect as well. Therefore, each 
$p \in W\cup M$ is married by $\marriage$ to someone on the preference list of 
$p$, and so $p$ weakly prefers~$\marriage$ over~$\marriageid$, as in the 
latter $p$ is married to the last person on the preference list of $p$. Thus, 
$\marriageid$ is both the \mbox{$W$-pessimal} stable marriage and the $M$-pessimal stable 
one, and so, by Corollary \ref{cor:pessimal-unique}, $\marriageid$ is the unique 
stable marriage.
\end{proof}

Our lower bound on the communication complexity of finding a stable marriage with respect to full preference lists follows from Lemma~\ref{disj-in-findstabbl} by showing
that we can embed the problem of finding a stable marriage with respect to possibly-partial
preference lists into finding a stable marriage with respect to full preference lists. See
Appendix~\ref{sec:arbitrary-complete} for details. 

The techniques used to
prove Lemmas~\ref{disj-in-isstab} and~\ref{disj-in-findstabbl} can 
also be used to prove Theorem~\ref{thm:related}(\ref{thm:related-ms}) --- see Appendix~\ref{isaremarried}.
Although Theorem~\ref{thm:related}(\ref{thm:related-ms}) shows that determining the marital status of a
fixed pair~$(w, m)$ requires $\Omega(n^2)$ communication, we do not know how
to prove a similar lower bound for finding \emph{some} married couple (see Open Problem~\ref{open:single-pair} in Section~\ref{sec:commentary}). In 
the next section, we however show a weaker related result, namely that finding any constant fraction of the couples married in a stable marriage
requires $\Omega(n^2)$ communication (Theorem~\ref{thm:related}(\ref{thm:related-sp})). This result stems from a more elaborate embedding, which also yields our main results regarding approximate stability (Theorem~\ref{thm:asm} and Theorem~\ref{thm:related}(\ref{thm:related-verify})).

  %%%%%%%%%%%%%%%%%%%%%%%%%%%%%%%%%%%
  % General Proofs of Main Results
  %%%%%%%%%%%%%%%%%%%%%%%%%%%%%%%%%%%
  \section{General Proof of Main Results}\label{sec:general-proofs}

  %% Embedding Disjointness into Preferences
  \subsection{Embedding Disjointness into Preferences}
  \label{sec:embed}

  Similarly to the proofs given in Section~\ref{sec:smlb}, the proofs of the remaining results from Section~\ref{sec:summary} follow from 
embedding suitably large instances of $\DISJ$ into various problems regarding (approximately) stable marriages.
In order to prove these remaining results, we reconstruct the embeddings to have the property that small changes in the 
participants' preferences yield very large changes in the global structure of the stable marriages for these preferences. Informally, we construct the preferences so that resolving blocking pairs resulting from such small changes in participants' preferences creates large rejection chains that ultimately affect most married couples.

  %%% Preference Description
  \subsubsection{Preference Description}
  \label{sec:pref-description}

  Let $n\in\mathbb{N}$ and let $W$ and $M$ be disjoint \sut\ $|W|=|M|=n$.
  We divide the participants into three sets: \dft{high}, \dft{mid} and \dft{low}, which we denote 
$W_h$, $W_m$ and $W_l$ respectively for the women and $M_h$,~$M_m$~and~$M_l$ respectively for the 
men. These sets have sizes
\begin{alignat*}{5}
  &\abs{W_h} &\:=\:& \abs{M_h} &\:=\:& \tfrac 1 2 \delta n\\
  &\abs{W_m} &\:=\:& \abs{M_m} &\:=\:& \tfrac 1 2 (1 - \delta) n\\
  &\abs{W_l} &\:=\:& \abs{M_l} &\:=\:& \tfrac 1 2 n
\end{alignat*}
  where $\delta$ is a parameter with $0 < \delta \le 1$, to be chosen later. The low and mid participants 
preferences will be fixed, while we will use the preferences of the high 
participants to embed an instance of disjointness of size~$(\delta n)^2/4$. We assume 
that the participants are
  \[
  W = \set{w_1, w_2, \ldots, w_n}, \quad M = \set{m_1, m_2, \ldots, m_n},
  \]
  where in both cases the first $\delta n / 2$ participants are high, the next 
$(1 - \delta)n / 2$ participants are mid and the remaining $n/2$ participants are low. Since the low 
and mid participants' preferences are the same for all instances, we describe those 
first. As before, the participants' preferences are symmetric in the sense that the 
men's and women's preferences are constructed analogously.
  \begin{description}
  \item[low participants] The low women's preferences over men are ``in order'': 
$m_1 \succ m_2 \succ \cdots \succ m_n$ (and symmetrically for low men, whose preference over women are ``in order''). In particular, 
each low participant prefers all high participants over all mid participants over all low participants.
  \item[mid participants] The mid participants prefer low participants over high participants over mid 
participants. Within each group, the preferences are ``in order.'' Specifically, 
the mid women have preferences
    $
    m_{n/2 + 1} \succ m_{n/2 + 2} \succ \cdots \succ m_{n} \succ m_1 \succ m_2 \succ \cdots \succ m_{n/2},
    $
    and symmetrically for the men.
  \item[high participants]  We use the preferences of each of the high participants to encode a bit 
vector of length $\delta n / 2$. Together, the men and women's preferences thus 
encode an instance of $\DISJ$ of size~$(\delta n)^2/4$. For each $w_i \in W_h$, 
we denote her bit vector $x^i_1,\ldots,x^i_{\delta n/2}$;
the preference list of $w_i$, from most-preferred to least-preferred, is:
  \begin{enumerate}
  \item men $m_j \in M_h$ such that $x_j^i = 1$;
  \item men $m \in M_l$;
  \item men $m \in M_m$;
  \item men $m_j \in M_h$ such that $x_j^i = 0$.
  \end{enumerate}
  Within each group, the preferences are once again ``in order'', i.e.\ sorted by numeric index. The men's 
preferences are constructed analogously, with each man $m_j$ encoding the bit vector $y^1_j,\ldots,y^{\delta n/2}_j$ and preferring first and foremost women $w_i\in W_h$ such that $y^i_j=1$.
  \end{description}

  %%% Stable Marriage Description
  \subsubsection{Stable Marriage Description}

  \begin{lemma}
    \label{lem:disj}
    Any instance of the stable marriage problem with preferences described 
above corresponding to $\DISJ(\bfx, \bfy) = 1$ has a unique stable marriage 
$\marriage_1$ given by (see the left side of Figure~\ref{fig:marriages})
\begin{align*}
    \marriage_1 =\, &\set{(m_i, w_{i + n/2}) \st i = 1, 2, \ldots, n/2}\\
    &\cup \set{(m_{i + n/2}, w_i) \st i = 1, 2, \ldots, n/2}.
\end{align*}

  \begin{figure}[ht]
    \centering
    \includegraphics{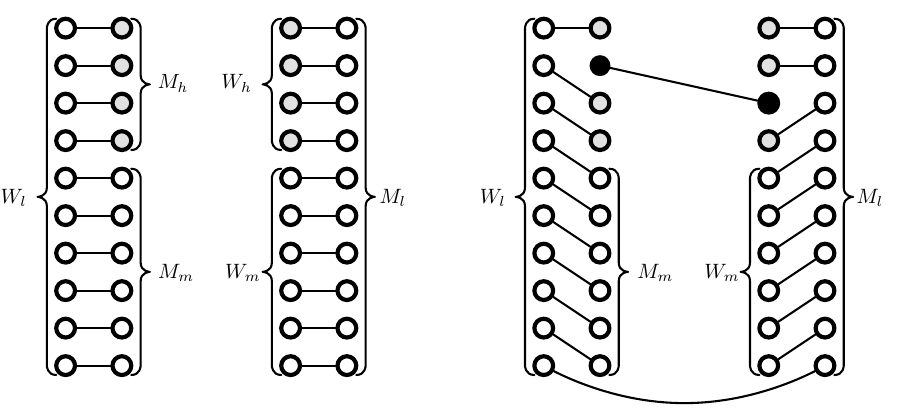}
    \caption{The (unique) stable marriages $\marriage_1$ for disjoint (left) and $\marriage_0$ for uniquely-intersecting (right) instances of the preferences described in Section \ref{sec:pref-description}. \label{fig:marriages}}
  \end{figure}

  \end{lemma}

  \begin{proof}
Let $\marriage$ be a stable marriage; we will show that $\marriage=\marriage_1$.
    We first argue that every high and mid participant is married to a low 
participant in $\marriage$. Suppose to the contrary that some $w = w_i$ for $i \leq n/2$ is 
married to some $m = m_j$ with $j \leq n/2$ in $\marriage$. By 
the  definition of the preferences and the assumption that $\DISJ(\bfx,\bfy) = 1$, 
at least one of $w$ and $m$ prefers every low participant over their spouse. Assume 
without loss of generality that $w$ prefers all $m' = m_{j'}$ with 
$j' > n / 2$ over $m$. That is, $w$ prefers all low men over her spouse $m$. 
Since $w$ is married to a medium or high man, there must be some low man $m'$ 
that is married to a low woman $w'$. But $m'$ prefers all high and medium women 
over $w'$. In particular, he prefers $w$ over $w'$. Therefore, $(w,m')$ is a 
blocking pair, so $\marriage$ is not stable. Thus any stable marriage must 
marry low participants to mid or high participants and \emph{vice versa}.

    Now we argue that if $(w_i,m_{j + n/2}) \in \marriage$, then we must
have $i = j$. The argument for pairs $(w_{i + n/2}, m_j)$ is identical. 
Suppose that $(w_i, m_{j + n/2}) \in \marriage$ with $i < j$. Then there is some 
$j' < j$ such that $m' = m_{j' + n/2}$ is married to $w' = w_{i'}$ with 
$i' > i$. But then $(w_i, m')$ mutually prefer each other, contradicting the 
stability of $\marriage$. We arrive at a similar contradiction if $i > j$, 
hence we must have $i = j$, as desired.
  \end{proof}

  \begin{lemma}
    \label{lem:int}
    Suppose we have a stable marriage instance with preferences described above 
corresponding to $\DISJ(\bfx, \bfy) = 0$, with $\bfx$ and $\bfy$ uniquely 
intersecting. Let $x^\alpha_\beta = y^\alpha_\beta = 1$ be the uniquely-intersecting
entry of $\bfx,\bfy$. In this case, there exists a unique stable marriage 
$\marriage_0$ given by (see the right side of Figure \ref{fig:marriages})
  \begin{align*}
    \marriage_0 =\, & \set{(w_{\alpha}, m_{\beta})} \cup \set{(w_i, m_{i + n / 2}) \st i < \alpha}\\
    &\cup \set{(w_{i + n/2}, m_i) \st i < \beta}\\ 
    &\cup \set{(w_i, m_{i + n / 2 - 1}), \alpha < i \leq n / 2}\\
    &\cup \set{(w_{i + n/2 - 1}, m_i), \beta < i \leq n/2} \cup \set{(w_n, m_n)}.
    \end{align*}
  \end{lemma}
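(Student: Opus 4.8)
The plan is to prove uniqueness directly --- showing that an \emph{arbitrary} stable marriage $\marriage$ must coincide with the marriage $\marriage_0$ written in the statement --- and then to obtain existence for free from Gale and Shapley's theorem (Theorem~\ref{thm:gs62}), which guarantees that \emph{some} stable marriage exists. The argument runs parallel to the proof of Lemma~\ref{lem:disj}: one first constrains the coarse, ``block-level'' structure of $\marriage$ (who among $W_h\cup W_m$, $W_l$, $M_h\cup M_m$, $M_l$ is married to whom), and then pins down the exact matching inside the relevant blocks by a ``no-inversion'' argument. The basic bookkeeping tool is the elementary identity that, in any perfect marriage, the number of high-or-mid women married to high-or-mid men equals the number of high-or-mid men married to high-or-mid women, which equals the number of low--low couples.

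The first step is to show that in a stable marriage the only high-or-mid woman who can be married to a high-or-mid man is $w_\alpha$, and only to $m_\beta$. Indeed, if $w_i \in W_h\cup W_m$ is married to $m_j \in M_h\cup M_m$, then inspection of the preference lists shows that \emph{either} $w_i$ prefers every low man to $m_j$ (which fails only when $w_i, m_j$ are both high and $x^i_j = 1$) \emph{or} $m_j$ prefers every low woman to $w_i$ (which fails only when $w_i, m_j$ are both high and $y^i_j = 1$); the sole exception is $x^i_j = y^i_j = 1$, which by unique intersection forces $(i,j) = (\alpha,\beta)$. In the ``either'' case the counting identity produces a low--low couple $(w',m')$, and since $m'$ is low he prefers the high-or-mid $w_i$ to $w'$, so $(w_i,m')$ blocks $\marriage$; the ``or'' case is symmetric. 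One then checks that $(w_\alpha,m_\beta)$ is in fact a couple of $\marriage$: otherwise $w_\alpha$ is married to a low man and $m_\beta$ to a low woman, and $x^\alpha_\beta = y^\alpha_\beta = 1$ makes $(w_\alpha,m_\beta)$ blocking. With the counting identity, this yields that $\marriage$ has exactly one low--low couple and that every other high-or-mid participant is married to a low participant.

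The second step is to locate the low--low couple and the rest of the matching. If the low--low couple were $(w_p,m_q)$ with $(p,q)\neq(n,n)$, say $q<n$, then $m_n$ is married to some high-or-mid woman $w_s$; since every high-or-mid woman ranks the low men in increasing index order, $w_s$ prefers $m_q$ to her spouse $m_n$, while $m_q$ (being low) prefers the high-or-mid $w_s$ to his low spouse $w_p$, so $(w_s,m_q)$ blocks --- and the case $p<n$ is symmetric --- so the low--low couple must be $(w_n,m_n)$. Deleting $(w_\alpha,m_\beta)$ and $(w_n,m_n)$, the marriage restricts to a bijection from the $n/2-1$ high-or-mid women other than $w_\alpha$ onto the low men other than $m_n$, and symmetrically on the men's side; a ``no-inversion'' argument identical to the one in the second half of the proof of Lemma~\ref{lem:disj} (if $w_i,w_{i'}$ with $i<i'$ were matched to $m_r,m_{r'}$ with $r>r'$, then $(w_i,m_{r'})$ would block, as $w_i$ prefers the lower-indexed $m_{r'}$ and $m_{r'}$, being low, prefers the lower-indexed $w_i$) forces both bijections to be order-preserving, hence to be exactly the couples listed in $\marriage_0$. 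Thus any stable marriage equals $\marriage_0$; since a stable marriage exists by Theorem~\ref{thm:gs62}, $\marriage_0$ is the unique stable marriage.

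The routine parts are the counting identity and the ``no-inversion'' argument. The one genuinely delicate point is the case analysis in the first step: one must keep careful track of \emph{which} partner in a putative high-or-mid couple enjoys the ``prefers all low partners'' property, since this depends jointly on the relevant $x$- and $y$-bit, and this is exactly where the unique-intersection hypothesis is used (with several intersecting entries the first step breaks down and multiple stable marriages appear). A secondary point to watch is ensuring that every blocking pair exhibited really involves participants whose block membership has already been pinned down.
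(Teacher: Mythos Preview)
Your proposal is correct and takes essentially the same approach as the paper: both establish that $(w_\alpha,m_\beta)$ must lie in any stable marriage via the unique-intersection hypothesis together with a counting argument producing a low--low couple that yields a blocking pair, and then finish by the order-preserving (``no-inversion'') argument from Lemma~\ref{lem:disj}. The only difference is organizational: the paper first proves $(w_\alpha,m_\beta)\in\marriage$ directly (arguing that otherwise one of the two is married to a more-preferred high partner, who by unique intersection must prefer all low partners) and then says ``the remainder is analogous to Lemma~\ref{lem:disj} after removing $w_\alpha$ and $m_\beta$,'' whereas you first classify all possible high/mid--high/mid couples and then spell out the $(w_n,m_n)$ and no-inversion steps explicitly; your version is more detailed but not genuinely different.
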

  \begin{proof}
    We first argue that $(w_\alpha, m_\beta) \in \marriage$ for any 
stable marriage $\marriage$ for the preferences described above. Since $\marriage$ 
is stable, if $(m_\alpha, w_\beta) \notin \marriage$, then at least one of $w_\alpha$ and 
$m_\beta$, say $w_\alpha$, must be married to someone she prefers over 
$m_\beta$. From $w_\alpha$'s preferences, this implies that $(w_\alpha, m) \in 
\marriage$ for some $m = m_j$ with $j < \beta$ for which $x^\alpha_j = 1$. Since 
the instance of $\DISJ$ is uniquely intersecting, we must have 
$y^\alpha_j = 0$. Thus $m$ prefers all low women over $w_\alpha$. Since at most 
$n/2 - 1$ medium and high men are married to low women (indeed $m$ is a high 
man married to a high woman) and there are $n/2$ low women, some low woman $w$ is 
married to a low man. But then $w$ and $m$ mutually prefer each other, 
hence forming a blocking pair. Thus, we must have 
$(w_\alpha, m_\beta) \in \marriage$.

    The remainder of the proof of the lemma is analogous to the proof of Lemma 
\ref{lem:disj} if we remove $w_\alpha$ and $m_\beta$ from all the 
participants' preferences.
  \end{proof}

  \begin{lemma}
    \label{lem:dist}
    The marriages $\marriage_0$ and $\marriage_1$ from the previous two lemmas 
satisfy
    $
    d(\marriage_0, \marriage_1) \geq (1 - \delta) n.
    $
  \end{lemma}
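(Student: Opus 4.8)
The plan is to compare the two explicit marriages $\marriage_0$ and $\marriage_1$ given in Lemmas~\ref{lem:int} and~\ref{lem:disj} and exhibit a large set of couples on which they disagree. Recall that $d(\marriage_0,\marriage_1) = n - \abs{\marriage_0 \cap \marriage_1}$, so it suffices to show that $\abs{\marriage_0 \cap \marriage_1} \le \delta n$, i.e.\ at most $\delta n$ couples are married in both marriages. First I would write out the couples of $\marriage_1$ involving mid and low participants: for $i = 1,\ldots,n/2$ the pairs $(w_{i+n/2}, m_i)$ and $(w_i, m_{i+n/2})$. In $\marriage_1$ the woman with index $i$ (for $i \le n/2$) is married to the man with index $i + n/2$, and the woman with index $i + n/2$ is married to the man with index $i$.

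Next I would read off the corresponding couples in $\marriage_0$. For indices $i$ that are strictly below the ``shift point'' $\alpha$ (on the women's side) the pair $(w_i, m_{i+n/2})$ appears in $\marriage_0$ as well, and likewise $(w_{i+n/2}, m_i)$ appears for $i < \beta$; but for $\alpha < i \le n/2$ the woman $w_i$ is instead married to $m_{i+n/2-1}$, a shift by one, and symmetrically for $\beta < i \le n/2$ on the men's side. The key observation is that $\alpha, \beta \le \abs{W_h} = \delta n/2$, since the uniquely-intersecting coordinate $(\alpha,\beta)$ indexes high participants. Hence the couples of $\marriage_1$ that can possibly survive into $\marriage_0$ are confined to indices $i < \alpha$ on one side and $i < \beta$ on the other (plus the single ``leftover'' couple $(w_n,m_n)$ and the special couple $(w_\alpha,m_\beta)$, neither of which lies in $\marriage_1$ — indeed in $\marriage_1$ both $w_\alpha$ and $m_\beta$ are matched across the $n/2$ boundary, whereas in $\marriage_0$ they are matched to each other, both being high-indexed). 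Counting: at most $\alpha - 1 < \delta n/2$ couples of the form $(w_i, m_{i+n/2})$ and at most $\beta - 1 < \delta n/2$ of the form $(w_{i+n/2}, m_i)$ can lie in $\marriage_0 \cap \marriage_1$, for a total of strictly less than $\delta n$.

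Therefore $\abs{\marriage_0 \cap \marriage_1} < \delta n$, which gives $d(\marriage_0,\marriage_1) = n - \abs{\marriage_0 \cap \marriage_1} > (1-\delta) n$, and in particular $d(\marriage_0,\marriage_1) \ge (1-\delta)n$ as claimed. I would double-check the boundary cases — whether $\alpha$ or $\beta$ can equal $n/2$ (they cannot, being at most $\delta n/2 \le n/2$ with $\delta \le 1$, and if $\delta < 1$ strictly less), and whether the shifted pairs $(w_i, m_{i+n/2-1})$ for $i > \alpha$ could accidentally coincide with a $\marriage_1$-pair $(w_i, m_{i+n/2})$ (they cannot, since $i+n/2-1 \ne i+n/2$). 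The only real subtlety, and the step I would be most careful about, is bookkeeping the two "halves" of the matching consistently (woman-indexed pairs versus man-indexed pairs) and making sure the special couple $(w_\alpha, m_\beta)$ and the terminal couple $(w_n, m_n)$ are correctly accounted for as not belonging to $\marriage_1$ — once that is pinned down, the count is immediate.
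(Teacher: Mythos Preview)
Your argument is correct: you directly enumerate $\marriage_0 \cap \marriage_1$, observing that only the pairs $(w_i,m_{i+n/2})$ for $i<\alpha$ and $(w_{i+n/2},m_i)$ for $i<\beta$ survive, and since $\alpha,\beta\le \delta n/2$ this gives $\abs{\marriage_0\cap\marriage_1}\le (\alpha-1)+(\beta-1)<\delta n$. The bookkeeping on the special pairs $(w_\alpha,m_\beta)$ and $(w_n,m_n)$ and the shifted pairs is handled correctly.

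The paper takes a complementary route: rather than counting the pairs that \emph{agree}, it exhibits a large set of participants whose pairs \emph{disagree}. Concretely, it notes that (i) every mid participant has a different spouse in $\marriage_0$ than in $\marriage_1$, and (ii) no mid woman is ever married to a mid man in either marriage, so the $\abs{W_m}$ changed pairs involving mid women and the $\abs{M_m}$ changed pairs involving mid men are disjoint; hence $d(\marriage_0,\marriage_1)\ge\abs{W_m}+\abs{M_m}=(1-\delta)n$. Your explicit count is a bit more work but yields a sharper inequality (strict rather than weak) and pins down the intersection exactly; the paper's argument is shorter and more structural, and its formulation (``mid participants always change partners'') is what gets reused verbatim in the proof of Theorem~\ref{thm:sp}.
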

  \begin{proof}
    This follows from the following two observations:
    \begin{enumerate}
    \item All mid women and men $M_m \cup W_m$ have different spouses in 
$\marriage_0$ and $\marriage_1$.
    \item No mid women are married to mid men in either $\marriage_0$ or 
$\marriage_1$.
    \end{enumerate}
    From these facts, we can conclude that 
    $
    d(\marriage_0, \marriage_1) = n - \abs{\marriage_0 \cap \marriage_1} \geq \abs{W_m} + \abs{M_m} = (1 - \delta) n.
    $
  \end{proof}

  %% Derivation of Main Results
  \subsection{Derivation of Main results}
  \label{sec:proofs}

  In this section we use the construction of Section~\ref{sec:embed} to prove Theorems~\ref{thm:asm} and~\ref{thm:related} from Section~\ref{sec:summary}. 

  \begin{proof}[Proof of Theorem \ref{thm:asm}]
    Suppose that $\Pi$ is a randomized communication protocol (between Alice and Bob) that outputs a $(1 - \e)$-stable marriage $\marriage$
using $B$ bits of communication. As $\e<1/2$, there exists $\delta$ sufficiently small such 
that $\e < (1 - \delta)/2$. Suppose $\Pi$ outputs a $(1 - \e)$-stable marriage $\marriage$ for the preferences described in Section \ref{sec:pref-description}.
If $\DISJ(\bar{x},\bar{y})=1$, then by Lemma~\ref{lem:disj}, $\marriage_1$ is the unique stable marriage, so $d(\marriage,\marriage_1) \le \e n$.

Suppose $\DISJ(\bar{x},\bar{y})=0$. By Lemma~\ref{lem:int}, $\marriage_0$ is the unique stable marriage, so $d(\marriage,\marriage_0)\le\e n < (1 - \delta) n / 2$. Applying Lemma \ref{lem:dist} and the triangle inequality, we obtain
$
d(\marriage_1, \marriage) >
(1 - \delta) n / 2 >
\e n$. Thus, if $\DISJ(\bfx, \bfy) = 1$, then $d(\marriage, \marriage_1) < \e n$ and if $\DISJ(\bfx, \bfy) = 0$, then $d(\marriage, \marriage_1) > \e n$. Given~$\marriage$, Alice and Bob can compute $d(\marriage, \marriage_1)$ without communication, so they can use $\Pi$ to determine the 
value of $\DISJ(\bar{x},\bar{y})$ using $B$ bits of communication. Thus, $B = \Omega(n^2)$ by Theorem~\ref{thm:disj}, as desired. 
  \end{proof}

\begin{proof}[Proof of Theorem~\ref{thm:related}]
        For Part~(\ref{thm:related-verify}), suppose that $\Pi$ is a randomized communication protocol
that determines whether a given marriage $\marriage$ is stable or $\e$-unstable with 
respect to given preferences
using $B$ bits of 
communication. As $\e<1$, there exists $\delta$ sufficiently small such 
that $1 - \delta > \e$. Let $\marriage_1$ be the marriage defined in Lemma \ref{lem:disj}; by that lemma,
if $\DISJ(\bfx,\bfy)=1$, then $\marriage_1$ is stable (with respect to the preferences described in Section~\ref{sec:pref-description}). By Lemmas 
\ref{lem:int} and \ref{lem:dist}, if $\DISJ(\bfx,\bfy)=0$, then $\marriage_1$ is $\e$-unstable. Thus, if $\Pi$ determines whether $\marriage_1$ is stable or $\e$-unstable,
then $\Pi$ also determines the value of $\DISJ(\bfx, \bfy)$, hence 
$B = \Omega(n^2)$ by Theorem \ref{thm:disj}.

    For Part~(\ref{thm:related-ms}), suppose that $\Pi$ is a randomized communication protocol that for a given pair $(w, m)$ 
determines whether $(w, m) \in \marriage$ for some (every) stable marriage $\marriage$
using $B$ bits of communication. Set $\delta=1$. By choosing preferences as in Section~\ref{sec:pref-description} and taking $(w,m) = (w_n, m_n)$, by Lemmas~\ref{lem:disj} and 
\ref{lem:int}, $(w,m)$ is in some (equivalently every) stable marriage for the
given preferences if and only if $\DISJ(\bfx, \bfy) = 0$. Thus, once again by Theorem~\ref{thm:disj},
$B = \Omega(n^2)$.
	
    Finally, for Part~(\ref{thm:related-sp}), suppose that $\Pi$ is a randomized communication protocol that outputs $\e n$ pairs 
contained in some (every) stable marriage using $B$ bits of communication. 
Choose preferences as described in 
the Section~\ref{sec:pref-description} with some $0<\delta < \e$, say $\delta = \e / 2$.
Recall from the proof of Lemma~\ref{lem:dist} that no participants in $W_m$ and $M_m$ are ever married to one another in a stable marriage. Therefore, since $\abs{W_m} + \abs{M_m} = (1-\delta) n > (1-\e) n$ and since $\Pi$ outputs $\e n$ pairs, we have that
$\Pi$ must 
output some pair $(w,m)$ with $w \in W_m$ or $m \in M_m$. Recall from the proof of Lemma~\ref{lem:dist} that
knowing the stable spouse of any participant in $W_m$
or $M_m$ reveals the value of $\DISJ(\bfx, \bfy)$. Thus, by Theorem~\ref{thm:disj}, $B = \Omega(n^2)$.
  \end{proof}

  %%%%%%%%%%%%%%%%%%%%%%%%%%%%%%%%%%%
  % Distance to Stability
  %%%%%%%%%%%%%%%%%%%%%%%%%%%%%%%%%%%
  \section{Computing Distance to Stability}
  \label{sec:dist-to-stability}

  In this section, we describe an efficient method for computing the divorce distance to stability of a given marriage, $\marriage$. Recall that the divorce distance to stability is given by
  \[
  d(\marriage) = \min_{\marriage' \in \calM} d(\marriage, \marriage') \quad\text{where}\quad d(\marriage, \marriage') = n - \abs{\marriage \cap \marriage'}.
  \]
  In fact, our algorithm solves the following general problem, which we believe may be of independent interest: given a marriage market $\marriagemarket$ and an arbitrary marriage $\marriage$, find a stable marriage $\marriage'$ that shares the greatest number of pairs with $\marriage$. Since the set $\calM$ of all stable marriages can be exponentially large~\cite{Knuth76,IL86,GI89}, brute-force computation of $d(\marriage)$ is infeasible. Fortunately, by exploiting the structure of $\calM$, we are able to efficiently reduce the computation of $d(\marriage)$ to a max-flow/min-cut problem of size quadratic in $n$. Thus, any number of efficient algorithms may be applied to compute $d(\marriage)$.

  \subsection{The Rotation Poset and Digraph}

  Our exposition follows the work of Gusfield \cite{Gusfield87} and of Irving and Leather \cite{IL86} (see also \cite{GI89}). Let $\marriage$ be a stable marriage. A \dft{rotation $\mathbfit{\rho}$ exposed by $\mathbfit{\marriage}$} is a sequence of pairs $(w_0, m_0), (w_1, m_1), \ldots, \linebreak (w_{r-1}, m_{r-1}) \in \marriage$ such that for each $i$, $w_{i+1}$ is the first woman on $m_i$'s preference list that prefers $m_i$ to her partner $m_{i+1}$ in $\marriage$ (where addition is conducted modulo $r$). Given $\marriage$ and $\rho$, we form a marriage called the \dft{elimination of~$\mathbfit{\rho}$ from $\mathbfit{\marriage}$}, denoted  $\marriage / \rho$, which contains the pairs $(w_1, m_0), (w_2, m_1), \ldots, (w_0, m_{r-1})$, in addition to all pairs from $\marriage$ that are not part of the rotation~$\rho$. It is straightforward to verify that $\marriage / \rho$ is a stable marriage.

  Let $\marriage_0$ denote the $M$-optimal stable marriage (the marriage found by the Gale-Shapley algorithm). Irving and Leather \cite{IL86} prove that every stable marriage can be obtained from $\marriage_0$ by successively eliminating a \emph{unique} set of rotations that appear in $\marriage_0$ and subsequent stable marriages. Given a stable marriage $\marriage$, let $S_\marriage$ denote this unique set of rotations, which can be eliminated (starting at $\marriage_0$) to obtain $\marriage$. ($S_{\marriage}$ may contain some rotations that are not exposed in $\marriage_0$, but only in subsequent marriages.)

  We denote the set of all rotations exposed in one or more stable marriages in $\calM$ by $\Pi(\calM)$. We endow $\Pi(\calM)$ with a partial order $\prec$ where $\rho \prec \sigma$ if $\rho \in S_\marriage$ for every $\marriage \in \calM$ in which $\sigma$ is exposed. In other words, $\rho \prec \sigma$ if whenever $\sigma$ is eliminated during the construction of a stable marriage by elimination from $\marriage_0$, it is the case that $\rho$ has been eliminated before $\sigma$. A subset $S \subseteq \Pi(\calM)$ is (downward)~\dft{closed} if for all $\sigma \in S$ and $\rho \prec \sigma$, we have that $\rho \in S$. Irving and Leather prove the following remarkable correspondence between closed subsets of $\Pi(\calM)$ and stable marriages.

  \begin{theorem}[Irving and Leather \cite{IL86}]
    \label{thm:il86}
    The map $\marriage \mapsto S_{\marriage}$ is a bijection between~$\calM$ and the set of closed subsets of $\Pi(\calM)$. 
  \end{theorem}

  For algorithmic purposes, it is advantageous to have a sparse representation of the partial order~$\prec$ on $\Pi(\calM)$, which preserves its closed subsets. To this end, Gusfield \cite{Gusfield87} proved the following theorem. For the remainder of this section, we use the standard notation $\tilde{\Oh}$ to suppress $\log n$ factors, which we find less interesting in the context of our discussion of time complexity in this section (in contrast to the discussion of communication and query complexity in the rest of this paper).

  \begin{theorem}[Gusfield \cite{Gusfield87}]
    \label{thm:gusfield87}
    There exists a directed acyclic graph $G(\calM)$ with vertices $\Pi(\calM)$, called the \dft{rotation digraph}, whose transitive closure is the partial order $\prec$ on $\Pi(\calM)$. (That is, a path from a vertex $\rho$ to a vertex $\sigma$ exists iff $\rho\prec\sigma$.) $G(\calM)$ can be computed from the full preferences of all participants in time $\tilde{\Oh}(n^2)$. In particular, the edge and vertex sets of $G(\calM)$ both have cardinality $\Oh(n^2)$.
  \end{theorem}

  \subsection{Rotation Weights}

  In this section, we show how to assign weights to rotations $\rho \in \Pi(\calM)$ in such a way that $d(\marriage, \marriage')$ can be computed directly from $d(\marriage, \marriage_0)$ and $S_{\marriage'}$. Let $\rho$ be any rotation and $\marriage'$ a stable marriage in which $\rho$ is exposed. For any marriage $\marriage$, we define the \dft{weight} $\gamma$ of $\rho$ relative to $\marriage$ by
  \[
  \gamma_\marriage(\rho) = \abs{\marriage \cap (\marriage' / \rho)} - \abs{\marriage \cap \marriage'}.
  \]
(The absence of $\marriage'$ from the notation $\gamma_\marriage(\rho)$ becomes clear in Eq.~\ref{eq:fast-gamma} below.) That is, $\gamma_\marriage(\rho)$ is the net change, following the elimination of $\rho$ from $\marriage'$, in the number of pairs contained in the intersection of $\marriage$ and $\marriage'$. We note that $\gamma_\marriage(\rho)$ can be computed directly from $\rho$ (without being given an explicit stable marriage $\marriage'$ which exposes $\rho$). Specifically, letting $\rho' = \set{(w_1, m_0), (w_2, m_1), \ldots, (w_0, m_{r-1})}$ be the set of pairs replacing $\rho$ when eliminating $\rho$ from any stable marriage,\footnote{Note that in general, $\rho'$ is not a rotation.} we have
  \begin{equation}\label{eq:fast-gamma}
  \gamma_\marriage(\rho) = \abs{\marriage \cap \rho'} - \abs{\marriage \cap \rho}.
  \end{equation}

  \begin{lemma}
    \label{lem:weights}
    For any marriage $\marriage$ and stable marriage $\marriage' \in \calM$,
    \[
    \abs{\marriage \cap \marriage'} = \abs{\marriage \cap \marriage_0} + \sum_{\rho \in S_{\marriage'}} \gamma_\marriage(\rho),
    \]
    where $\marriage_0$ is the $M$-optimal stable marriage.
  \end{lemma}
  \begin{proof}
    We argue by induction on $\abs{S_{\marriage'}}$. If $S_{\marriage'} = \varnothing$, then $\marriage' = \marriage_0$, so the result is immediate. Suppose the claim is true for $\marriage'$ and $\sigma$ is a rotation exposed in $\marriage'$, then by the induction hypothesis and by definition of $\gamma_\marriage(\sigma)$,
    \[
    \abs{\marriage \cap (\marriage' / \sigma)} = \abs{\marriage \cap \marriage'} + \bigl(\abs{\marriage \cap (\marriage' / \sigma)} - \abs{\marriage \cap \marriage'}\bigr) = \abs{\marriage\cap\marriage_0} + \sum_{\rho \in S_{\marriage'}} \gamma_\marriage(\rho) + \gamma_\marriage(\sigma),
    \]
    which gives the desired result.
  \end{proof}

  Applying Lemma \ref{lem:weights} and Theorem \ref{thm:il86}, we obtain the following result.

  \begin{theorem}
    \label{thm:divorce-equivalence}
    Let $\marriage$ be a marriage. Then
    \[
    d(\marriage) = d(\marriage, \marriage_0) - \max_{S} \sum_{\rho \in S} \gamma_\marriage(\rho),
    \]
    where the maximum is taken over closed subsets $S \subseteq \Pi(\calM)$ (and where $\marriage_0$ is the $M$-optimal stable marriage).
  \end{theorem}
  \begin{proof}
    By Lemma \ref{lem:weights}, for any stable marriage $\marriage'$,
    \[
    d(\marriage, \marriage') = n - \abs{\marriage \cap \marriage'} = n - \abs{\marriage \cap \marriage_0} - \sum_{\rho \in S_{\marriage'}} \gamma_\marriage(\rho) = d(\marriage, \marriage_0) - \sum_{\rho \in S_{\marriage'}} \gamma_\marriage(\rho).
    \]
    By Theorem \ref{thm:il86}, $\marriage' \mapsto S_{\marriage'}$ is a bijection onto the set of closed subsets of $\Pi(\calM)$. Thus,
    \[
    d(\marriage) = \min_{\marriage' \in \calM} d(\marriage, \marriage') = d(\marriage, \marriage_0) - \max_{S} \sum_{\rho \in S} \gamma_\marriage(\rho),
    \]
    as desired.
  \end{proof}

  \subsection{Reduction to Max-Flow/Min-Cut}

We now wish to apply Theorems~\ref{thm:gusfield87} and~\ref{thm:divorce-equivalence} to efficiently calculate $d(\marriage)$.
  The divorce distance $d(\marriage, \marriage_0)$ can easily be computed in $\tilde{\Oh}(n^2)$ time by using the Gale-Shapley algorithm to compute $\marriage_0$.
 Since the partial order $\prec$ on~$\Pi(\calM)$ is the transitive closure of $G(\calM)$ (where $G(\calM)$ is the rotation digraph described in Theorem~\ref{thm:gusfield87}), the closed subsets of $\prec$ are precisely the (downward)~\dft{closed subsets} (vertex sets with no incoming edges) of $G(\calM)$, and so to compute $d(\marriage)$ it suffices to maximize $\sum_{\rho \in S} \gamma_\marriage(\rho)$ over closed subsets $S$ of $G(\calM)$.
Thus, we have reduced the problem of computing $d(\marriage)$ to finding a \dft{maximum closed subset} (i.e.\ maximum-weight vertex set with no incoming edges) in a directed acyclic graph. This problem is well studied, in particular for its applications to open-pit mining (see, e.g.\ \cite{Picard76, Hochbaum01}). For completeness and due to some differences in terminology between this paper and that of Picard~\cite{Picard76}, we briefly describe the application to our specific maximum closed subset problem of Picard's \cite{Picard76} efficient reduction of maximum closed subset to max-flow/min-cut.\footnote{Picard \cite{Picard76} and Hochbaum \cite{Hochbaum01} use the term ``closure''/``closed'' to refer to an \emph{upward}~closed subset, i.e.\ a vertex set with no \emph{outgoing} edges.
This disagrees with Irving and Leather's \cite{IL86} (and our) usage of closed to mean \emph{downward}~closed (vertex set with no \emph{incoming} edges), which appears to be more prevalent in the stable marriage literature. For consistency within this paper, when describing Picard's construction below and when phrasing Picard's theorem as Theorem~\ref{thm:st-cut}, we perform the trivial needed modifications (flipping the direction of all edges as well as the roles of the source and sink vertices) to naturally present Picard's results with respect to downward rather than upward closed subsets.}

Denote the vertex and edge sets of $G(\calM)$ by $V=\Pi(\calM)$ and $E$ respectively. Let
  \[
  V^+ = \set{\rho \in V \st \gamma_\marriage(\rho) \geq 0} \quad\text{and}\quad V^- = \set{\rho \in V \st \gamma_\marriage(\rho) < 0}.
  \]
  We add a source vertex $s$ and a sink vertex $t$ to $G(\calM)$ to form a new $st$-graph $\tilde{G} = (\tilde{V}, \tilde{E})$ where $\tilde{V} = V \cup \set{s, t}$ and
  \[
  \tilde{E} ~=~ E ~\cup~ \set{(s, \rho) \st \rho \in V^-} ~\cup~ \set{(\rho, t) \st \rho \in V^+}.
  \]
We assign nonnegative capacity $c(u, v)$ to each edge $(u, v) \in \tilde{E}$ by
  \[
  c(u, v) =
  \begin{cases}
    \infty & (u, v) \in E,\\
    \gamma_\marriage(\rho) & (u, v) = (\rho, t),\\
    -\gamma_\marriage(\rho) & (u, v) = (s, \rho).
  \end{cases}
  \]
  
  \begin{theorem}[Picard \cite{Picard76}]
    \label{thm:st-cut}
    The sink set (i.e.\ the set of vertices on the same side as the sink $t$) of a minimum $st$-cut in $\tilde{G}$ is a maximum closed subset in $G(\calM)$.
  \end{theorem}

  In light of Theorem \ref{thm:st-cut}, we can reduce the computation of $d(\marriage)$ to known efficient algorithms for max-flow/min-cut. We summarize the procedure as follows.

  \begin{algorithm}
    \caption{$\mathrm{DivorceDistance}(\marriage, P_W, P_M)$ --- compute $d(\marriage)$ with respect to preferences $P_W, P_M$.}
    \label{alg:divorce-distance}
    \begin{enumerate}
    \item Use the Gale-Shapley algorithm to compute $\marriage_0$ and compute $d(\marriage, \marriage_0) = n - \abs{\marriage \cap \marriage_0}$.
    \item Construct the rotation digraph $G(\calM)$ and the related graph $\tilde{G}$.
    \item Compute the weights in $\tilde{G}$ by computing $\gamma_\marriage(\rho)$ for each rotation $\rho \in G(\calM)$, using Eq.~\eqref{eq:fast-gamma}.
    \item Find a minimum $st$-cut in $\tilde{G}$, and let $S$ be the sink set in the cut.
    \item Return $d(\marriage, \marriage_0) - \sum_{\rho \in S} \gamma_\marriage(\rho)$.
    \end{enumerate}
  \vspace{-1em}
  \end{algorithm}

  \begin{theorem}
    \label{thm:divorce-distance}
    Given a marriage market $\marriagemarket$ and a marriage $\marriage$, running the procedure $\mathrm{DivorceDistance}(\marriage, P_W, P_M)$ computes the divorce distance to stability $d(\marriage)$ in time $\tilde{\Oh}(n^4)$.
  \end{theorem}

  We remark that since $(P_W,P_M)$ has size $\tilde{\Theta}(n^2)$, the runtime of $\mathrm{DivorceDistance}$ is nearly quadratic in the input size.

  \begin{proof}[Proof of Theorem~\ref{thm:divorce-distance}]
    The correctness of $\mathrm{DivorceDistance}(\marriage,P_W,P_M)$ follows immediately from Theorems~\ref{thm:divorce-equivalence} and~\ref{thm:st-cut}. We analyze the runtime as follows. Step 1 can be computed in time $\tilde{\Oh}(n^2)$ using the Gale-Shapley algorithm and brute force computation of $d(\marriage, \marriage_0)$. For Step 2, by Theorem \ref{thm:gusfield87}, $G(\calM)$ (and hence $\tilde{G}$) can also be computed in time $\tilde{\Oh}(n^2)$. The weights in Step 3 can be computed in linear time for each rotation, so computing all the weights can be accomplished in time~$\tilde{\Oh}(n^3)$. For Step 4, the min-cut can be computed in time $\tilde{\Oh}\bigl(|\tilde{E}||\tilde{V}|\bigr) = \tilde{\Oh}(n^4)$ using, for example, Hochbaum's algorithm \cite{Hochbaum01}. 
  \end{proof}

We remark that since Algorithm~\ref{alg:divorce-distance} finds both $\marriage_0$ and $S=S_{\marriage'}$ for a stable marriage $\marriage'$ closest to $\marriage$, it is a trivial task, which does not increase the asymptotic runtime complexity of Algorithm~\ref{alg:divorce-distance}, to also compute $\marriage'$ in addition to computing $d(\marriage)=d(\marriage,\marriage')$.

  %%%%%%%%%%%%%%%%%%%%%%%%%%%%%%%%%%%
  % Commentary and Open Problems
  %%%%%%%%%%%%%%%%%%%%%%%%%%%%%%%%%%%
  \section{Commentary and Open Problems}
  \label{sec:commentary}

  A number of recent papers~\cite{AKL13,G14} have touched on various aspects of the amorphic question of ``how much do the preferences of the women in the Gale-Shapley algorithm affect the produced ($M$-optimal) stable marriage.'' The fact that we prove our lower-bound result in a strong two-sided communication model (and not a weaker $2n$-sided communication model or an even-weaker query model) allows our results to also be viewed in the context of this line of research. Our communication lower bounds show that a significant amount of information about the preferences of the women is indeed needed in order to deduce the $M$-optimal stable marriage, as well as for solving any of the other problems described in Corollary~\ref{cor:related}.

  One qualitative feature of of the Gale-Shapley algorithm is that a single proposal at any point can precipitate a cascade of rejections that affects a large portion of the population. Thus, it is impossible for participants to know if their current partner is their final partner until the algorithm has terminated. Our results imply that this feature is common to \emph{all} stable marriage mechanisms that dynamically refine a marriage and converge to a stable marriage. Indeed, consider any stable marriage algorithm and arbitrarily divide it into a ``first'' stage and a ``second'' stage. A consequence of Corollary~\ref{cor:asm} along with our novel definition of divorce distance is that if the query complexity of the first stage is significantly lower than that of querying the entire input, then after the first stage a large fraction of the participants may not yet be married to their final spouses.

    In many real-world marriage markets, centralized clearinghouses are employed to prevent undesirable outcomes~\cite{Roth1984, Roth1991}. Specifically, these clearinghouses were implemented to avoid ``unraveling'' --- wherein participants are incentivized to match extremely early --- as well as instability. While unraveling is undesirable in its own right, the early binding commitments made in an unraveling market have been shown to have adverse effects~\cite{Roth1994, Li1998, Niederle2003, Frechette2007}, presumably because the early matches are necessarily made with incomplete knowledge about the market. A consequence of Corollary~\ref{cor:related} (specifically, of Part~(\ref{cor:related-sp})) is that any marriage mechanism that allows even a small fraction of participants to match in early binding commitments (i.e.\ before essentially all of the preferences are queried) cannot generally produce a stable marriage. Thus the empirical phenomenon of instability in decentralized markets where early-accepted proposals are binding, which was observed in~\cite{Roth1984, Roth1991}, is not merely a feature of the particular marriage mechanisms that arose in practice, but is a general theoretical feature inherent in the stable marriage problem.
  
It is interesting to compare the $\Omega(n^2)$ lower bound that is proved in this paper for the communication complexity of finding an approximately stable marriage to known complexity bounds for the problem of finding an approximately maximum-weight matching in a bipartite graph. Even though these two problems seem similar, the latter can be solved with $\Oh(n\log n)$ communication~\cite{DNO14}, i.e.\ with significantly less communication than many variants of the former. It is worthwhile to compare this surprising dissimilarity between these problems with a qualitatively similar message that emerges from a significantly different, recent line of work~\cite{Li2015,AG2016}, which shows that finding a Pareto-efficient perfect matching requires considerably less strategic reasoning than finding a stable marriage.

  The classic Gale-Shapley algorithm~\cite{GS62} terminates after $\Oh(n^2)$ steps, and each step consists of a message of $\Oh(\log n)$ 
bits. Thus, the Gale-Shapley algorithm provides a communication upper bound of
$\Oh(n^2 \log n)$ for the problem of finding a stable marriage. As mentioned in the introduction, our
Theorem~\ref{thm:asm} matches this up to a logarithmic factor, but it is not
immediately clear how to close this gap.

\begin{open}
  What is the
communication complexity of finding a stable marriage? 
\end{open}

  Our definition of $(1 - \e)$-stability is nonstandard. A more common notion
of approximate stability is that a marriage induce few (say, at most $\e n^2$) 
blocking pairs (see~\cite{EH08}). As shown in Propositions~\ref{prop:blocking-pairs}, the blocking-pairs notion
of approximate stability is strictly coarser than ours. It is therefore natural to ask if the 
$\Omega(n^2)$ communication lower bound of Theorem~\ref{thm:asm} holds as well with respect to
the more challenging concept of blocking-pairs approximate stability.

\begin{open}
  Is there a protocol $\Pi$ that computes a marriage with at most $\e n^2$
blocking pairs using $o(n^2)$ communication? 
\end{open}

  Recently, Ostrovsky and Rosenbaum~\cite{OR14} showed that it is possible to find a marriage with $\e n^2$ blocking pairs for arbitrary $\e > 0$ using $\Oh(1)$  communication \emph{rounds} for a distributed model of computation. While their result does not imply anything nontrivial about the total communication, we believe their techniques may be relevant for finding $o(n^2)$ communication protocols for blocking-pairs approximate stability (if such protocols exist).
Interestingly, an analogue of Theorem~\ref{thm:related}(\ref{thm:related-verify}) (or of Corollary~\ref{cor:related}(\ref{cor:related-verify})) does not hold 
for blocking-pairs approximate stability.

\begin{theorem}\label{verify-few-blocking-pairs}
  For every $\e \geq \delta > 0$, there exists a randomized communication 
protocol $\Pi$ that determines whether a given marriage $\marriage$ induces at 
least $\e n(n-1)$ blocking pairs or at most ${(\e - \delta) n(n-1)}$ blocking pairs 
using $\Oh(\log n)$ communication. In particular, $\Pi$ determines whether $\marriage$ 
is stable or has $\e n(n-1)$ blocking pairs using $\Oh(\log n)$ communication.
\end{theorem}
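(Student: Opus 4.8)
The plan is to estimate the number of blocking pairs by random sampling. Since the total number of candidate pairs is $n^2$, and we wish to distinguish between at least $\e n^2$ blocking pairs and at most $(\e - \delta) n^2$ blocking pairs, it suffices to sample a constant number of pairs $(w, m) \in W \times M$ uniformly at random and, for each, determine whether it is a blocking pair with respect to $\marriage$. Checking whether a single pair $(w,m)$ is blocking requires only knowing how $w$ ranks $m$ versus $\marriage(w)$ (which Alice knows) and how $m$ ranks $w$ versus $\marriage(m)$ (which Bob knows); since $\marriage$ is known to both, Alice and Bob can each locally compute their half of this predicate, and a single bit of communication in each direction settles whether the pair is blocking. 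Thus each sampled pair costs $\Oh(1)$ communication, plus $\Oh(\log n)$ bits up front to agree on the random indices (using public or shared randomness, or having Alice send Bob the $\Oh(\log n)$-bit description of each sampled pair).

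The key steps, in order, are: (1) Alice and Bob jointly sample $k = \Oh(1/\delta^2)$ pairs $(w,m)$ uniformly and independently from $W \times M$, communicating the $\Oh(k \log n) = \Oh(\log n)$ bits needed to fix these indices. (2) For each sampled pair, Alice sends the bit ``$m \succ_w \marriage(w)$'' and Bob sends the bit ``$w \succ_m \marriage(m)$''; they agree the pair is blocking iff both bits are $1$. (3) Let $\hat p$ be the fraction of sampled pairs found to be blocking; output ``$\geq \e n^2$ blocking pairs'' if $\hat p \geq \e - \delta/2$, and ``$\leq (\e-\delta)n^2$'' otherwise. A Chernoff/Hoeffding bound shows that for $k = \Theta(1/\delta^2)$, the empirical fraction $\hat p$ is within $\delta/2$ of the true fraction $p$ with probability at least $2/3$, so the threshold test is correct. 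For the final sentence, stability corresponds to $p = 0$, so taking $\delta = \e$ gives the stated distinction between stable and $\e n^2$ blocking pairs.

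The analysis is entirely routine; the only point requiring a moment's care is the communication accounting. Sampling indices naively costs $\Oh(\log n)$ bits per sample and hence $\Oh(\log n)$ bits total for $\Oh(1)$ samples, which is within budget; alternatively, if one wishes to be careful about whether shared randomness is assumed, one charges the $\Oh(\log n)$ bits to Alice transmitting the chosen indices to Bob, which is also within budget. The per-pair predicate evaluation is genuinely local except for the two exchanged bits, because the marriage $\marriage$ is common knowledge (and even if it were not, transmitting it costs $\Oh(n \log n)$, still well within an $\Oh(n^2)$-free statement but here we only need $\Oh(\log n)$, so we rely on $\marriage$ being known). I expect no real obstacle; the mild subtlety is simply that this relies essentially on the \emph{coarseness} of the blocking-pairs notion of approximate stability --- a single misplaced couple changes $d(\marriage)$ by a constant but changes the blocking-pair count by at most $\Oh(n)$, i.e.\ a $\Oh(1/n)$ fraction, which is why sampling works here but provably cannot work for $(1-\e)$-stability (cf.\ Theorem~\ref{thm:asm-verify}).
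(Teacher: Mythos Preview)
Your proposal is correct and takes essentially the same approach as the paper: uniformly sample a constant number of pairs from $W \times M$, test each for being a blocking pair using $\Oh(\log n)$ communication (to convey the indices) plus $\Oh(1)$ bits for the actual predicate, and threshold the empirical fraction via a Hoeffding bound. Your write-up is in fact more explicit than the paper's sketch about the sample size $k = \Theta(1/\delta^2)$ and the communication accounting, but the underlying idea is identical.
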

\begin{proof}[Proof sketch.]
  Choose an unmarried pair $(w, m)$ uniformly at random from $W \times M$. If $m$ prefers
$w$ over his spouse in~$\marriage$, the men query the women to see if $w$ also
prefers $m$ over her spouse in $\marriage$ using $\Oh(\log n)$ communication.
The probability that $(w, m)$ is a blocking pair is precisely $\e'$, where
$\e'$ is the fraction of blocking pairs in $\marriage$.
Repeat this procedure to estimate $\e'$ to any desired accuracy in a bounded 
number of steps depending only on the desired accuracy.
\end{proof}

Theorem~\ref{thm:related}(\ref{thm:related-sp}) shows that any protocol that produces a constant 
fraction of pairs in a stable marriage (regardless of \emph{which} pairs are
found) requires $\Omega(n^2)$ communication. It would be interesting to 
improve this result (or find an efficient protocol) for finding even a single
pair that appears in a stable marriage.

\begin{open}\label{open:single-pair}
  What is the communication complexity of finding a single pair $(w, m)$ that
appears in some/every stable marriage?
\end{open}

Finally, we notice that in contrast to Parts~(\ref{thm:related-verify}) and~(\ref{thm:related-sp}) of Theorem~\ref{thm:related}, our statement of Theorem~\ref{thm:asm} requires that $\e<1/2$. It is natural to ask what can be obtained regarding other values of $\e$.

\begin{open}
Fix $\frac{1}{2}\le\e<1$. What is the communication complexity of finding a \mbox{$(1-\e)$-stable} marriage?
\end{open}

\section*{Acknowledgements}
Yannai Gonczarowski is supported by the Adams Fellowship Program of the Israel Academy of Sciences and Humanities. The work of Yannai Gonczarowski was supported in part by the European Research Council under the European Community's Seventh Framework
Programme (FP7/2007-2013) / ERC grant agreement no.\ [249159].

The work of Noam Nisan was supported in part by ISF grants 230/10 and 1435/14 administered by the Israeli Academy of Sciences, and by
Israel-USA Bi-national Science Foundation (BSF) grant number 2014389.

The work of Rafail Ostrovsky was supported in part by NSF grants 09165174, 1065276, 1118126, 1136174 and 1619348; US-Israel BSF grant
2012366, OKAWA Foundation Research Award, IBM Faculty Research Award, Xerox Faculty Research Award, B.\ John Garrick Foundation Award, Teradata Research Award, and Lockheed-Martin Corporation Research Award. This material is based upon work supported
in part by DARPA SafeWare program.
The views expressed are those of the author and do not reflect the official policy or position of the Department of Defense or the U.S.\ Government.

We would like to thank two anonymous referees for many helpful comments.

\bibliographystyle{abbrv}
\bibliography{asm-cc-arxiv}{}

\appendix

  %%%%%%%%%%%%%%%%%%%%%%%%%%%%%%%%%%%
  % Embedding Arbitrary Preferences into Complete Preferences
  %%%%%%%%%%%%%%%%%%%%%%%%%%%%%%%%%%%

\section{Asymptotic Notation}
\label{sec:asymptotic-notation}

Throughout this paper, we use standard computer-science asymptotic notation to describe the order of growth of single-dimensional functions of natural numbers. For example, for a positive function~$f$, we write $f(n)=\Oh\bigl(g(n)\bigr)$ where $g$ is also a positive function (usually, one simple to write down), if there exist positive numbers $\overline{M}$ and $N$ such that $f(n)\le\overline{M}\cdot g(n)$ for all $n>N$. (Equivalently and succinctly, we write $f=\Oh\bigl(g(n)\bigr)$ if $\lim\sup_{n\rightarrow\infty}\frac{f(n)}{g(n)}<\infty$.) Intuitively, one may find it helpful to read the notation $f=\Oh\bigl(g(n)\bigr)$ as $f\in\Oh(g)$, where $\Oh(g)$ is the class of functions whose order of growth (as $n$ grows large) is at most that of $g$. This notation allows for the simplification of the exposition of many results, where the order of magnitude of the result serves as the main message. For example, for a highly complex function $f$ that satisfies $f=\Oh\bigl(n^2\bigr)$ (that is, does not grow any faster than $g$, where $g(n)=n^2$ for all $n$), instead of specifying $f$ and saying that some algorithm takes at most $f$ steps, one could concisely say that this algorithm takes at most $\Oh(n^2)$ steps, without the need to explicitly write down the complex function $f$. The following table summarizes this and other similar standard ``order of growth'' notation used throughout this paper.

\begin{center}
\begin{tabular}{llll}
Notation & Explicit Definition & Succinct Definition & Informal Meaning \\
\hhline{====}\\[-.75em]
$f=\Oh\bigl(g(n)\bigr)$ & \pbox{4cm}{\small $\exists \overline{M},N: \forall n>N:$ \\ $f(n)\le \overline{M}\cdot g(n)$} & $\lim\sup_{n\rightarrow\infty}\frac{f(n)}{g(n)}<\infty$ & $f$ grows at most as fast as $g$ \\[.75em]
\hline\\[-.75em]
$f=\Omega\bigl(g(n)\bigr)$ & \pbox{4cm}{\small $\exists\underline{M}>0,N: \forall n>N:$ \\ $f(n)\ge \underline{M}\cdot g(n)$} & $\lim\inf_{n\rightarrow\infty}\frac{f(n)}{g(n)}>0$ & $f$ grows at least as fast as $g$ \\[.75em]
\hline\\[-.75em]
$f=\Theta\bigl(g(n)\bigr)$ & \pbox{4cm}{\small $\exists\underline{M}>0,\overline{M},N:\forall n>N:$ \\ $ \underline{M}\cdot g(n) \le f(n)\le \overline{M}\cdot g(n)$} & ~~~~\pbox{3cm}{$f=\Oh\bigl(g(n)\bigr)$ \& \\ $f=\Omega\bigl(g(n)\bigr)$} & $f$ grows as fast as $g$ \\[.75em]
\hline\\[-.75em]
$f=o\bigl(g(n)\bigr)$ & \pbox{4cm}{\small $\forall\overline{M}>0:\exists N: \forall n>N:$ \\ $f(n)\le\overline{M}\cdot g(n)$} & $\lim\sup_{n\rightarrow\infty}\frac{f(n)}{g(n)}=0$ & $f$ grows slower than $g$ \\[.75em]
\hline\\[-.75em]
$f=\omega\bigl(g(n)\bigr)$ & \pbox{4cm}{\small $\forall\underline{M}:\exists N: \forall n>N:$ \\ $f(n)\ge\underline{M}\cdot g(n)$} & $\lim\inf_{n\rightarrow\infty}\frac{f(n)}{g(n)}=\infty$ & $f$ grows faster than $g$
\end{tabular}\vspace{0.5em}
\end{center}

\section{Embedding Arbitrary Preferences into Complete Preferences}
\label{sec:arbitrary-complete}

This section contains the remaining technical details needed to complete the direct proof that is given in Section~\ref{sec:smlb} of our lower bound on the communication complexity of finding an exactly stable marriage.

\begin{Definition}[Submarriage]
Let $W'$ and $M'$ be disjoint sets. A marriage $\marriage$, between a subset~$W$
of $W'$ and a subset $M$ of $M'$, is said to be a \dft{submarriage} of a 
marriage $\marriage'$ between $W'$ and $M'$, if for every $w \in W$ and $m \in M$, 
we have $\marriage'(w)=m$ iff $\marriage(w)=m$.
\end{Definition}

\begin{lemma}[Finding a Stable Marriage (Arbitrary Preferences) $\hookrightarrow$ Finding a Stable Marriage (Full Preferences)]
\label{findstabbl-in-findstab}
Let $n \in \mathbb{N}$, and let $W$, $W'$, $M$ and~$M'$ be pairwise-disjoint 
sets, each of cardinality $n$. There exist functions 
$\prefs{W\cup W'}:\allprefs(W,M)\rightarrow\mbox{$\fullprefs(W\cup W',M\cup M')$}$ 
and 
$\prefs{M\cup M'}:\allprefs(M,W)\rightarrow\mbox{$\fullprefs(M\cup M',W\cup W')$}$ 
such that for every $\prefs{W} \in \allprefs(W,M)$ and 
$\prefs{M} \in \allprefs(M,W)$, and for every (possibly imperfect) marriage 
$\marriage$ between $W$ and $M$, the following are equivalent.
\begin{itemize}
\item $\marriage$ is stable with respect to $\prefs{W}$ and $\prefs{M}$.

\item $\marriage$ is a submarriage of some marriage between $W\cup W'$ and 
$M\cup M'$ that is stable with respect to $\prefs{W\cup W'}(\prefs{W})$ and 
$\prefs{M\cup M'}(\prefs{M})$.
\end{itemize}
\end{lemma}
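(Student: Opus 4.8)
The plan is to realize the arbitrary-preference market $(\prefs{W},\prefs{M})$ inside a full-preference market on the doubled population $W\cup W'$, $M\cup M'$ by using the ``dummy'' side $W'$, $M'$ to absorb exactly those participants who would be single in an arbitrary-preference stable marriage, while making the embedding so rigid that in any stable marriage of the big market the original participants behave precisely as in the small market. Concretely, I would extend each original woman $w\in W$'s list by first listing her original acceptable men (in her original order), then her ``private'' dummy man $m'_w\in M'$, then everything else in arbitrary order; and symmetrically for original men. Each dummy woman $w'\in W'$ gets her paired original man $m_{w'}$ first (the man whose private dummy she is), and correspondingly for dummy men; the remaining entries of dummy participants' lists can be filled arbitrarily, subject only to the matching convention that $w'_m$ is the top choice of $m$ and $m'_w$ is the top choice of $w$, so that $(w,m'_w)$ and $(w'_m,m)$ are ``reserved'' blocking-pair threats.

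The core of the argument is the two directions of the claimed equivalence. For the forward direction, given a marriage $\marriage$ stable w.r.t.\ $\prefs{W},\prefs{M}$, I would build $\marriage^*$ on the big population by keeping all pairs of $\marriage$, marrying every original $w\in W\setminus\married{W}$ to her private dummy $m'_w$ and every original $m\in M\setminus\married{M}$ to his private dummy $w'_m$, and then matching the leftover dummy participants among themselves arbitrarily (the counts balance since $|W'|=|M'|=n$). I then need to check $\marriage^*$ has no blocking pair: a pair of two original participants is blocked only if it was blocked in $\marriage$ (using that in $\marriage^*$ a married original participant has the same spouse, and an unmarried-in-$\marriage$ original participant is now married to her private dummy, whom she ranks \emph{below} all her originally-acceptable men and hence cannot prefer over a genuine option) — here one uses crucially that in the extended list the originally-unacceptable men sit below the private dummy; a pair involving a dummy is not blocking because the only original participant a dummy $m'_w$ would ever prefer over any partner is $w$ herself, and $w$ in $\marriage^*$ is married either to $m'_w$ or to an originally-acceptable man she (weakly) prefers, so $w$ does not reciprocate. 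For the converse, suppose $\marriage^*$ is stable on the big market and $\marriage$ is the induced submarriage on $W,M$ (i.e.\ the pairs of $\marriage^*$ with both coordinates original). First argue that in $\marriage^*$ every original participant is married either to an originally-acceptable partner or to her own private dummy: if $w$ were married to some originally-\emph{un}acceptable $m$ (or to a dummy $\neq m'_w$), then since $(w,m'_w)$ both rank each other strictly higher — $w$ prefers $m'_w$ to that spouse, and $m'_w$ has $w$ at the top — $(w,m'_w)$ blocks $\marriage^*$, contradiction; the same for men. Given that, $\marriage$ is a genuine arbitrary-preference marriage on $W,M$ whose unmarried participants are exactly those matched to their private dummies, and any blocking pair of $\marriage$ w.r.t.\ $\prefs{W},\prefs{M}$ would be a blocking pair of $\marriage^*$ (each such participant prefers the new partner to their $\marriage^*$-spouse, whether that spouse is an originally-acceptable person or the private dummy), again a contradiction; one also checks the ``no participant married to someone not on their list'' condition of arbitrary-preference stability holds for $\marriage$ by construction.

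The step I expect to be the main obstacle is getting the bookkeeping of the converse exactly right: one must ensure that the rigidity gadget (private dummies at the top) genuinely forces every original participant in \emph{every} stable $\marriage^*$ into the ``acceptable-or-own-dummy'' regime, and then that the resulting $\marriage$ is well-defined as a matching (no original man is left matched to a dummy other than his own while his private dummy is taken by someone else — this is exactly what the top-ranking reservation prevents). The blocking-pair case analysis is routine once the list orderings are pinned down, but it has several cases (original–original, original–dummy, dummy–dummy, and within each the sub-cases of married vs.\ single-in-$\marriage$), so I would organize it as: (i) dummies never initiate a block except toward their own reserved partner, (ii) that reserved partner never reciprocates in a stable $\marriage^*$, (iii) hence blocks reduce to the original market. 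Finally, note this lemma combined with Lemma~\ref{disj-in-findstabbl} immediately yields Corollary~\ref{cor:sm}, since a protocol finding a full-preference stable marriage on $2n+2n$ participants yields one for the arbitrary-preference market on $2n$ participants (Alice and Bob can each compute the extended preferences locally), and the latter reveals $\DISJ(\bfx,\bfy)$.
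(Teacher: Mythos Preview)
Your construction is identical to the paper's: each original $w_i$ lists her acceptable men (in order), then her private dummy $m'_i$, then all remaining men; each dummy $m'_i$ puts $w_i$ first; and symmetrically on the men's side. The paper in fact gives only this construction and then declares the verification ``straightforward\ldots\ left to the reader'', so your sketch already supplies more detail than the paper's own proof.

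One small slip to fix in the forward direction: you cannot match the leftover dummies among themselves \emph{arbitrarily}, since with arbitrary tail orderings two leftover dummies could form a blocking pair --- this is exactly the case that falsifies your claim~(i) as literally stated. The repair is immediate: match the leftover dummies stably among themselves (e.g.\ run Gale--Shapley restricted to those sets). With that adjustment, your case analysis goes through: every original participant in $\marriage^*$ is matched at or above her private dummy, so no original participant reciprocates a dummy other than her own, and original--original blocks in $\marriage^*$ reduce to blocks of $\marriage$ in the arbitrary-preference sense.
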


\begin{proof}[Proof.\footnotemark]\footnotetext{Our construction in this proof 
is essentially a one-to-one version of the many-to-many construction given in 
Corollary~31 of \cite{GF13}.}
Denote $W\!=\!\{w_1,\ldots,w_n\}$, $M\!=\!\{m_1,\ldots,m_n\}$, $W'\!=\!\{w'_1,\ldots,w'_n\}$, 
and $M'\!=\!\{m'_1,\ldots,m'_n\}$.

To define $\prefs{W\cup W'}(\prefs{W})$, for every~$i$ we define the 
preference list of $w_i$ to consist of her preference list in $\prefs{W}$ (in 
the same order), followed by $m'_i$, followed by all other men in arbitrary 
order; we define the preference list of $w'_i$ to consist of $m_i$, followed 
by all other men in arbitrary order. Similarly, to define 
$\prefs{M\cup M'}(\prefs{M})$, for every $j$ we define the preference list of 
$m_j$ to consist of his preference list in $\prefs{M}$ (in the same order), 
followed by $w'_j$, followed by all other women in arbitrary order; we define 
the preference list of $m'_j$ to consist of $w_j$, followed by all other women 
in arbitrary order.

It is straightforward to verify that the lemma 
holds with respect to these definitions of $\prefs{W\cup W'}$ and 
$\prefs{M\cup M'}$; the details are left to the reader.
\end{proof}

\begin{Remark}
It is straightforward to embed the problem of finding a stable marriage \wrt\ full preference lists in that of finding a stable marriage \wrt\ arbitrary preference lists, as the former is a special case of the latter.
\end{Remark}

%%%%%%%%%%%%%%%%%%%%%%%%%%%%%%%%%%%
% Determining the Marital Status of a Given Couple or Participant
%%%%%%%%%%%%%%%%%%%%%%%%%%%%%%%%%%%

\section{Determining the Marital Status of\texorpdfstring{\\}{ }a Given Couple or Participant}\label{isaremarried}

In this appendix, we give an alternate proof of Part~(\ref{thm:related-ms}) of Theorem~\ref{thm:related} that uses the construction of Section~\ref{sec:smlb}.
We prove Theorem~\ref{thm:related}(\ref{thm:related-ms}) once again using Theorem \ref{thm:disj}, by embedding disjointness (without assuming unique intersection) in both problems.
We embed disjointness via an intermediate problem of determining whether a given participant is single (i.e.\ not married to anyone) in some stable marriage, given profiles of arbitrary (i.e.\ not necessarily full) preference lists.\footnote{By Theorem \ref{thm:rural-hospitals} (in conjunction with \ref{thm:gs62}), this is equivalent to whether this participant is single in \emph{every} stable marriage.} We therefore obtain the same lower bounds for this natural problem as well.

\begin{lemma}[Disjointness $\hookrightarrow$ Is Participant Single?]
\label{disj-in-issingle}
Let $n \in \mathbb{N}$, let $W$ and $M$ be disjoint sets \sut\ $|W|=|M|=2n$, and let $p\in W \cup M$. There exist functions
$\prefs{W}:\{0,1\}^{\distinctpairs{[n]}}\rightarrow \allprefs(W,M)$ and $\prefs{M}:\{0,1\}^{\distinctpairs{[n]}}\rightarrow \allprefs(M,W)$ \sut\ for every $\bfx=(x^i_j)_{(i,j)\in\distinctpairs{[n]}}\in\{0,1\}^{\distinctpairs{[n]}}$ and $\bfy=(y^i_j)_{(i,j)\in\distinctpairs{[n]}}\in\{0,1\}^{\distinctpairs{[n]}}$, the following are equivalent.
\begin{itemize}
\item
$p$ is single in some stable marriage \wrt\ $\prefs{W}(\bfx)$ and $\prefs{M}(\bfy)$.
\item
$\DISJ(\bfx, \bfy) \neq 0$.
\end{itemize}
\end{lemma}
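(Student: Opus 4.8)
The plan is to adapt the blacklist construction of Lemma~\ref{disj-in-findstabbl}, augmenting it with a ``funnel'' gadget that routes any core participant freed by an intersection to the designated participant $p$, so that $p$ ends up \emph{matched} precisely when $\bfx,\bfy$ intersect --- equivalently, \emph{single} precisely when they are disjoint, i.e.\ exactly when $\DISJ(\bfx,\bfy)=1$, which is the asserted condition $\DISJ(\bfx,\bfy)\neq 0$. By the men/women symmetry of the construction (and relabelling), I may assume $p=m_{2n}$ is an ``extra'' man. I designate $w_1,\dots,w_n$ and $m_1,\dots,m_n$ as \emph{core} participants carrying the disjointness instance over $\distinctpairs{[n]}$ exactly as in Lemma~\ref{disj-in-findstabbl}: the list of $w_i$ is $\set{m_j \st x^i_j=1}$ (sorted by $j$) followed by her fallback $m_i$, and the list of $m_j$ is $\set{w_i\st y^i_j=1}$ followed by $w_j$. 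I then make $p=m_{2n}$ acceptable to every core woman, and I create the funnel by appending $p$ to the \emph{end} of every core woman's list (strictly below her fallback $m_i$). The remaining extras are made inert: the pairs $w_{n+k},m_{n+k}$ for $k=1,\dots,n-1$ are rigid couples, each acceptable only to its partner, and the extra woman $w_{2n}$ is given the empty list (so she is always single). Only $p$'s status will depend on the input; the preferences of $p$ and of all the extras are fixed.

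Next I would verify the two directions, using the Rural Hospitals Theorem (Theorem~\ref{thm:rural-hospitals}) to pass freely between ``single in some'' and ``single in every'' stable marriage. For the disjoint case ($\DISJ(\bfx,\bfy)=1$), I claim the marriage that matches each $(w_i,m_i)$ in the core, keeps the inert couples, and leaves both $p$ and $w_{2n}$ single is stable: no core pair $(w_i,m_j)$ blocks, since that would force $x^i_j=y^i_j=1$, contradicting disjointness (exactly as in Lemma~\ref{disj-in-findstabbl}); and no pair $(w_i,p)$ blocks, since each core woman ranks her husband $m_i$ strictly above the last-resort $p$. As this marriage is stable and leaves $p$ single, Theorem~\ref{thm:rural-hospitals} gives that $p$ is single in \emph{every} stable marriage. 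This direction holds for every disjoint input.

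For the intersecting case I would argue, as in Lemma~\ref{lem:int}, for a uniquely-intersecting input with $x^\alpha_\beta=y^\alpha_\beta=1$ (which suffices, since we invoke the unique-disjointness form of Theorem~\ref{thm:disj}). The blocking pair $(w_\alpha,m_\beta)$ forces the couple $(w_\alpha,m_\beta)$ into any stable marriage and thereby leaves the core man $m_\alpha$ and the core woman $w_\beta$ with no acceptable \emph{core} partner still available (by unique intersection, every other man accepting $w_\beta$, and every other woman accepting $m_\alpha$, is unavailable). The point of the funnel is that $w_\beta$ now has exactly one further option: her last-resort $p$, who accepts her. I would exhibit the marriage obtained by adjoining the couple $(w_\beta,p)$ to the uniquely-intersecting core marriage (leaving $m_\alpha$ and $w_{2n}$ single) and check stability: a matched core woman never abandons her husband for $p$, $w_\beta$ cannot improve on $p$ (her preferred core men are either happier with their own spouse or, by unique intersection, do not accept her), and $m_\alpha$ cannot block for the same reason as in Lemma~\ref{lem:int}. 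Hence $p$ is matched in this stable marriage, and so by Theorem~\ref{thm:rural-hospitals} $p$ is matched in every stable marriage and single in none.

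The main obstacle is the interface between the funnel and the core. I must ensure that appending $p$ to every core woman's list introduces \emph{no} spurious blocking pair in the disjoint case (handled by placing $p$ strictly last, below each woman's fallback, so a matched core woman never prefers him), and that in the intersecting case $p$ captures the \emph{unique} freed core woman $w_\beta$ rather than remaining single or luring away a matched core woman. The asymmetry of the gadget --- only core women funnel to $p$, while core men retain their original fallbacks --- is what makes both checks go through, and the count of singles stays balanced automatically (the pair $p,w_{2n}$ in the disjoint case versus $m_\alpha,w_{2n}$ in the intersecting case). As with the analogous results, the equivalence is established on inputs that are disjoint or uniquely intersecting, which is exactly the regime required by the unique-disjointness form of Theorem~\ref{thm:disj}.
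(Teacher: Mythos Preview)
Your funnel construction is sound on the promise of disjoint-or-uniquely-intersecting inputs, and that restricted result would indeed suffice for the downstream $\Omega(n^2)$ bound via the unique-disjointness form of Theorem~\ref{thm:disj}. But the lemma as stated asserts the equivalence for \emph{every} $(\bfx,\bfy)$, with no promise, and your construction does not achieve this. Take $n=3$ with $x^1_2=y^1_2=x^2_1=y^2_1=1$ and all other bits~$0$: the core matching $\{(w_1,m_2),(w_2,m_1),(w_3,m_3)\}$ together with the rigid extras is stable and leaves $p$ single (each core woman is with someone she ranks above $p$), yet $\DISJ(\bfx,\bfy)=0$. With multiple intersections the core can absorb all of its own women into a stable configuration, so no one is freed to funnel to $p$; your argument genuinely needs the unique-intersection hypothesis and thus proves only a weakening of the stated lemma.

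The paper's construction is different and handles arbitrary inputs with a shorter argument. Taking (w.l.o.g.) $p=w\in W$, it gives each core woman $w_i$ a dedicated fallback man $m'_i$ whose two-element list is $w_i$ followed by $w$; the core men $m_j$ are given \emph{no} fallback (list $\{w_i:y^i_j=1\}$ only); and $w$'s list is all of the $m'_j$. The key step is: if $w$ is single in any stable $\marriage$, then each $m'_j$ must be with his top choice $w_j$ (else $(w,m'_j)$ blocks), which forces $\marriage$ to coincide with the canonical $\marriageid'=\{(w_i,m'_i)\}_i$; but $\marriageid'$ is stable iff $\DISJ(\bfx,\bfy)=1$. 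So the equivalence holds input-by-input, and one never has to exhibit or verify a stable matching in the intersecting case. If you want to keep your funnel idea, you would need an analogous ``rigidity'' argument showing that $p$ single forces the core into $\marriageid$, but with your current lists (core men retain fallback $w_j$) that is false, as the counterexample shows.
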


\begin{proof}
Assume \wwlog\ that $p\in W$ and denote $w=p$. Denote $W=\{w_1,\ldots,w_n,\allowbreak w,w'_2,w'_3,\ldots,w'_n\}$ and $M=\{m_1,\ldots,m_n,m'_1,\ldots,m'_n\}$.

To define $\prefs{W}(\bfx)$, for every $i$
we define the preference list of~$w_i$ to consist of all $m_j$ \sut~$x^i_j=1$, in arbitrary order (say, sorted by $j$), followed by $m'_i$ (with all other men absent). We define the preference list of $w$ to consist of all $m'_j$, in arbitrary order (say, sorted by $j$), with all other men absent. We define the preference list of every $w'_i$ to be empty (these women can be ignored, and are defined purely for aesthetic reasons --- so that $W$ and $M$ be of equal cardinality).
To define~$\prefs{M}(\bfy)$, for every $j$
we define the preference list of $m_j$ to consist of all $w_i$ \sut\ $y^i_j=1$, in arbitrary order (say, sorted by~$i$), with all other women absent. For every $j$
we define the preference list of $m'_j$ to consist of~$w_j$, followed by $w$ (with all other women absent).

Let $\marriageid'$ be the marriage in which $w_i$ is married to $m'_i$ for every $i$,
and in which all other participants are single. We first show that $\DISJ(\bfx, \bfy) \neq 0$ iff $\marriageid'$ is stable, and then show that $\marriageid'$ is stable iff $w=p$ is single in some stable marriage; we commence with the former.

We begin by noting that every participant that is married in $\marriageid'$ is married to someone on their preference list; therefore, $\marriageid'$ is stable iff no pair would rather deviate. Obviously, no $w'_i$ would rather deviate with anyone. Furthermore, while $w$ would rather deviate with any $m'_j$, these are all married to their top choices, and so none of them would deviate with~$w$. Since for every $i$,
the preference list of $w_i$ consists of $m'_i$ and of a subset of $\{m_j\}_{j\ne i}$, we therefore have that $\marriageid'$ is unstable iff
there exists $(i,j) \in \distinctpairs{[n]}$ \sut\ both $m_j \succ_{w_i} m'_i$ and $w_i$ is on the preference list of $m_j$.
Similarly to the proof of Lemma \ref{disj-in-findstabbl}, this holds precisely if there exists $(i,j) \in \distinctpairs{[n]}$ \sut\ $x^i_j=1$ and $y^i_j=1$, which holds iff $\DISJ(\bfx, \bfy) = 0$.

We complete the proof by showing that $\marriageid'$ is stable iff $w=p$ is single in some stable marriage. The first direction follows immediately from the fact
that $w$ is single in $\marriageid'$. For the second direction, assume that there exists a stable marriage $\marriage$ in which $w$ is single. By stability of $\marriage$ and since all men on the preference list of $w$ have $w$ on their preference list, all such men are married in $\marriage$ and prefer their spouses over $w$. Therefore, for every~$j$,
we have that $m'_j$ is married to $w_j$ in $\marriage$. By stability of $\marriage$, every $w'_i$ is single in $\marriage$. As $\marriage$ and $\marriageid'$ coincide on all women, we have that $\marriage=\marriageid'$. Therefore, $\marriageid'=\marriage$ is stable and the proof is complete.
\end{proof}

\begin{corollary}[Complexity of Determining the Marital Status of a Given Participant]
\label{issingle-cc-complexity}
The lower bounds of Theorem~\ref{thm:related} and Corollary~\ref{cor:related} apply also to the problem of determining whether a given participant $p\in W\cup M$ is single in some (equivalently, in every) stable marriage,
where $\prefs{W}\in\allprefs(W,M)$ and $\prefs{M}\in\allprefs(M, W)$ are arbitrary (i.e.\ not necessarily full) preference lists.
\end{corollary}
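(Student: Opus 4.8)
The plan is to obtain both assertions of the corollary as direct consequences of Lemma~\ref{disj-in-issingle} and the communication lower bound for disjointness (Theorem~\ref{thm:disj}), following exactly the pattern used to derive Theorems~\ref{thm:asm-verify} and~\ref{thm:ms} in Section~\ref{sec:proofs}. All of the real content is already packaged in Lemma~\ref{disj-in-issingle}: for every fixed participant $p$ it supplies ``party-local'' maps $\prefs{W}:\{0,1\}^{\distinctpairs{[n]}}\to\allprefs(W,M)$ and $\prefs{M}:\{0,1\}^{\distinctpairs{[n]}}\to\allprefs(M,W)$ (with $|W|=|M|=2n$) under which $p$ is single in some stable marriage precisely when $\DISJ(\bfx,\bfy)\neq 0$. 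So the task is just to package this as a reduction.

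For the communication lower bound I would argue as follows. Suppose $\Pi$ is a randomized protocol in which Alice holds $\prefs{W}\in\allprefs(W,M)$, Bob holds $\prefs{M}\in\allprefs(M,W)$, both know a fixed participant $p\in W\cup M$, and $\Pi$ decides whether $p$ is single in some stable marriage using $B$ bits. Given a $\DISJ$ instance $(\bfx,\bfy)$ with $\bfx,\bfy\in\{0,1\}^{\distinctpairs{[n]}}$, Alice forms $\prefs{W}(\bfx)$ locally, Bob forms $\prefs{M}(\bfy)$ locally (no communication, since each map depends on only one party's input), and they run $\Pi$ with the distinguished participant of the lemma; by Lemma~\ref{disj-in-issingle} they can then read off whether $\DISJ(\bfx,\bfy)=0$. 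Since $\bigl|\distinctpairs{[n]}\bigr| = n(n-1) = \Theta(n^2)$ while the market has $2n$ participants per side, Theorem~\ref{thm:disj} forces $B = \Omega(n^2)$. This handles the ``in some'' version; the ``in every'' version is equivalent because, by Roth's Rural Hospitals Theorem (Theorem~\ref{thm:rural-hospitals}, combined with Theorem~\ref{thm:gs62}), the set of married participants is the same in every stable marriage, so $p$ is single in some stable marriage iff $p$ is single in every stable marriage.

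The query lower bound (the analogue of Theorem~\ref{thm:query}(c)) then follows by the standard simulation from the proof of Theorem~\ref{thm:query}: any randomized algorithm making $B$ Boolean queries, each addressed to the women's preferences or (separately) to the men's preferences, is converted into a $B$-bit protocol by having Alice answer and transmit each query to $\prefs{W}$ and Bob each query to $\prefs{M}$; hence $B = \Omega(n^2)$ by the communication bound just established. I do not expect a genuine obstacle here: everything is mechanical once Lemma~\ref{disj-in-issingle} is available, and the only point that needs checking is exactly the one the lemma already guarantees, namely that $\prefs{W}(\bfx)$ and $\prefs{M}(\bfy)$ are computable by Alice and Bob without communication and that the distinguished participant $p$ is fixed in advance (so specifying it costs nothing).
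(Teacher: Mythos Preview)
Your proposal is correct and follows exactly the approach the paper intends: the paper states this corollary without proof immediately after Lemma~\ref{disj-in-issingle}, precisely because it follows from that lemma via the same reduction-to-$\DISJ$ template used in Section~\ref{sec:proofs} (together with the Rural Hospitals Theorem for the ``some $\Leftrightarrow$ every'' equivalence, which the paper already notes in the footnote introducing the intermediate problem). Your write-up correctly tracks the parameters---the $\DISJ$ instance has size $n(n-1)=\Theta(n^2)$ while the market has $2n$ participants per side, so the bound is $\Omega(N^2)$ in the market size $N$---and correctly invokes the simulation from the proof of Theorem~\ref{thm:query} for the query-complexity half.
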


\begin{lemma}[Is Participant Single? $\hookrightarrow$ Is Couple Sometimes/Always Married?]
\label{issingle-in-aremarried}
Let $n \in \mathbb{N}$, and let $W$, $W'$, $M$ and~$M'$ be pairwise-disjoint sets, each of cardinality $n$; let $w\in W$ and $m'\in M'$.
There exist functions $\prefs{W\cup W'}:\allprefs(W,M)\rightarrow\mbox{$\fullprefs(W\cup W',M\cup M')$}$ and $\prefs{M\cup M'}:\allprefs(M,W)\rightarrow\mbox{$\fullprefs(M\cup M',W\cup W')$}$ \sut\ for every
$\prefs{W} \in \allprefs(W,M)$ and $\prefs{M} \in \allprefs(M,W)$, the following are equivalent.
\begin{itemize}
\item
$w$ is single in some marriage between $W$ and $M$ that is stable \wrt\ $\prefs{W}$ and~$\prefs{M}$.
\item
$w$ and $m'$ are married in {\bfseries some} marriage between $W\cup W'$ and $M\cup M'$ that is stable \wrt\ $\prefs{W\cup W'}(\prefs{W})$ and $\prefs{M\cup M'}(\prefs{M})$.
\item
$w$ and $m'$ are married in {\bfseries every} marriage between $W\cup W'$ and $M\cup M'$ that is stable \wrt\ $\prefs{W\cup W'}(\prefs{W})$ and $\prefs{M\cup M'}(\prefs{M})$.
\end{itemize}
\end{lemma}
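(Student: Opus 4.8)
\emph{Proof plan.}
The plan is to re-use, essentially verbatim, the ``dummy partner'' construction from Lemma~\ref{findstabbl-in-findstab}, with the bookkeeping arranged so that the dummy partner assigned to the distinguished woman $w$ is precisely the distinguished man $m'$. Concretely, write $W=\{w_1,\ldots,w_n\}$, $M=\{m_1,\ldots,m_n\}$, $W'=\{w'_1,\ldots,w'_n\}$ and $M'=\{m'_1,\ldots,m'_n\}$, choosing the indices so that $w=w_1$ and $m'=m'_1$. Define $\prefs{W\cup W'}(\prefs{W})$ and $\prefs{M\cup M'}(\prefs{M})$ exactly as in the proof of Lemma~\ref{findstabbl-in-findstab}: the list of $w_i$ is her $\prefs{W}$-list followed by $m'_i$ followed by the remaining men (in a fixed order, say by index); the list of $w'_i$ is $m_i$ followed by the remaining men; and symmetrically for the men, with the list of $m'_j$ being $w_j$ followed by the remaining women.

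First I would record two structural facts about the extended market. Fact~(i): by Lemma~\ref{findstabbl-in-findstab}, and since the restriction of any $\widetilde\marriage$ between $W\cup W'$ and $M\cup M'$ to the pairs lying inside $W\times M$ is a submarriage of $\widetilde\marriage$, for every such $\widetilde\marriage$ that is stable w.r.t.\ $\prefs{W\cup W'}(\prefs{W})$ and $\prefs{M\cup M'}(\prefs{M})$ its restriction $\marriage$ to $W\times M$ is stable w.r.t.\ $\prefs{W}$ and $\prefs{M}$; moreover at least one stable $\widetilde\marriage$ exists by Theorem~\ref{thm:gs62}. Fact~(ii): in any stable $\widetilde\marriage$, each $w_i$ is married either to a man on her $\prefs{W}$-list or to $m'_i$; indeed, otherwise $w_i$ would be married to a man she ranks strictly below $m'_i$, and then $(w_i,m'_i)$ would be a blocking pair, since $m'_i$'s top choice is $w_i$ and $m'_i$ is not married to $w_i$. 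Combining Fact~(ii) with the definition of the restriction $\marriage$ shows that $w_i$ is single in $\marriage$ if and only if $\widetilde\marriage(w_i)=m'_i$; in particular $w=w_1$ is single in $\marriage$ iff $\widetilde\marriage$ marries $w$ to $m'=m'_1$.

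The equivalence of the three bullets then follows by a short quantifier chase using Roth's Rural Hospitals Theorem (Theorem~\ref{thm:rural-hospitals}), which guarantees that $w$ is single in \emph{some} marriage between $W$ and $M$ that is stable w.r.t.\ $\prefs{W}$ and $\prefs{M}$ if and only if $w$ is single in \emph{every} such stable marriage. If $w$ is single in some (equivalently, every) stable marriage between $W$ and $M$, then for every stable $\widetilde\marriage$ its restriction $\marriage$ is stable by Fact~(i), hence has $w$ single, hence $\widetilde\marriage$ marries $w$ to $m'$; thus $w$ and $m'$ are married in every stable $\widetilde\marriage$. Conversely, if $w$ and $m'$ are married in some stable $\widetilde\marriage$, then restricting shows $w$ is single in the stable marriage $\marriage$. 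Since a stable $\widetilde\marriage$ exists, the ``every'' version implies the ``some'' version, which closes the cycle of implications among the three statements.

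Beyond invoking Lemma~\ref{findstabbl-in-findstab} as a black box, the only genuinely new ingredient is Fact~(ii), which is a one-line blocking-pair argument; the other point requiring a little care is the index relabeling that makes $w$'s dummy equal to $m'$, and the observation that the restriction of $\widetilde\marriage$ to $W\times M$ is indeed a submarriage in the sense of the definition preceding Lemma~\ref{findstabbl-in-findstab}. I expect these --- all mild --- to be the main obstacles; the rest is routine.
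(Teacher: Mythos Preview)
Your proposal is correct and follows essentially the same approach as the paper: the same dummy-partner construction from Lemma~\ref{findstabbl-in-findstab} (with indices chosen so that $w=w_1$ and $m'=m'_1$), the same use of the Rural Hospitals Theorem to pass between ``some'' and ``every'', and the same restriction/extension correspondence between stable marriages in the two markets. The paper's proof is terser --- it simply appeals to ``similarly to the proof of Lemma~\ref{findstabbl-in-findstab}'' where you spell out Fact~(ii) explicitly --- but the argument is the same.
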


\begin{proof}
The proof is similar to that of Lemma \ref{findstabbl-in-findstab}.
Denote $W=\{w_1=w,w_2,\ldots,w_n\}$, $M=\{m_1,\ldots,m_n\}$, $W'=\{w'_1,\ldots,w'_n\}$, and $M'=\{m'_1=m',m'_2,\ldots,m'_n\}$.

To define $\prefs{W\cup W'}(\prefs{W})$, for every~$i$
we define the preference list of $w_i$ to consist of her preference list in $\prefs{W}$ (in the same order),
followed by $m'_i$, followed by all other men in arbitrary order; we define the preference list of $w'_i$ to consist of $m_i$, followed by all other men in arbitrary order.
Similarly, to define $\prefs{M\cup M'}(\prefs{M})$, for every $j$
we define the preference list of $m_j$ to consist of his preference list in $\prefs{M}$ (in the same order),
followed by $w'_j$, followed by all other women in arbitrary order; we define the preference list of $m'_j$ to consist of $w_j$, followed by all other women in arbitrary order.

Similarly to the proof of
Lemma \ref{findstabbl-in-findstab}, we have that $w$ is single in some marriage $\marriage$ between $W$ and $M$ that is stable \wrt\ $\prefs{W}$ and $\prefs{M}$
iff $w$ and $m'$ are married in some marriage (a corresponding ``supermarriage'' of $\marriage$) between $W\cup W'$ and $M\cup M'$ that is stable \wrt\ $\prefs{W\cup W'}(\prefs{W})$ and $\prefs{M\cup M'}(\prefs{M})$.
Additionally, by Theorem \ref{thm:rural-hospitals} (in conjunction with Theorem \ref{thm:gs62}), we have: $w$ is single in some marriage between $W$ and $M$ that is stable \wrt\ $\prefs{W}$ and $\prefs{M}$ $\Longleftrightarrow$
$w$ is single in every marriage between $W$ and $M$ that is stable \wrt\ $\prefs{W}$ and $\prefs{M}$
$\Longleftrightarrow$
$w$ and $m'$ are married in every marriage between $W\cup W'$ and $M\cup M'$ that is stable \wrt\ $\prefs{W\cup W'}(\prefs{W})$ and $\prefs{M\cup M'}(\prefs{M})$.
\end{proof}

%%%%%%%%%%%%%%%%%%%%%%%%%%%%%%%%%%%
% Verifying the Output of a Given Stable Marriage Mechanism
%%%%%%%%%%%%%%%%%%%%%%%%%%%%%%%%%%%

\section{Verifying the Output of a Given Stable Marriage Mechanism}\label{mech-verification}

As noted in Section \ref{sec:summary}, while our lower bound for verifying the stability of a given marriage is tight, we do now know whether our lower bound for finding a stable marriage is tight as well. We note that we do not even know a tight lower bound for verifying whether a given marriage is the $M$-optimal stable marriage.

\begin{open}\label{m-optimal-complexity-tight}
What is the worst-case complexity of verifying whether a given marriage is the $M$-optimal stable marriage?
\end{open}

As in the case of Open Problem~\ref{findstab-comparisons}, we do not have any $o(n^2 \log n)$ algorithm for verification of the $M$-optimal stable marriage,
even randomized and even in the strong two-party communication model, 
nor do we have any $\omega(n^2)$ lower bound, even for deterministic algorithms and even in the simple comparison model.

In this section, we the derive a $\Omega(n^2)$ lower bound for verification of the $M$-optimal stable marriage. In fact, we show this lower bound not only for verifying the \mbox{$M$-optimal} stable marriage, but also for verifying the output of any other stable marriage mechanism.

\begin{Definition}[Stable Marriage Mechanism]
Let $n\in\mathbb{N}$, let $W$ and $M$ be disjoint sets \sut\ $|W|=|M|=n$.
A \dft{stable marriage mechanism} is a function $f$ from \mbox{$\fullprefs(W,M)\times\fullprefs(M,W)$} to the set of perfect marriages between $W$ and $M$, \sut\ for every $\prefs{W}\in\fullprefs(W,M)$ and $\prefs{M}\in\fullprefs(M,W)$, the marriage $f(\prefs{W},\prefs{M})$ is stable \wrt\ $\prefs{W}$ and~$\prefs{M}$.
\end{Definition}

\begin{Example}[$M$-Optimal Stable Marriage Mechanism]
The function $\fmopt$, defined %on $\fullprefs(W,M)\times\fullprefs(M,W)$
such that $\fmopt(\prefs{W},\prefs{M})$ is the \mbox{$M$-optimal} stable marriage \wrt\ $\prefs{W}$ and $\prefs{M}$, is a well-defined stable marriage mechanism
by Theorem \ref{thm:gs62}.
\end{Example}

\begin{corollary}[Complexity of Computing the Output of a Given Stable Marriage Mechanism]
By Theorem~\ref{thm:asm}, we have that for every stable marriage mechanism $f$, the worst-case randomized query complexity (as defined in Corollary~\ref{cor:asm}) as well as the worst-case communication complexity of computing $f$ is $\Omega(n^2)$.
\end{corollary}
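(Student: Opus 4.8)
The plan is to observe that the statement is essentially immediate from Corollary~\ref{cor:sm} together with the reduction that proves it; the only point worth spelling out is why handling a \emph{mechanism} (a fixed selection rule) is no easier than the \emph{search problem} of producing some stable marriage. Concretely, suppose $\Pi$ is a randomized protocol (or $A$ a randomized Boolean-query algorithm) that, on input $\prefs{W}$ held by Alice and $\prefs{M}$ held by Bob, outputs $f(\prefs{W},\prefs{M})$ with probability at least $\tfrac{2}{3}$. Since $f(\prefs{W},\prefs{M})$ is by definition a perfect marriage that is stable with respect to $\prefs{W}$ and $\prefs{M}$, the protocol $\Pi$ (resp.\ algorithm $A$) in particular outputs \emph{a} stable marriage with probability at least $\tfrac{2}{3}$. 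Hence any such $\Pi$ is a protocol for the problem whose $\Omega(n^2)$ communication lower bound is asserted by Corollary~\ref{cor:sm}, and any such $A$ can be converted, by the simulation argument in the proof of Theorem~\ref{thm:query} (Alice relays the answers to queries about the women's preferences, Bob relays the answers to queries about the men's preferences), into a protocol of the same cost. The claimed $\Omega(n^2)$ bounds follow.

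For concreteness I would also unwind this through the actual embedding rather than citing Corollary~\ref{cor:sm} as a black box. Fix the profiles obtained by composing the map of Lemma~\ref{disj-in-findstabbl} with the arbitrary-to-full embedding of Lemma~\ref{findstabbl-in-findstab}; as established there, for every input $(\bfx,\bfy)$ to $\DISJ$ the resulting full-preference marriage market has a \emph{unique} stable marriage, and which marriage it is determines $\DISJ(\bfx,\bfy)$. Since $f$ must return that unique stable marriage, Alice (holding the image of $\bfx$) and Bob (holding the image of $\bfy$) can run any protocol computing $f$ and then read off $\DISJ(\bfx,\bfy)$ from the output with no further communication; by Theorem~\ref{thm:disj} this costs $\Omega(n^2)$ bits, and the query version again follows from the simulation. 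The specialization to $f=\fmopt$ is then just the observation, recorded in the preceding example, that $\fmopt$ is a well-defined stable marriage mechanism by Theorem~\ref{thm:gs62}.

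I expect no genuine obstacle here. The one thing to be careful about is that the hard instances used to prove Corollary~\ref{cor:sm} possess a \emph{unique} stable marriage, so that ``find some stable marriage'' and ``compute the particular marriage selected by $f$'' coincide on exactly those instances — which is precisely what Lemmas~\ref{disj-in-findstabbl} and~\ref{findstabbl-in-findstab} (and, alternatively, Lemmas~\ref{lem:disj} and~\ref{lem:int}) guarantee. Everything else is a verbatim repetition of the reasoning already used in Section~\ref{sec:proofs}.
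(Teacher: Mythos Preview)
Your proposal is correct and matches the paper's approach: the paper gives no separate proof, treating the claim as immediate from Corollary~\ref{cor:sm} (the justification is baked into the statement itself), and your first paragraph spells out exactly the one-line reduction the paper has in mind --- any protocol computing $f$ in particular outputs a stable marriage, so Corollary~\ref{cor:sm} applies. Your second and third paragraphs (unwinding via uniqueness on the hard instances) are correct but more than the paper bothers to say; they simply make explicit why the black-box citation is unproblematic.
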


\begin{theorem}[Complexity of Verifying the Output of a Given Stable Marriage Mechanism]
\label{mech-verification-complexity}
Let $n\in\mathbb{N}$, let $W$ and $M$ be disjoint sets \sut\ $|W|=|M|=n$, fix a stable marriage mechanism $f$ and let $\prefs{W}\in\fullprefs(W,M)$ and $\prefs{M}\in\fullprefs(M,W)$.
Let $\marriageid$ be the perfect marriage in which $w_i$ is married to $m_i$ for every $i$.
The worst-case randomized query complexity, as well as the worst-case randomized communication complexity,
of determining whether $f(\prefs{W},\prefs{M})=\marriageid$ is $\Omega(n^2)$.
\end{theorem}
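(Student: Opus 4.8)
The plan is to reduce from the problem of computing the output of a stable marriage mechanism, as captured by Corollary~\ref{cor:sm}, rather than directly from $\DISJ$; the mechanism-verification problem will be shown to be at least as hard as deciding $\DISJ$ on a suitably large instance by a two-step argument that threads through the ``is a participant single'' construction of Appendix~\ref{isaremarried}. First, recall that Lemma~\ref{disj-in-issingle} (together with Corollary~\ref{issingle-cc-complexity} and the embedding of arbitrary preferences into full ones) already gives, for the fixed pair $(w,m')=(w_1,m'_1)$ in the enlarged market $W\cup W'$ over $M\cup M'$, functions $\prefs{W\cup W'}$ and $\prefs{M\cup M'}$ such that $w$ and $m'$ are married in every stable marriage iff $\DISJ(\bfx,\bfy)\neq 0$, and otherwise $w$ and $m'$ are married in no stable marriage. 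The key observation is that the mechanism $f$, being a \emph{function} into the set of stable marriages, must follow the unique stable marriage whenever one exists --- and in the constructions of Lemmas~\ref{disj-in-findstabbl} and~\ref{disj-in-issingle} (after the embedding of Appendix~\ref{sec:arbitrary-complete}) the stable marriage is unique in \emph{both} the disjoint and the uniquely-intersecting case.

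Concretely, I would take the construction of Lemma~\ref{disj-in-issingle}, push it through Lemma~\ref{findstabbl-in-findstab} (or Lemma~\ref{issingle-in-aremarried}) to obtain full preference profiles $\prefs{W},\prefs{M}$ on $2n$ women and $2n$ men, and then relabel the participants so that the perfect marriage $\marriageid$ (matching $w_i$ to $m_i$ for all $i$ in this doubled market) coincides with the unique stable marriage in exactly one of the two cases. Since the constructions of Section~\ref{sec:smlb} produce a \emph{unique} stable marriage in each case, we have $f(\prefs{W},\prefs{M})=\marriage_{\DISJ}$, where $\marriage_1$ (disjoint case) and $\marriage_0$ (intersecting case) are two fixed, distinct perfect marriages known to both Alice and Bob. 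Relabel once and for all so that $\marriage_1=\marriageid$ but $\marriage_0\neq\marriageid$ (this only permutes indices, which both parties can do without communication). Then $f(\prefs{W},\prefs{M})=\marriageid$ if and only if $\DISJ(\bfx,\bfy)=1$. A protocol (or query algorithm) that verifies whether $f(\prefs{W},\prefs{M})=\marriageid$ therefore decides $\DISJ$ on an instance of size $\Theta(n^2)$, so by Theorem~\ref{thm:disj} it requires $\Omega(n^2)$ communication (resp.\ Boolean queries, via the simulation argument in the proof of Theorem~\ref{thm:query}).

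The main obstacle is purely bookkeeping: one must exhibit an explicit relabeling of the $2n+2n$ participants under which the ``disjoint-case'' unique stable marriage becomes precisely the identity marriage $\marriageid$, while verifying that this relabeling is applied consistently to both $\prefs{W}$ and $\prefs{M}$ and does not disturb the equivalences of Lemmas~\ref{disj-in-issingle} and~\ref{findstabbl-in-findstab}. Since those lemmas already output full preference profiles and the identity marriage is just one particular perfect matching, this is a routine permutation of indices; the content is entirely in the uniqueness of the stable marriage in both cases, which is exactly what Lemmas~\ref{disj-in-findstabbl}, \ref{disj-in-issingle} (via \ref{findstabbl-in-findstab}) provide. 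I would finish by noting that the query-complexity half follows verbatim from the reduction of communication lower bounds to query lower bounds already used in the proof of Theorem~\ref{thm:query}, since every Boolean query touches only one side's preferences and can be answered by a one-message exchange.
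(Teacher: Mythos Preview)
Your high-level strategy --- arrange preferences so that the unique stable marriage equals $\marriageid$ exactly when $\DISJ=1$, forcing any mechanism $f$ to output $\marriageid$ in that case and something else otherwise --- is the right one and matches the paper. But there is a real gap, and it is not the bookkeeping you flag.

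The gap is uniqueness after the conversion to full preferences. You invoke Lemma~\ref{findstabbl-in-findstab} to pass from arbitrary to full preference lists and then assert that the resulting full-preference market still has a unique stable marriage in the disjoint case. Lemma~\ref{findstabbl-in-findstab} does not give you this: its construction completes each preference list with an ``arbitrary order'' tail, and once the original participants are all matched (as they must be, since $\marriageid$ is perfect and unique), the auxiliary participants $W'\cup M'$ are left to be matched among themselves according to those arbitrary tails --- which can support many stable matchings. If uniqueness fails, $f$ is free to output some stable marriage other than your relabeled $\marriageid$ even when $\DISJ=1$, and the reduction breaks. The paper handles this explicitly: it states that Lemma~\ref{findstabbl-in-findstab} must be \emph{replaced} by Lemma~\ref{uniquebl-in-unique}, which fixes the tails (sorting $M'$ by index on the lists of $W'$ and vice versa) and then argues that every supermarriage of $\marriageid$ other than $\marriageid'$ is unstable. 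That argument is short, but it is the substantive step, not ``routine permutation of indices''.

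Two smaller points. First, your detour through Lemma~\ref{disj-in-issingle} is unnecessary and actually makes the uniqueness issue worse: the stable marriage there is far from perfect (many participants are single), so after the embedding there are many more auxiliaries to match, and uniqueness is even less automatic. The paper instead uses Lemma~\ref{disj-in-findstabbl} directly (where the disjoint-case stable marriage is already the perfect $\marriageid$) together with Lemma~\ref{uniquebl-in-unique}; alternatively, the Section~\ref{sec:general-proofs} construction gives full preferences from the start with uniqueness in both cases (Lemmas~\ref{lem:disj} and~\ref{lem:int}), avoiding the conversion entirely. Second, your description of $\marriage_0$ as a ``fixed'' marriage ``known to both Alice and Bob'' is not accurate --- it depends on where the intersection lies --- but this does not matter for the argument, since all you need is $\marriage_0\neq\marriageid$, which follows from $\marriageid$ being unstable in the intersecting case.
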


Theorem~\ref{mech-verification-complexity} may be proven either via a direct application of the machinery of Section~\ref{sec:general-proofs}, or using the machinery of Section \ref{sec:smlb}, with Lemma~\ref{findstabbl-in-findstab} replaced by the following lemma.

\begin{lemma}\label{uniquebl-in-unique}
Let $n \in \mathbb{N}$, and let $W=\{w_1,\ldots,w_n\}$, $M=\{m_1,\ldots,m_n\}$, $W'=\{w'_1,\ldots,w'_n\}$ and $M'=\{m'_1,\ldots,m'_n\}$
 be pairwise-disjoint sets, each of cardinality $n$.
Let $\marriageid$ be the perfect marriage between $W$ and $M$ in which $w_i$ is married to $m_i$ for every $i$, and let $\marriageid'$ be the perfect marriage
between $W\cup W'$ and $M\cup M'$ in which for every $i$, both $w_i$ is married to $m_i$ and $w'_i$ is married to $m'_i$.
There exist functions $\prefs{W\cup W'}:\allprefs(W,M)\rightarrow\mbox{$\fullprefs(W\cup W',M\cup M')$}$ and $\prefs{M\cup M'}:\allprefs(M,W)\rightarrow\mbox{$\fullprefs(M\cup M',W\cup W')$}$ \sut\ for every
$\prefs{W} \in \allprefs(W,M)$ and $\prefs{M} \in \allprefs(M,W)$, both of the following hold.
\begin{enumerate}[label={\alph*.}]
\item\label{uniquebl-in-unique-stable}
If $\marriageid$ is the unique stable marriage \wrt\ $\prefs{W}$ and $\prefs{M}$, then $\marriageid'$ is the unique stable marriage \wrt\ $\prefs{W\cup W'}(\prefs{W})$ and $\prefs{M\cup M'}(\prefs{M})$.
\item\label{uniquebl-in-unique-unstable}
If $\marriageid$ is unstable \wrt\ $\prefs{W}$ and $\prefs{M}$, then $\marriageid'$ is unstable \wrt\ $\prefs{W\cup W'}(\prefs{W})$ and $\prefs{M\cup M'}(\prefs{M})$.
\end{enumerate}
\end{lemma}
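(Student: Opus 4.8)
The plan is to reuse the padding construction from the proof of Lemma~\ref{findstabbl-in-findstab}, refined so that the dummy participants $W'\cup M'$ have a unique stable way of being matched among themselves. Concretely, let $\prefs{W\cup W'}(\prefs{W})$ assign to $w_i$ the list consisting of her $\prefs{W}$-list (in the same order), then $m'_i$, then all remaining men in a fixed arbitrary order, and assign to $w'_i$ the list ``$m_i$, then $m'_i$, then all remaining men arbitrarily''; define $\prefs{M\cup M'}(\prefs{M})$ symmetrically, so that $m_j$ gets his $\prefs{M}$-list followed by $w'_j$ followed by the rest, and $m'_j$ gets ``$w_j$, then $w'_j$, then the rest.'' The only change from Lemma~\ref{findstabbl-in-findstab} is inserting $m'_i$ (resp.\ $w'_j$) as the \emph{second} choice of $w'_i$ (resp.\ $m'_j$) rather than burying it in the arbitrary tail; this is precisely what will pin down the dummy side.

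Part~\ref{uniquebl-in-unique-unstable} follows by lifting a witness of instability. If $\marriageid$ is unstable because some participant --- say $w_i$ --- is married in $\marriageid$ to someone absent from her $\prefs{W}$-list, i.e.\ $m_i$ is not on $w_i$'s list, then in the extended instance $w_i$ strictly prefers $m'_i$ over $m_i$, while $m'_i$ ranks $w_i=\marriageid'(w_i)$ above his $\marriageid'$-spouse $w'_i$, so $(w_i,m'_i)$ blocks $\marriageid'$; the men's side is symmetric. Otherwise $m_i$ is on $w_i$'s list and $w_i$ is on $m_i$'s list for every $i$, so a blocking pair $(w_i,m_j)$ of $\marriageid$ (necessarily with $i\neq j$) amounts to $m_j$ preceding $m_i$ on $w_i$'s list and $w_i$ preceding $w_j$ on $m_j$'s list; these relative orders are preserved in the extended lists, since everything inserted or appended sits below the original lists, so $(w_i,m_j)$ blocks $\marriageid'$ as well. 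In either case $\marriageid'$ is unstable.

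Part~\ref{uniquebl-in-unique-stable} is the substantive step. Assume $\marriageid$ is the unique stable marriage for $\prefs{W},\prefs{M}$; in particular $m_i$ is on $w_i$'s list and $w_i$ on $m_i$'s list for all $i$. First, $\marriageid'$ is itself stable: running through the four kinds of candidate blocking pairs, one rules each out using that $w_i$ ranks $m_i$ (on her real list) above $m'_i$, that $m'_i$ ranks $w_i$ above $w'_i$, that $w'_i$ ranks $m_i$ above $m'_i$ while $m_i$ ranks $w_i$ above $w'_i$, and that $\marriageid$ has no blocking pair. For uniqueness, let $\nu$ be any stable marriage on $W\cup W',M\cup M'$ (it is perfect, as the lists are full). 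Since $m'_i$'s top choice is $w_i$, stability of $\nu$ forces that whenever $\nu(m'_i)\neq w_i$, the woman $w_i$ strictly prefers $\nu(w_i)$ over $m'_i$, hence $\nu(w_i)$ is a man on $w_i$'s \emph{real} list; symmetrically with the roles of women and men swapped. Now let $\marriage$ be $\nu$ restricted to the pairs lying in $W\times M$. The preceding observation shows every participant matched by $\marriage$ is matched within their original preference list, and a would-be blocking pair $(w_i,m_j)$ of $\marriage$ --- so $m_j$ is on $w_i$'s list, $w_i$ on $m_j$'s list, and each is $\marriage$-single or strictly prefers the other over their $\marriage$-partner --- is seen, by a short case analysis on whether $w_i$ and $m_j$ are $\marriage$-matched together with the fact that appended/inserted people lie below the real lists, to also block $\nu$, a contradiction. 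Hence $\marriage$ is stable for $\prefs{W},\prefs{M}$, so by uniqueness $\marriage=\marriageid$ and $\nu(w_i)=m_i$ for all $i$. Consequently every real man is taken, so $\nu$ restricts to a bijection $w'_i\mapsto m'_{\tau(i)}$ of $W'$ onto $M'$; if $\tau$ were not the identity, picking $i$ with $\tau(i)\neq i$ yields a blocking pair $(w'_i,m'_i)$ of $\nu$, since $w'_i$ ranks $m'_i$ (her second choice) above $m'_{\tau(i)}=\nu(w'_i)$ and $m'_i$ ranks $w'_i$ (his second choice) above $w'_{\tau^{-1}(i)}=\nu(m'_i)$. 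Therefore $\tau$ is the identity and $\nu=\marriageid'$.

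The routine work is the stability check for $\marriageid'$ and the case analyses relating blocking pairs across the two instances. I expect the crux to be the combination of (i) recognizing that the restriction to $W\times M$ of \emph{any} stable marriage on the enlarged market is again stable for the original preferences --- which is what lets uniqueness be invoked --- and (ii) noticing that the plain construction of Lemma~\ref{findstabbl-in-findstab} leaves the dummy-side matching underdetermined, which is exactly why $m'_i$ and $w'_j$ must be promoted to second-choice position.
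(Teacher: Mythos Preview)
Your proposal is correct and follows essentially the same approach as the paper. The paper's proof likewise modifies the construction of Lemma~\ref{findstabbl-in-findstab}, invokes that lemma to conclude that $\marriageid$ is a submarriage of every stable marriage in the extended market (which is exactly your step of restricting $\nu$ and invoking uniqueness), and then argues that the only stable supermarriage of $\marriageid$ is~$\marriageid'$.

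The one noteworthy difference is in \emph{how} the dummy matching is pinned down. The paper keeps $w'_i$'s list as ``$m_i$, then everyone else'' but stipulates that the block $M'$ appears sorted by index within that tail (and symmetrically for $m'_j$); the resulting uniqueness argument is the standard ``inversion gives a blocking pair'' observation. Your construction instead promotes the single man~$m'_i$ to second place on $w'_i$'s list (and symmetrically), which makes the uniqueness argument even shorter: any non-identity $\tau$ immediately yields the blocking pair $(w'_i,m'_i)$. Both devices work; yours is arguably tidier, while the paper's lets it appeal to Lemma~\ref{findstabbl-in-findstab} as a black box (since only the ``arbitrary order'' tail is refined, that lemma's conclusions carry over verbatim), whereas you must --- and do --- redo those verifications by hand.
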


\begin{proof}
We define $\prefs{W\cup W'}(\prefs{W})$ and $\prefs{M\cup M'}(\prefs{M})$ as in Lemma~\ref{findstabbl-in-findstab}, only with $M'$ appearing sorted by~$j$ (as opposed to in arbitrary order) on the preference lists of $W'$, and with $W'$ appearing sorted by $i$ (as opposed to in arbitrary order) on the preference lists of~$M'$. By Lemma~\ref{findstabbl-in-findstab}, we have both that \ref{uniquebl-in-unique-unstable} holds, and that if $\marriageid$ is the unique stable marriage \wrt\ $\prefs{W}$ and $\prefs{M}$, then it is a submarriage of every marriage that is stable \wrt\ $\prefs{W\cup W'}(\prefs{W})$ and $\prefs{M\cup M'}(\prefs{M})$; it is straightforward to show that every ``supermarriage'' of $\marriageid$, apart from $\marriageid'$, is unstable, thus proving \ref{uniquebl-in-unique-stable} as well.
\end{proof}

\begin{open}
Is there a stable marriage mechanism whose worst-case output verification complexity is $\Theta(n^2)$? Which stable marriage mechanisms have the lowest asymptotic worst-case output verification complexity?
\end{open}

%%%%%%%%%%%%%%%%%%%%%%%%%%%%%%%%%%%
% Nondeterminism
%%%%%%%%%%%%%%%%%%%%%%%%%%%%%%%%%%%

\section{Nondeterminism}\label{nondeterministic}

All the lower bounds in this paper are based upon reductions to the well-studied communication complexity of the disjointness function. Since the disjointness function also has $\Theta(n)$ \dft{nondeterministic} communication complexity \cite{KN97}, it follows that all our lower bounds
apply not only to randomized communication complexity, but also to nondeterministic communication complexity. 
For nondeterministic communication complexity, the $\Omega(n^2)$ lower bound for finding a stable marriage is in fact tight (and so still is
the $\Omega(n^2)$ bound for verification of stability).  

For the decision problem of verifying the stability of a given marriage, the \dft{co-}nondeterministic communication complexity 
may be easily seen to be $\Theta(\log n)$.
In contrast, we note that the proof of Theorem~\ref{thm:related}(\ref{thm:related-ms}) may be easily adapted to show a $\Omega(n^2)$ lower
bound also for the co-nondeterministic communication complexities
of determining the marital status of a given couple. 

\begin{theorem}[Nondeterministic Communication Complexity of Determining the Marital Status of a Given Couple]
\label{isaremarried-nd-cc}
In the notation of Theorem~\ref{thm:related}(\ref{thm:related-ms}), both the nondeterministic and co-nondeter\-ministic communication complexities of determining whether $w$ and $m$ are married in some/every stable marriage are $\Omega(n^2)$.
\end{theorem}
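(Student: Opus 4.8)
The plan is to revisit the proof of Theorem~\ref{thm:ms} and observe that the reduction already yields both nondeterministic bounds, once we use the \emph{unique disjointness} version of Theorem~\ref{thm:disj}. Recall from that proof that, setting $\delta = 1$ and choosing preferences as in Section~\ref{sec:pref-description}, the fixed pair $(w_n, m_n)$ is married in some (equivalently, every) stable marriage if and only if $\DISJ(\bfx, \bfy) = 0$. In other words, ``$(w_n, m_n)$ is married in some stable marriage'' embeds the \emph{negation} of disjointness, i.e.\ the intersection predicate. The key point is that the reduction is a rectangular reduction: Alice maps her disjointness input $\bfx$ to $\prefs{W}(\bfx)$ and Bob maps $\bfy$ to $\prefs{M}(\bfy)$ independently, so any (co-)nondeterministic protocol for the marital-status problem pulls back to a (co-)nondeterministic protocol of the same cost for the corresponding variant of $\DISJ$ on inputs of size $(\delta n)^2/4 = \Theta(n^2)$.

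First I would record the standard fact (see \cite{KN97}) that $\DISJ$ on $N$-bit inputs has nondeterministic communication complexity $\Theta(N)$ and co-nondeterministic communication complexity $\Theta(N)$; equivalently, the intersection predicate $\neg\DISJ$ has nondeterministic complexity $\Theta(N)$ (certify a common index) and co-nondeterministic complexity $\Theta(N)$ (both of these follow from the fact that both $\DISJ$ and its complement contain large fooling sets, or from the $\Omega(N)$ bound on the nondeterministic complexity of the complement of equality/disjointness). Then, because the embedding of Section~\ref{sec:pref-description} is valid even under the promise that $\bfx$ and $\bfy$ are disjoint or uniquely intersecting --- which is exactly the regime in which Lemma~\ref{lem:int} applies --- the same $\Omega(n^2)$ lower bounds survive under that promise, matching Theorem~\ref{thm:disj}.

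Concretely, the argument is: given a nondeterministic protocol of cost $B$ deciding ``$(w,m)$ is married in some stable marriage'', Alice and Bob on a $\DISJ$ instance $(\bfx,\bfy)$ of size $\Theta(n^2)$ run it on the preferences $\prefs{W}(\bfx), \prefs{M}(\bfy)$; by Lemmas~\ref{lem:disj} and~\ref{lem:int} the protocol accepts iff $\DISJ(\bfx,\bfy)=0$, so this is a nondeterministic protocol for $\neg\DISJ$ and a co-nondeterministic protocol for $\DISJ$, giving $B = \Omega(n^2)$ in both the nondeterministic and co-nondeterministic settings; the ``every stable marriage'' version is identical since on these instances the stable marriage is unique. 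I expect the only real subtlety --- and hence the main obstacle, though a mild one --- to be making sure the promise version of the problem is stated so that the uniquely-intersecting case of Theorem~\ref{thm:disj} is what gets invoked, so that the lower bound is not weakened; but since Lemma~\ref{lem:int} is proved precisely under the uniquely-intersecting hypothesis, this goes through with no extra work, and everything else is the routine observation that rectangular reductions preserve (co-)nondeterministic cost.
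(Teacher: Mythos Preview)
Your argument contains a genuine gap: the claim that $\neg\DISJ$ has nondeterministic complexity $\Theta(N)$ is false. Certifying a common index gives an $O(\log N)$ nondeterministic protocol for $\neg\DISJ$ (equivalently, the co-nondeterministic complexity of $\DISJ$ is $O(\log N)$); only the \emph{other} direction, $N^1(\DISJ)=\Omega(N)$, is hard. Consequently, the Section~\ref{sec:embed} reduction you invoke --- in which $(w_n,m_n)$ is married iff $\DISJ(\bfx,\bfy)=0$ --- yields only the \emph{co}-nondeterministic lower bound for ``married in some/every stable marriage'' (a co-nondeterministic protocol for ``married'' becomes a nondeterministic protocol for $\DISJ$, which is hard). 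It yields \emph{no} nondeterministic lower bound, since a nondeterministic protocol for ``married'' becomes a nondeterministic protocol for $\neg\DISJ$, which is easy.

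To get the nondeterministic bound you need a reduction with the \emph{opposite} polarity, i.e.\ one in which the fixed pair is married iff $\DISJ=1$. The paper obtains this via the construction of Appendix~\ref{isaremarried}: Lemma~\ref{disj-in-issingle} gives ``$p$ is single iff $\DISJ=1$'', and Lemma~\ref{issingle-in-aremarried} converts this to ``$(w,m')$ is married in some/every stable marriage iff $\DISJ=1$'', from which the nondeterministic lower bound follows. For the co-nondeterministic side, the paper introduces Lemma~\ref{issingle-in-nissingle}, which embeds ``is $w$ single?'' into the negation ``is $m'$ married in every stable marriage?'', thereby flipping the polarity once more. Your Section~\ref{sec:embed} argument would also suffice for the co-nondeterministic half, but on its own it cannot give both halves; a single rectangular reduction to $\DISJ$ can never yield both nondeterministic and co-nondeterministic lower bounds for a decision problem, precisely because $\DISJ$ is easy in one of the two nondeterministic modes.
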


For completeness, we show this lower bound also for the nondeterministic and co-nondeter\-ministic communication complexities of the intermediate problem of determining whether a given participant is single, which we presented in Appendix~\ref{isaremarried}. (This proof also yields Theorem~\ref{isaremarried-nd-cc} using the tools of that appendix and of Section~\ref{sec:smlb}.) These lower bounds follow from the results of Appendix~\ref{isaremarried} in conjunction with the following lemma.

\begin{lemma}[Is Participant Single? $\hookrightarrow\lnot$ Is Participant Single?]
\label{issingle-in-nissingle}
Let $n \in \mathbb{N}$, let $W$ and~$M$ be sets \sut\ $|W|=|M|=n$, and let $w'$ and $m'$ \sut\ $W$, $M$, $\{w'\}$ and $\{m'\}$ are pairwise disjoint; let $w \in W$.
There exist functions $\prefs{W\cup \{w'\}}:\allprefs(W,M)\rightarrow\mbox{$\allprefs(W\cup\{w'\},M\cup\{m'\})$}$ and $\prefs{M\cup\{m'\}}:\allprefs(M,W)\rightarrow\mbox{$\allprefs(M\cup\{m'\},W\cup\{w'\})$}$ \sut\ for every
$\prefs{W} \in \allprefs(W,M)$ and $\prefs{M} \in \allprefs(M,W)$, the following are equivalent.
\begin{itemize}
\item
$w$ is single in some marriage between $W$ and $M$ that is stable \wrt\ $\prefs{W}$ and $\prefs{M}$.
\item
$m'$ is married in every marriage between $W\cup\{w'\}$ and $M\cup\{m'\}$ that is stable \wrt\ $\prefs{W\cup\{w'\}}(\prefs{W})$ and $\prefs{M\cup\{m'\}}(\prefs{M})$.
\end{itemize}
\end{lemma}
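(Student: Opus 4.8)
The plan is to construct the two supplementary participants $w'$ and $m'$ so that $w'$ and $m'$ become married to each other in some stable marriage precisely when $w$ would otherwise want to remain single, and to do this in a way symmetric enough that the quantifiers flip. First I would define $\prefs{W\cup\{w'\}}(\prefs{W})$ by giving $w$ her original preference list from $\prefs{W}$, followed by $m'$, followed by all remaining men in arbitrary order; every other woman $w_i \in W$ keeps her list from $\prefs{W}$ with $m'$ appended last (or, since we are in the arbitrary-preferences model, $m'$ can simply be omitted from their lists — this is cleaner); and $w'$ gets the singleton list consisting of just the men currently unmatched-to $w$ in spirit — actually the simplest choice is for $w'$ to list only $m'$ (everyone else absent). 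Dually, $m'$ lists $w$ first and then $w'$, while every $m_j \in M$ keeps its list from $\prefs{M}$ (with $w'$ absent). The idea: $m'$ ``prefers'' to steal $w$ away if he can, but $w$ only accepts $m'$ if she is otherwise single; if $w$ is matched within $W$–$M$ in every stable marriage, then $m'$ is left with $w'$, and $w'$ accepts (she has no one else).

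The key steps, in order, would be: (1) Show the forward direction — if $w$ is single in some stable marriage $\marriage$ of the original market, then the marriage $\marriage' = \marriage \cup \{(w, m')\} \cup (\text{nothing else})$ is \emph{not} what we want; rather we want to show $m'$ is married in \emph{every} stable marriage of the extended market. To see this, suppose for contradiction some stable $\marriage'$ leaves $m'$ single. Then $w'$ is also single (her only acceptable partner is $m'$), and $(w', m')$ would be a blocking pair unless $m'$ is matched — but $m'$'s only other acceptable partner is $w$; if $m'$ were matched to someone that would contradict $m'$ single. So we must argue that $w$ being single in some — hence, by Roth's Rural Hospitals Theorem (Theorem~\ref{thm:rural-hospitals}) together with Theorem~\ref{thm:gs62}, in every — stable marriage of the original market forces, in the extended market, that $w$ is matched within $W$–$M$ (because her extended list just prepends her original list), leaving $m'$ to pair with $w'$. (2) Show the reverse direction — if $w$ is \emph{not} single in any stable marriage of the original market, i.e.\ $w$ is matched in every stable $\marriage$, exhibit a stable marriage $\marriage'$ of the extended market in which $m'$ is single: take $\marriage' = \marriage \cup \{(w', \text{?})\}$ — but $w'$ only accepts $m'$, so $w'$ must be single too; then check $(w', m')$ does not block because $m'$ strictly prefers $w$ (his top choice) and $w$ is matched in $\marriage$ to someone she weakly prefers over $m'$... here I need $w$ to prefer her $\marriage$-spouse over $m'$, which holds because $m'$ sits at the bottom of $w$'s extended list, after her whole original list, and her $\marriage$-spouse is on her original list. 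So $(w,m')$ does not block either, and the remaining verification that $\marriage'$ is stable reduces to stability of $\marriage$ in the original market. This gives $m'$ single in some stable marriage, hence (again by Rural Hospitals) in no stable marriage is $m'$ married — wait, that's the wrong direction; I need ``$m'$ married in every stable marriage'' to fail, which it does since $m'$ is single in $\marriage'$. Good.

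The main obstacle I expect is getting the quantifier bookkeeping exactly right: the lemma asserts an equivalence between an ``exists a stable marriage with $w$ single'' statement and a ``every stable marriage has $m'$ married'' statement, and the bridge in both directions is the Rural Hospitals Theorem, which tells us the set of matched participants is invariant across stable marriages. So the real content is (a) the construction genuinely forces the dichotomy ``either $w$ is matched within $W$–$M$ (and then $w'$–$m'$ pair up) or $w$ pairs with $m'$ (and then $w'$ is single but unblocked),'' and (b) that in the latter scenario $w$ being single within the original market is both necessary and sufficient for $w$–$m'$ to be a stable pairing. Once the construction is pinned down, both directions are short case analyses of potential blocking pairs, entirely analogous to the verifications in Lemmas~\ref{disj-in-findstabbl} and~\ref{findstabbl-in-findstab}; I would present the construction precisely and then state that the routine blocking-pair checks, together with Theorems~\ref{thm:gs62} and~\ref{thm:rural-hospitals}, establish the equivalence, leaving the mechanical details to the reader as is done elsewhere in the appendix.
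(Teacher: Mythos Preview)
Your construction has a genuine bug. You give $w'$ the singleton list $\{m'\}$ and give $m'$ the list $w$ followed by $w'$. With these choices, in \emph{any} stable marriage of the extended market $m'$ must be married: if $m'$ were single then $w'$ would also be single (her only acceptable partner is $m'$), and since each of $w'$ and $m'$ has the other on their list, $(w',m')$ would be a blocking pair. So the second bullet of the lemma holds vacuously, regardless of the original preferences, and the equivalence fails whenever $w$ is married in some (hence every) stable marriage of the original market. Notice that your own forward-direction argument never actually uses the hypothesis that $w$ is single in the original market --- that is the red flag.

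The error surfaces explicitly in your step~(2): you try to exhibit a stable marriage $\marriage'$ of the extended market with both $w'$ and $m'$ single, and you claim that ``$(w',m')$ does not block because $m'$ strictly prefers $w$.'' But $m'$ is \emph{single} in $\marriage'$, so the fact that $m'$ prefers $w$ to $w'$ is irrelevant; $m'$ still prefers $w'$ to remaining unmatched, and $w'$ prefers $m'$ to remaining unmatched, so the pair blocks.

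The paper's fix is minimal: give $w'$ the \emph{empty} preference list (she exists only to balance cardinalities) and let $m'$ list \emph{only} $w$. Then $m'$ can be married only to $w$, and ``$m'$ is married'' becomes equivalent to ``$w$ is paired with $m'$.'' One then checks that $\marriage\mapsto\marriage'$ (where $\marriage'=\marriage$ if $w$ is married in $\marriage$, and $\marriage'=\marriage\cup\{(w,m')\}$ otherwise) is a bijection between the stable marriages of the two markets; combined with the Rural Hospitals Theorem to pass from ``some'' to ``every'' in the original market, this yields the stated equivalence. Your overall strategy --- extend the market, invoke Rural Hospitals to flip the quantifier --- is correct; only the preference lists of $w'$ and $m'$ need to change.
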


\begin{proof}
To define $\prefs{W\cup \{w'\}}(\prefs{W})$, we define the preference list of $w$ as her preference list in~$\prefs{W}$ (in the same order), followed by $m'$; we define the preference list of every other woman in $W$ as her preference list in $\prefs{W}$ (in the same order and with $m'$ absent), and define the preference list of $w'$ to be empty (once again, $w'$ can be ignored, and is defined purely for aesthetic reasons --- so that $W\cup\{w'\}$ and $M\cup\{m'\}$ be of equal cardinality).
To define $\prefs{M\cup \{m'\}}(\prefs{M})$, we define the preference list of every man in $M$ as his preference list in $\prefs{M}$ (in the same order and with $w'$ absent); we define the preference list of $m'$ to consist solely of $w$.

Directly from definition of $\prefs{M\cup M'}$ and $\prefs{W\cup W'}$, we have that a natural bijection $\marriage\mapsto\marriage'$ exists between stable marriages \wrt\ $\prefs{W}$ and $\prefs{M}$ and stable marriages \wrt\ $\prefs{W\cup\{w'\}}(\prefs{W})$ and $\prefs{M\cup\{m'\}}(\prefs{M})$; this bijection is given by:
\begin{itemize}
\item
If $w$ is married in $\marriage$, then $\marriage'=\marriage$ (with $m'$ and $w'$ single in $\marriage'$).
\item
If $w$ is single in $\marriage$, then $\marriage'$ is the marriage obtained from $\marriage$ by marrying $w$ to~$m'$ (with $w'$ once again single in $\marriage'$).
\end{itemize}
Once again by Theorem \ref{thm:rural-hospitals} (in conjunction with Theorem \ref{thm:gs62}), and by the existence of this bijection, we have: $w$ is single in some marriage between $W$ and $M$ that is stable \wrt\ $\prefs{W}$ and $\prefs{M}$
$\Longleftrightarrow$ $w$ is single in every marriage between $W$ and $M$ that is stable \wrt\ $\prefs{W}$ and $\prefs{M}$ $\Longleftrightarrow$
$m'$ is married in every marriage between $W\cup\{w'\}$ and $M\cup\{m'\}$ that is stable \wrt\ $\prefs{W\cup\{w'\}}(\prefs{W})$ and $\prefs{M\cup\{m'\}}(\prefs{M})$.
\end{proof}

We note that the nondeterministic lower bound of $\Omega(n^2)$ for determining whether a given couple is married in some stable marriage, as well as the co-nondeterministic lower bound of $\Omega(n^2)$ for determining whether a given couple is married in every stable marriage (and both the nondeterministic and co-nondeterministic lower bounds of $\Omega(n^2)$ for determining whether a given participant is single in some/every stable marriage), is in fact tight. (Recall that we do not know whether any of these problems can be deterministically or even probabilistically solved using $o(n^2\log n)$~communication.)
The questions of a tight co-nondeterministic lower bound for the former problem and a tight nondeterministic lower bound for the latter remain open in all query models. We note that the latter problem may be solved by checking whether the pair in question is married in both the $M$-optimal stable marriage and the $W$-optimal stable marriage; a $\Oh(n^2)$-Boolean-queries algorithm (even a nondeterministic one) for verification of the $M$-optimal stable marriage (see Open Problem~\ref{m-optimal-complexity-tight} in Appendix~\ref{mech-verification}) would therefore also settle the question of the nondeterministic communication complexity of this problem.

%%%%%%%%%%%%%%%%%%%%%%%%%%%%%%%%%%%
% Optimality of Deferred Acceptance with respect to Queries onto Women
%%%%%%%%%%%%%%%%%%%%%%%%%%%%%%%%%%%

\section{Optimality of Deferred Acceptance \wrt\texorpdfstring{\\}{ }Queries onto Women}\label{gale-shapley-opt}

Gale and Shapley's (1962) proof of Theorem \ref{thm:gs62} is constructive, providing an efficient algorithm for finding the $M$-optimal stable marriage. In this algorithm, men are asked queries of the form ``which woman is next on the preference list of man $m$  after woman $w$?'' (or alternatively, ``which woman does man $m$ rank at place $k$?''), while women are asked queries of the form ``whom does woman $w$ prefer most out of the set of men $\tilde{M}$?''; all of these queries require an answer of length $\Oh(\log n)$ bits.

Dubins and Freedman \cite{DF81} presented a variant of Gale and Shapley's algorithm, which runs in the same worst-case time complexity, but performs a significantly more limited class of queries, namely only pairwise-comparison queries, onto women. In Open Problem~\ref{findstab-comparisons} in the Introduction, we raise the question of a tight lower bound for the complexity of finding a stable marriage using only such queries \emph{for both women and men}. In this section, we show that regardless of how complex the queries onto the men may be, no algorithm for finding any stable marriage (and even no algorithm for verifying the stability of a given marriage, when input a stable marriage) that performs only pairwise-comparison queries onto women, may perform any less such queries onto them than Dubins and Freedman's variant of Gale and Shapley's algorithm (given the same preference lists). For the duration of this section, let $n\in\mathbb{N}$, let $W$ and $M$ be disjoint sets \sut\ $|W|=|M|=n$.

\begin{Definition}[Pairwise-Comparison Query]
A \dft{pairwise-comparison query} onto $W$ is a query of whether $m \!\succ_w\! m'$ for some given $w \!\in\! W\!$ and $m,m' \!\in\! M$.
\end{Definition}

\begin{Definition}[Men-Proposing Deferred-Acceptance Algorithm~\cite{DF81}]
The following algorithm is henceforth referred to as the \dft{men-proposing deferred-acceptance algorithm}:
The algorithm is initialized with all women and all men being \dft{provisionally single}, and concludes when no man is provisionally single.
The algorithm is divided into steps, to which we refer as \dft{nights}. On each night, an arbitrary provisionally-single man $m$ is chosen,
and serenades under the window of the woman $w$ ranked highest on his preference list among those who have not (yet) rejected him.
If $w$ is provisionally single,
then $m$ and $w$ are \dft{provisionally married} to each other.
Otherwise, i.e.\ if $w$ is already provisionally married to some man~$m'$, then
if $m \succ_w m'$, then $w$ rejects $m'$, who becomes provisionally single, and $w$ and $m$ are provisionally married to each other; otherwise, $w$~rejects $m$, who remains provisionally single.
The algorithm stops when no provisionally-single men remain, and the couples married by the output marriage are exactly those that are provisionally married when the
algorithm stops.
\end{Definition}

\begin{theorem}[\cite{DF81}]\label{m-optimal-df}
Let $\prefs{W}\in\fullprefs(W,M)$ and $\prefs{M}\in\fullprefs(M,W)$ be profiles of full preference lists for $W$ over $M$ and for $M$ over $W$, respectively.
The men-proposing deferred-acceptance algorithm stops after $\Oh(n^2)$ nights, and yields the $M$-optimal stable marriage.
\end{theorem}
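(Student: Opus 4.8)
The plan is to verify in turn the three assertions implicit in the statement: that the algorithm is well defined and terminates, that it does so within $O(n^2)$ nights, and that its output is the $M$-optimal stable marriage. The engine for the first two parts is a single \emph{monotonicity} observation: once a woman becomes provisionally married she remains provisionally married thereafter, and she only ever switches to men she strictly prefers (the only mechanism for changing partner being a rejection of the current partner in favour of a more-preferred serenader). Hence if some man $m$ had been rejected by all $n$ women, then all $n$ women would be provisionally married, forcing all $n$ men to be provisionally married as well (the provisional marriage being one-to-one), contradicting that $m$ is provisionally single; so a provisionally-single man always has an un-rejecting woman left to serenade, and the algorithm runs until no provisionally-single man remains --- at which point the provisional marriage is perfect. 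For the running-time bound I would classify each night into one of two types: those on which the serenaded woman is provisionally single (she then becomes provisionally married, so the number of provisionally-single women, which never increases, strictly decreases --- at most $n$ such nights), and all other nights, each of which produces exactly one fresh rejection of a man by a woman. Since no woman rejects the same man twice, there are at most $n(n-1)$ rejections, hence at most $n^2$ nights in all.

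Next I would check that the output marriage $\mu$ is stable. Since the preference lists are full, no participant is unacceptable, so it suffices to rule out blocking pairs. Suppose $(w,m)$ blocked $\mu$, i.e.\ $m \succ_w \mu(w)$ and $w \succ_m \mu(m)$. Because a man serenades women in decreasing order of preference and $w$ stands above $\mu(m)$ on $m$'s list, $m$ must have serenaded $w$ on some night; from the night $m$ first serenades $w$ onward, $w$ is provisionally married to a man she weakly prefers over $m$, so by the monotonicity observation her final partner satisfies $\mu(w) \succeq_w m$, and since $\mu(w) \neq m$ in fact $\mu(w) \succ_w m$, contradicting $m \succ_w \mu(w)$. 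Thus $\mu$ has no blocking pair and, being a perfect marriage, is stable.

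The core of the argument --- and the step I expect to be the real obstacle --- is $M$-optimality. Call a woman $w$ \emph{achievable} for a man $m$ if some stable marriage pairs $m$ with $w$. I would prove, by induction over the nights, the claim that \emph{no man is ever rejected by a woman achievable for him}. Assume this for every rejection occurring before night $t$, and suppose that on night $t$ woman $w$ rejects man $m$ in favour of man $m' \neq m$ with $m' \succ_w m$. If $w$ were achievable for $m$, fix a stable marriage $\mu^*$ with $\mu^*(m) = w$ and set $w'' = \mu^*(m')$. Immediately after night $t$ the man $m'$ is provisionally married to $w$, so by then he has been rejected by every woman he strictly prefers over $w$; by the inductive hypothesis none of those women is achievable for $m'$, while $w''$ is achievable for $m'$ via $\mu^*$, hence $w''$ is not among them, i.e.\ $w \succeq_{m'} w''$; and since $m' \neq m = \mu^*(w)$ we have $w \neq w''$, so $w \succ_{m'} w''$. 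Then $w$ prefers $m'$ over her $\mu^*$-partner $m$ and $m'$ prefers $w$ over his $\mu^*$-partner $w''$, so $(w,m')$ blocks $\mu^*$, contradicting its stability; this proves the claim. Finally, in the output $\mu$, every woman whom a man $m$ prefers over $\mu(m)$ rejected him on some night, hence is not achievable for him; so $m$ weakly prefers $\mu(m)$ over his partner in every stable marriage, and since $\mu$ is itself stable (shown above), $\mu(m)$ is precisely the most-preferred achievable woman for $m$. Therefore $\mu$ is $M$-optimal --- in particular it is independent of the order in which provisionally-single men are chosen, by uniqueness of the $M$-optimal stable marriage in Theorem~\ref{thm:gs62} --- and the $O(n^2)$ bound on the number of nights was established above.
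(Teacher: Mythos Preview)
The paper does not actually supply a proof of this theorem: it is stated with the citation \cite{DF81} and used as a black box (the only proof in Appendix~E is that of Theorem~\ref{da-opt}). Your write-up is a correct and complete rendition of the classical argument --- the monotonicity observation for termination and the $O(n^2)$ bound, the blocking-pair contradiction for stability, and the achievability induction for $M$-optimality --- so there is nothing in the paper to compare it against beyond noting that you have reproduced the standard Dubins--Freedman/Gale--Shapley proof.
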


\begin{Remark}
Let $\prefs{W}\in\fullprefs(W,M)$ and let $\prefs{M}\in\fullprefs(M,W)$.
All runs of the men-proposing deferred-acceptance algorithm (given $\prefs{W}$ and $\prefs{M}$) perform the same number of pairwise-comparison queries onto $W$.
\end{Remark}

\begin{theorem}[Optimality of Men-Proposing Deferred-Acceptance Algorithm \wrt\ Pairwise-Comparison Queries onto $W$]
\label{da-opt}
For any profiles~$\prefs{W}\in\fullprefs(W,M)$ and $\prefs{M}\in\fullprefs(M,W)$ of full preference lists for~$W$ over $M$ and for $M$ over~$W$, respectively, every algorithm for finding or verifying a stable marriage (for the latter --- when input any marriage that is stable \wrt\ $\prefs{W}$ and $\prefs{M}$) that only performs pairwise-comparison queries onto $W$ (and arbitrary
queries onto $M$), performs no less queries onto $W$ than the men-proposing deferred-acceptance algorithm, when input $\prefs{W}$
and $\prefs{M}$.
\end{theorem}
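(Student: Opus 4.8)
The plan is to prove the statement woman by woman: for every $w\in W$, any correct algorithm $A$ of the stated kind performs at least as many pairwise-comparison queries onto $w$ as the men-proposing deferred-acceptance algorithm performs on $(\prefs{W},\prefs{M})$, and summing over $w$ then gives the theorem. (I treat $A$ as deterministic, matching the deterministic benchmark.) First I would pin down what deferred acceptance queries onto $W$: its only queries onto women are comparisons ``$m\succ_w m'$?'' made when a provisionally-single man $m$ serenades a woman $w$ already provisionally married to $m'$. A woman never reverts to being provisionally single once serenaded, so the number of such queries onto a fixed $w$ is one less than the number of men who ever serenade $w$; and by the order-independence of the men-proposing algorithm (cf.\ the Remark preceding Theorem~\ref{m-optimal-df}), the men who serenade $w$ are exactly those $m$ who weakly prefer $w$ to their spouse $\mu^*(m)$ in $\mu^*=\fmopt(\prefs{W},\prefs{M})$. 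Hence each man $m$ issues exactly $1+\abs{\{w:w\succ_m\mu^*(m)\}}$ proposals, and the total number of deferred-acceptance comparison queries onto $W$ equals $\abs{\{(w,m):w\succ_m\mu^*(m)\}}$.

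The core is the following claim. Fix $(\prefs{W},\prefs{M})$, run $A$, and let $\mu$ be the marriage $A$ outputs (finding problem) or the stable marriage $A$ is given as input (verification problem); either way $\mu$ is stable w.r.t.\ $(\prefs{W},\prefs{M})$. For $w\in W$ let $T_w=\{m\in M:w\succ_m\mu(m)\}$; note $\mu(w)\notin T_w$ and, since $(w,m)$ would block $\mu$ whenever $m\in T_w$ and $m\succ_w\mu(w)$, stability of $\mu$ forces $\mu(w)\succ_w m$ for every $m\in T_w$. \emph{Claim: $A$ makes at least $\abs{T_w}$ comparison queries onto $w$.} Suppose not. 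Regard $A$'s comparison queries onto $w$ as the edges of a directed graph on $M$, each edge pointing from the preferred man to the less-preferred one; with fewer than $\abs{T_w}$ edges, the set of men reachable from $\mu(w)$ by directed paths has fewer than $\abs{T_w}$ elements besides $\mu(w)$, so some $\tilde m\in T_w$ is not reachable from $\mu(w)$. Then adjoining the relation ``$\tilde m\succ_w\mu(w)$'' to the partial order generated by $A$'s $w$-comparisons creates no cycle, so this augmented partial order has a linear extension; let $\prefs{W}'$ agree with $\prefs{W}$ except that $w$'s list is replaced by this linear extension. Every comparison query of $A$ onto $w$ has the same answer under $\prefs{W}'$, while all queries onto the other women and onto $\prefs{M}$ are unaffected, so the deterministic algorithm $A$ runs identically on the modified input (with the same $\mu$, in the verification case) and returns the same answer. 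But $(w,\tilde m)$ now blocks $\mu$ w.r.t.\ $(\prefs{W}',\prefs{M})$ --- we have $w\succ_{\tilde m}\mu(\tilde m)$ since $\prefs{M}$ is unchanged, and $\tilde m\succ_w\mu(w)$ by the choice of linear extension --- so $\mu$ is unstable there, contradicting the correctness of $A$. This proves the claim.

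Summing the claim over all women, $A$ performs at least $\sum_{w\in W}\abs{T_w}=\abs{\{(w,m):w\succ_m\mu(m)\}}$ comparison queries onto $W$. Since $\mu^*$ is $M$-optimal (Theorem~\ref{thm:gs62}) and $\mu$ is stable, each man $m$ weakly prefers $\mu^*(m)$ to $\mu(m)$, so $\abs{\{w:w\succ_m\mu(m)\}}\ge\abs{\{w:w\succ_m\mu^*(m)\}}$ for every $m$; summing, $A$'s count is at least $\abs{\{(w,m):w\succ_m\mu^*(m)\}}$, which by the first paragraph is precisely the number of comparison queries onto $W$ performed by the men-proposing deferred-acceptance algorithm on $(\prefs{W},\prefs{M})$, as desired.

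I expect the main obstacle to lie in making the reduction airtight rather than in any deep new idea: justifying the order-independence facts about men-proposing deferred acceptance used in the first step, and nailing down that ``$A$'s queries onto $w$ guarantee $\mu(w)\succ_w m$'' is equivalent to ``$m$ is reachable from $\mu(w)$ in the query graph'', together with the observation that the adaptivity of $A$ is harmless because $A$ is deterministic and every answer it receives is preserved by the modification. The useful trick is to reduce to \emph{creating a single blocking pair} rather than to perturbing the whole stable-matching structure, which is what lets the finding and verification cases be handled by one and the same construction.
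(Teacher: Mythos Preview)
Your proof is correct and takes essentially the same approach as the paper: both identify the deferred-acceptance query count with $|\{(w,m): w \succ_m \mu^*(m)\}|$ and then argue, via an adversary that modifies $w$'s preference list to manufacture a blocking pair, that for each such $(w,m)$ the algorithm must have made at least one query onto $w$ in which $m$ came out the less-preferred man. The paper's write-up is slightly more compact---it notes directly that if no query onto $w$ has $m$ as the ``loser'' then $m$ can be placed at the top of $w$'s list without changing any answers, and then counts globally by projecting the set of (woman, winner, loser) query triples onto their (woman, loser) coordinates---whereas your per-woman reachability-graph argument is a somewhat more elaborate route to the same conclusion.
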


\begin{Remark}
An analogous result may similarly be shown to hold \wrt\ profiles of arbitrary preference lists, and finding/verifying a possibly-imperfect stable marriage.
\end{Remark}

\begin{Definition}
Let $\marriage$ be a perfect marriage between $W$ and $M$. By slight abuse of notation, we denote the woman married to a man $m \in M$ in $\marriage$ by~$\marriage(m)$ instead of $\marriage^{-1}(m)$.
\end{Definition}

\begin{proof}[Proof of Theorem~\ref{da-opt}]
Let $A$ be a run of the men-proposing deferred-acceptance algorithm \wrt\ $\prefs{W}$ and $\prefs{M}$, and let $B$ be a given run of an algorithm for finding/verifying a stable marriage \wrt\ $\prefs{W}$ and $\prefs{M}$.
Let $Q \subseteq W\times M^2$ be the set of triples $(w,m,m')$ \sut\ either the query of whether $m \succ_w m'$ was performed onto $W$ during $B$ and answered positively, or the query of whether $m' \succ_w m$ was performed onto $W$ during $B$ and answered negatively. By definition, at least $|Q|$ queries onto $W$ are performed during~$B$.
Let $\marriage$ be the $M$-optimal stable marriage \wrt\ $\prefs{W}$ and~$\prefs{M}$, i.e.\ the marriage output by~$A$.
Let $R=\set{(w,m) \mid \mbox{$w$ rejects $m$ during $A$}} \subseteq W\times M$.
By definition, we note that the number of queries onto $W$ during $A$ equals the number of rejections performed during $A$, and
so, as no woman rejects the same man twice, equals $|R|$. It is therefore enough to show that $|R| \le |Q|$ in order to complete the proof.

Let $\marriage'$ be the output of $B$ if it is a run of an algorithm for finding a stable marriage, or the input to $B$ if it is a run of an algorithm for verifying stability; either way,
$\marriage'$ a stable marriage \wrt\ $\prefs{W}$ and $\prefs{M}$. We claim that $w \succ_m \marriage'(m)$ for every $(w,m) \in R$. Indeed, as $m$ serenades under women's windows during $A$ in descending order of preference, the fact that $w$ rejects $m$ during $A$ implies $w \succ_m \marriage(m)$. By Theorem~\ref{m-optimal-df}, we thus have $w \succ_m \marriage(m) \succeq_m \marriage'(m)$, as claimed. As $B$ guarantees the stability of $\marriage'$, it must therefore ascertain that $\marriage'(w) \succ_w m$ for every~$(w,m) \in R$; therefore, as only pairwise-comparison queries are performed onto $W$ during $B$, there exists $m' \in M$ \sut\ $(w,m',m) \in Q$. We have thus shown that $R$ is contained in the projection of $Q$ over its first and last coordinates, and therefore $|R| \le |Q|$, and the proof is complete.
\end{proof}

\end{document}